\newcommand{\midint}{\mbox{$\int$}}
\newcommand{\eq}[1]{\begin{equation}#1\end{equation}}
\newcommand{\affiliation}[1]{
              \noindent\hspace*{2.5em}\parbox[t]{15cm}{\small\it#1}
              \vspace*{2ex}\\}
\newcommand{\im}{\mathrm{i}}
\newcommand{\f}{\Upphi}
\newcommand{\trans}{M}
\newcommand{\transd}{M^*}
\newcommand{\mink}{M}
\newcommand{\minkd}{M^*}
\newcommand{\charf}{\Uptheta}
\newcommand{\vac}{\Upomega}
\newcommand{\vf}{\varphi}
\newcommand{\bs}[1]{\boldsymbol{#1}}
\newcommand{\sfr}[2]{\mbox{$\frac{#1}{#2}$}}
\newcommand{\ve}{\varepsilon}
\newcommand{\hil}{\mathcal{H}}
\newcommand{\un}{U}
\newcommand{\sch}{\mathcal{S}}
\newcommand{\supp}{\mathop{\mathrm{supp}}}
\newcommand{\smallsum}{\mathop{\mbox{$\sum$}}}
\newcommand{\Fd}[2]{\Upphi^{(#1)}{}_{\hspace{-0.6em}#2\hspace{0.4em}}}
\newcommand{\Mx}[3]{#1^{(#2)}{}_{\hspace{-0.7em}#3\hspace{0.2em}}}
\theoremstyle{plain}\newtheorem{theor}{Theorem}
\theoremstyle{plain}\newtheorem*{theor*}{Theorem}
\theoremstyle{remark}\newtheorem{prop}{Proposition} 
\theoremstyle{plain} 
\theoremstyle{remark}
\theoremstyle{remark}\newtheorem*{cor*}{Corollary}
\theoremstyle{remark}\newtheorem{lem}{Lemma}
\theoremstyle{remark}
\theoremstyle{definition}\newtheorem{defin}{Definition}
\renewcommand{\leq}{\leqslant}
\renewcommand{\geq}{\geqslant}
\renewcommand{\title}[1]{{\Large\bf\flushleft{#1}}\vspace*{3ex}\\}
\renewcommand{\author}[2]{{\noindent\hspace*{2.5em}\large#1}
                     \footnote{Electronic mail: $\mathtt{#2}$}\\}
\renewcommand{\d}{\mathrm{d}}
\renewcommand{\ddot}[1]{\mathbf{\dot{\mathit{#1}}}}
\renewcommand\@biblabel[1]{$^{#1}$}\makeatother
\begin{document}
\title{On scale symmetry in gauge theories}
\author{Szilard Farkas}{farkas@uchicago.edu}

\affiliation{Enrico Fermi Institute and Department of Physics, University of Chicago,\\ Chicago, IL 60637, USA}
\vspace{-0ex}

\begin{abstract}
Buchholz and Fredenhagen proved that particles in the vacuum sector of a scale invariant local quantum field theory do not scatter. More recently, Weinberg argued that conformal primary fields satisfy the wave equation if they have nonvanishing matrix elements between the vacuum and one-particle states. These results do not say anything about actual candidates for scale invariant models, which are nonconfining Yang-Mills theories with no one-particle states in their vacuum sector. The BRST quantization of gauge theories is based on a state space with an indefinite inner product, and the above-mentioned results do not apply to such models. However, we prove that, under some assumptions, the unobservable basic fields of a scale invariant Yang-Mills theory also satisfy the wave equation. In ordinary field theories, particles associated with such a dilation covariant hermitian scalar field do not interact with each other. In the BRST quantization of gauge theories, there is no such triviality result.   
\end{abstract}

\section*{INTRODUCTION}
The more symmetries a quantum field theory has, the more we can say about its properties without explicit constructions. By exploring the consequences of symmetries, we might gain enough insight into the structure of the model so that its actual construction becomes feasible. Or we may discover that the assumed symmetries bring about too much simplification, and only trivial models can respect them. Buchholz and Fredenhagen's no-go theorem\cite{BuchholzFred} about scale invariant field theories belongs to the second category. They showed in a general setting of relativistic local quantum field theory that if the Poincar\'e representation extends to dilations and there are one-particle states in the vacuum sector so that their collisions can be defined, the only scattering matrix invariant under dilations is the identity operator. 

On the other hand, $\mathcal{N}=4$ supersymmetric Yang-Mills (SYM) theory is believed to exist both as a local quantum field theory and as a scattering theory. Fortunately, its scale invariance does not bode ill with this prospect. The theory escapes the implications of the Buchholz and Fredenhagen's no-go theorem by a pathology. Its off-shell scattering amplitudes are plagued by infrared divergences, so the theory does not predict the kind of scattering processes assumed by the no-go theorem. However, we prove a theorem that may belong to the first category, that is, it shows that the BRST quantization might be a promising framework for the construction of $\mathcal{N}=4$ SYM theory. First we give a more detailed account of Buchholz and Fredenhagen's no-go theorem and motivate the assumptions that lead to it.

In the traditional framework of field theory, the basic object is a field $\f$ that transforms covariantly under a positive energy continuous unitary representation $\un$ of the covering group of the Poincar\'e group on a separable Hilbert space $\hil$ on which the field operators are defined:
\[\un(a,L)\,\f_i(x)\,\un(a,L)^*=\smallsum_j S_{ij}(L^{-1})\,\f_j(Lx+a),\]
where $\un(a,L)$ is the operator representing a Lorentz transformation $L$ followed by a translation $a$. Let $\mathcal{O}$ be a bounded open region of the Minkowski spacetime and define $\mathfrak{F}(\mathcal{O})$ as the algebra generated by the spectral projections of all the self-adjoint smeared field operators $\f(f)$, where $f$ is supported within $\mathcal{O}$. Then $\un(L,a)\,\mathfrak{F}(\mathcal{O})\,\un(L,a)^*=\mathfrak{F}(L\mathcal{O}+a)$. Furthermore, it follows from the locality of fields that $[\mathfrak{F}(\mathcal{O}_1),\mathfrak{F}(\mathcal{O}_2)]=0$ if $\mathcal{O}_1$ and $\mathcal{O}_2$ are spacelike separated. A more general setting of field theory starts with a net $\mathcal{O}\mapsto\mathfrak{F}(\mathcal{O})$ of algebras of bounded operators on $\hil$ which has these properties and a $\un$-invariant unit vector $\vac\in\hil$, the vacuum, which is unique up to a phase. The elements of the local algebras $\mathfrak{F}(\mathcal{O})$ are called local operators. Here we considered only bosonic operators, but fermionic operators can be incorporated by straightforward modifications. The subsequent considerations apply to this more general case as well.

Suppose that there are massless one-particle states in the vacuum sector. This means that there is a family of subspaces $\hil_i$ in $\hil_0=\overline{\{\,A\vac\,|\,A\in\mathfrak{F}(\mathcal{O})\,\}}$ on which $\un$ is an irreducible massless representation of some helicity $s_i$. Let $\mathcal{F}$ be the Fock space constructed by the usual procedure based on one-particle states that furnish a Poincar\'e representation equivalent to the restriction of $\un$ to the direct sum of $\hil_i$. The Poincar\'e representation obtained by this construction on $\mathcal{F}$ will be denoted by $\un_0$. The structure introduced in the previous paragraph allows us to construct two continuous linear isometric embeddings $W^{\mathrm{in}}$ and $W^{\mathrm{out}}$ of the Fock space $\mathcal{F}$ into $\hil_0$ which are Poincar\'e invariant: $\un(L,a)\,W^{\mathrm{in/out}}=W^{\mathrm{in/out}}\,\un_0(L,a)$. The one-particle representations can be extended to representations of dilations, which in turn give rise to a unitary representation $\un_0(\lambda)$ of dilations on $\mathcal{F}$. So $\un_0(\lambda)$ is a continuous unitary representation of the multiplicative group of the positive real numbers satisfying $\un_0(\lambda)\,\un_0(a,L)=\un_0(\lambda a,L)\,\un_0(\lambda)$. Now we are in the position to quote the no-go theorem. Suppose that $\hil^\mathrm{ext}\coloneqq W^\mathrm{in}\,\mathcal{F}=W^{\mathrm{out}}\,\mathcal{F}$ and the operators $\un^\mathrm{in/out}(\lambda)$ defined on $\hil^\mathrm{ext}$ by the relation $\un^\mathrm{in/out}(\lambda)\,W^\mathrm{in/out}=W^\mathrm{in/out}\,\un_0(\lambda)$ can be extended to the same unitary operator $\un(\lambda)$ on $\hil_0$ so that $\lambda\mapsto\un(\lambda)$ is a continuous representation of dilations that satisfies the same multiplication rule with $\un(a,L)$ as $\un_0(\lambda)$ with $\un_0(a,L)$. Then $W^\mathrm{in}=W^\mathrm{out}$. In particular, if the theory is asymptotically complete in the vacuum sector, that is, if $\hil^\mathrm{ext}=\hil_0$, then the $S$-matrix is trivial: $S=(W^\mathrm{out})^{-1}W^\mathrm{in}=\mathbbm{1}$.

Since scale invariance precludes collisions in the vacuum sector, the only way to construct dilation covariant field theoretic models for scattering processes is to incorporate scattering states that are not in the vacuum sector. In other words, we need scattering states which are not the elements of the Hilbert space generated by $\mathfrak{F}(\mathcal{O})\vac$. This is precisely how $\mathcal{N}=4$ SYM theory gets around the no-go theorem. The model is not in the scope of validity of the no-go theorem because the asymptotic one-particle states, if they exist, have nontrivial quantum numbers associated with charges that are related to a gauge field strength by a local Gauss law, and therefore they cannot be in the vacuum sector.

In order to avoid complications originating in the lack of gauge invariance of the field strength of a non-Abelian gauge theory, we shall recapitulate the implications of Gauss's law for electromagnetism. The Lagrangian  
\[
\mathscr{L}=\bar{\psi}(\mathrm{i}\gamma^\mu\partial_\mu-m)\psi-\mbox{$\frac{1}{4}$}F^{\mu\nu}F_{\mu\nu}-J^\mu A_\mu,\;\;\;\;\;F_{\mu\nu}=\partial_\mu A_\nu-\partial_\nu A_\mu,\;\;\;\;\;J^\mu=e\bar{\psi}\gamma^\mu\psi,
\]
is invariant under the gauge transformation $A_\mu\to A_\mu+\partial_\mu\alpha$, $\psi\to e^{-\mathrm{i}e\alpha}\psi$, where $\alpha$ is a smooth function on the spacetime. The corresponding Noether charge is 
\[
Q_\alpha=\midint\d^3\!x\left[\mbox{$\frac{\partial\mathscr{L}}{\partial\partial_0 A_\mu}$}\partial_\mu\alpha+\mbox{$\frac{\partial\mathscr{L}}{\partial\partial_0\psi}$}(-\mathrm{i}e\alpha\psi)\right]=\midint\d^3\!x\left[\bs{E\cdot\nabla}\alpha+J^0\alpha\right],\;\;\;E^i=F^{0i},
\]
which is defined only if $\alpha$ has appropriate asymptotic behavior. If it is compactly supported, then integration by parts shows that $Q_\alpha=0$ if the equation of motion is satisfied, in accordance with the general theorem about the vanishing of the Noether charge associated with a local symmetry.\cite{Lee} $Q_\alpha$ are the constraints that generate the local symmetries in the canonical formalism. In quantum theory, their vanishing on the physical Hilbert space corresponds to the gauge invariance of observables. Since $Q_\alpha=0$ implies Gauss's law, the electric field in quantum electrodynamics is expected to satisfy the equation $\bs{\nabla}\bs{\cdot}\bs{E}=J^0$, where $J^0$ is a field in terms of which the electric charge is defined as follows. Classically, the electric charge is the Noether charge of symmetry transformations parametrized by a constant $\alpha$, which is $Q_1=\smallint\d^3\!x\,J^0$. In quantum theory, $J^0$ is a field that should be smeared with compactly supported functions on the spacetime, so the operator corresponding to the classical expression of $Q_1$ is defined by a limit        
\begin{align*}
[\mathrm{Q}_1,A]=\lim_{r\to\infty}[\bs{\nabla}\!\bs{\cdot}\!\bs{E}(\varphi_r),A]=\lim_{r\to\infty}[-E^i(\partial_i\varphi_r),A],\;\;\;&\varphi_r(t,\mathbf{x})=\tau(t)f\left(\mbox{$\frac{|\mathbf{x}|}{r}$}\right),\;\;\;\smallint\d t\,\tau(t)=1,\\&f(\xi)=1\;\;\mbox{if}\;\xi\leq 1,\;\;\;\;f(\xi)=0\;\;\mbox{if}\;\xi\geq 2,\nonumber
\end{align*}
where $A$ is an operator, $f$ and $\tau$  are smooth functions. We ignore technicalities and do not specify in what sense the limit is taken and for what $A$ it should exist. Since the support of $\partial_i\varphi_r$ is spacelike separated from any bounded set for large enough $r$, we get $[\mathrm{Q}_1,A]=0$ if $A$ is local relative to the electric field, that is, if there is a bounded region $\mathcal{O}$ such that $A$ commutes with the electric field smeared with a test function supported in the causal complement of $\mathcal{O}$. So we conclude that local operators are neutral, and since $Q_1\vac=0$, charged states cannot be created from the vacuum by local operators.

As in electromagnetism, the Noether charges $Q_\alpha=\smallint\d^3\!x\sum_a\alpha_a(J^{a0}-\bs{\nabla}\bs{\cdot}\bs{E}^a)$ associated with the gauge symmetries are their generators in the canonical formalism of classical Yang-Mills theories. Here $a$ labels the components in a basis of the Lie-algebra of the gauge group. Unlike in electromagnetism, where Gauss's law is a relation between gauge invariant quantities, in Yang-Mills theories the implementation of Gauss's law as a field equation requires the introduction of fields that are not gauge invariant. Such a departure from physical quantities may be useful at some intermediate steps of a construction. One example is the indefinite metric formulation of Yang-Mills theories, also known as BRST formalism, where the physicality condition on states is imposed as the requirement that for any compactly supported $\alpha$, equation $\big(\vf,\sum_a(J^{a0}-\bs{\nabla}\bs{\cdot}\bs{E}^a)(\alpha_a)\psi\big)=0$ holds for the elements $\vf$ and $\psi$ of a subspace $\hil'$ on which the inner product $(\cdot\,,\cdot)$ is semidefinite. Let the vacuum $\vac$ be physical and neutral: $\vac\in\hil'$ and $Q_1^a\vac=0$, where $Q_1^a=\smallint\d^3\!x\,J^{a0}$ are the Noether charges corresponding to global symmetries. These charges are defined similarly to the electric charge in quantum electrodynamics. If $A$ is an operator such that $A\vac$ is in $\hil'$ and the charges $Q_1^a$ generate an observable infinitesimal transformation of $A\vac$: $(\vf,Q_1^aA\vac)\neq0$ for some $\vf\in\hil'$, then $A$ cannot be local relative to $\bs{E}^a$. We shall consider field theories in an indefinite metric because this is the structure underlying the BRST quantization. The BRST formalism allows for charged local operators, and locality played an important role in the no-go theorem, so it is this formulation where analogous results may be expected. 

\subsection*{{\it Organization of the paper}}

Section \ref{GaugeLocal} is an introduction to the indefinite metric formulation, and more specifically, to the BRST quantization. Section \ref{Wightman} provides a summary of the generalized Wightman axioms applicable to this formalism. This section starts with our conventions. After some preparations in Section \ref{Representation} and \ref{General}, the proofs in Section \ref{Final} are free of technicalities. The main result is the Corollary. So a quick reading of the paper may start with the Corollary in Section \ref{Final} and continue with Section \ref{Discussion}.

The Corollary states that if the metric operator that specifies the inner product commutes with translations, then any local covariant field in a scale invariant theory satisfies the wave equation if it has nonvanishing matrix elements between the vacuum and physical one-particle states (one-particle states of ``nonzero norm'') or if its scaling dimension is small enough. The theorem quoted in Section \ref{Reeh-schlieder} shows that the commutator of such a field is a $c$-number.

In section \ref{Discussion}, we propose a plausibility argument showing that the conditions of the Corollary are met by the basic physical (not ghost) fields of $\mathcal{N}=4$ SYM theory. Therefore they satisfy the wave equation and their commutator is a $c$-number. Basic fields are the fields corresponding to the variables of the classical action. The importance of the condition that these fields have nonvanishing matrix elements between the vacuum and physical one-particle states is that it puts an upper bound on their scaling dimensions. If this condition is not satisfied, it may be problematic to develop a scattering theory. Less speculative is the observation that the scaling dimensions inferred from the properties of the so-called protected (BPS) operators respect the above-mentioned upper bounds. This is another way to justify the conditions of the Corollary. In Section \ref{Discussion} we also discuss the assumptions made in addition to the generalized Wightman axioms.

\section{GAUGE SYMMETRIES IN LOCAL FIELD THEORIES}\label{GaugeLocal}

We are interested in the BRST quantization of field theories only as a technique, a structure that might facilitate some constructions. To indicate that this formalism has promising prospects, all that we can do at this point is to demonstrate that it elegantly encapsulates the principles of gauge theories. This can be done succinctly in the canonical formalism, with an emphasis on local symmetries and the phase space constraints they imply. Our presentation is rather sketchy because the arguments of the paper do not rely on the details of the construction. The only purpose of this section is to provide support for this quantization technique.

For simplicity, we shall consider fields defined on the Minkowski spacetime $\mink$ equipped with a standard synchronization. The canonically conjugate variables of a Yang-Mills field can be chosen to be the spatial component $\bs{A}$ of the vector potential and the electric field $\bs{E}$ whose components in terms of the field strength are $E^i=F^{0i}$. (Lie algebra indices are suppressed.) The construction of the phase space\cite{Lee} is based on a specific choice of a Cauchy surface $\Sigma$. In the present case, the canonical variables at different times can be mapped into each other by time translations. This allows us to describe the collection of phase space variables of a field configuration on spacetime at different times as a trajectory in a single phase space parametrized by the time. In addition to the canonical fields denoted schematically by $Q$ and $P$, which include the variables that fully specify the matter configuration, there are noncanonical fields $N$. The noncanonical component of a Yang-Mills field is time component of the vector potential.     

The variation of the action with respect to $N$ yields the constraints 
\[
\mathscr{H}[Q,P]=0,
\]
which are implied by the local symmetries of the Lagrangian.\cite{Lee} The Poisson algebra of the phase space functionals $(P,Q)\mapsto\smallint{}_{\!\!\Sigma\;}\lambda\cdot\mathscr{H}[P,Q]$ is closed. They generate the gauge transformations of the canonical fields. We introduced the notation $\cdot$ for summation over the (suppressed) index labeling the symmetry generators. The field $N$ can be turned into a canonical field by introducing a canonically conjugate momentum $R$ and imposing the constraint $R=0$. The fields $Q,P,N$, and $R$ will be denoted collectively by $\Upphi$. Let $\tilde{\mathscr{H}}$ be the expression obtained from $\mathscr{H}$ by adding a term bilinear in $N$ and $R$ such that the functionals $\Upphi\mapsto\smallint{}_{\!\!\Sigma\;}\lambda\cdot\tilde{\mathscr{H}}[\Upphi]$ generate the gauge transformations of all the field variables, including $N$ and $R$. 

Let $J$ be the Noether current of the global symmetry associated with the gauge symmetries. If $\Upphi$ is a solution, then 
\eq{\label{ConstraintsOriginal}
\partial_\mu F^{\mu\nu}+J^\nu=0,\;\;\;\;\;\;\;R=0,
}
and since $\partial_iF^{i0}+J^0=\mathscr{H}$, we can also write 
\eq{\label{generators_fieldeq}\partial_iF^{i0}+J^0=\tilde{\mathscr{H}}.} 
 
The BRST quantization of Yang-Mills theories starts with a modification of the classical action. A gauge fixing term is added to the Lagrangian, which eliminates the local symmetries and enlarges the phase space. For example, take the term 
\eq{\label{gauge_fixing}
\mathscr{L}_\mathrm{gf}=R\cdot\partial_\mu A^\mu+\sfr{\alpha}{2}R\cdot R,
}
where $\alpha$ is a nonzero constant, and $R$ is the Nakanishi-Lautrup auxiliary field. Then $N$ and $R$ are canonical fields from the beginning and we do not have to append them somewhat artificially to the phase space variables as before. A ghost term is also added to the Lagrangian, which can be motivated by the appearance of the same term in the path integral, and it leads to a simple characterization of local observables and physical states. Now $Q$ and $P$ include the phase space variables of the ghost and antighost fields in addition to those of the gauge and matter fields.  Inspection of the field equations in [\citeonline{Kugo}] tells us that Eq.~\eqref{generators_fieldeq} is still valid, with $\smallint{}_{\!\!\Sigma\;}\lambda\cdot\mathscr{H}$ containing two terms, the original one and a ghost term. But instead of Eq.~\eqref{ConstraintsOriginal}, we have   
\eq{\label{ConstraintsFixed}
\partial_\mu F^{\mu\nu}+J^\nu=\{\xi^\nu,\mathrm{Q}\},\;\;\;\;\;R=\{\eta,\mathrm{Q}\}.
}
The particular form of the vector field $\xi$, the scalar field $\eta$, and the conserved BRST charge $\mathrm{Q}$ is not important for the present discussion. At any time, they can be expressed in terms of the instantaneous spatial configuration of the fields. Their Poisson brackets are defined by the aid of these expressions. Since the ghost fields are anticommuting, by the usual construction of the phase space, we find that the Poisson brackets on the right hand side are symmetric. They are replaced by anticommutators in quantum theory.

Let $F$ be a functional on the phase space. In order for $F$ to be considered a symmetry generator, the simplest possibility is that $F$ be the element of a finite dimensional subspace of phase space functionals which contains the Hamiltonian and forms a Lie-algebra with the Poisson bracket as the Lie-bracket. If we are to include local symmetries, instead of a finite dimensional subspace, we may consider the space of functionals of the form $\Upphi\mapsto\smallint{}_{\!\!\Sigma\;}\lambda\cdot\mathscr{F}[\Upphi]$, where  $\lambda$ is a compactly supported smooth function on $\Sigma$ which takes its value in a finite dimensional vector space. Then the requirement is that these generators and the Hamiltonian form a closed Poisson algebra. In some cases, it may be necessary to allow for $\Upphi$-dependent $\lambda$. The gauge fixing term in Eq.~\eqref{gauge_fixing} is invariant under a gauge transformation parametrized by a function $\Lambda$ that satisfies $\partial_\mu D^\mu\Lambda=0$. In electrodynamics, these symmetries are sometimes called residual gauge symmetries. They are generated by the functionals 
\eq{\label{generators}\Upphi\mapsto\smallint{}_{\!\!\Sigma\;}\lambda\cdot\tilde{\mathscr{H}}[\Upphi],\;\;\;\;\;\;\;\Upphi\mapsto\smallint{}_{\!\!\Sigma\;}\lambda\cdot R,} 
whose Poisson algebra with the Hamiltonian is closed.

In Yang-Mills theories, the gauge fixing term in Eq.~\eqref{gauge_fixing} leads to a ghost term by the Faddeev-Popov prescription\cite{Weinberg2} with which the functionals in Eq.~\eqref{generators} do not form a closed Poisson algebra with the Hamiltonian, so unlike in electromagnetism, residual gauge symmetries are not implemented on the phase space. However, the operators corresponding to these functionals still have physical significance in quantum field theory, where physical states are selected by the aid of the BRST charge so that the matrix elements of the anticommutators on the right hand side of Eq.~\eqref{ConstraintsFixed} vanish between them. So in this sense, Eq.~\eqref{ConstraintsOriginal} is regained on the physical state space.

Now we elaborate a little further on the selection of physical states and the role of the BRST charge in quantum field theory. Field operators are defined on a Hilbert space $\hil$, in which an additional inner product $(\cdot\,,\cdot)$ is introduced. The inner product $(\cdot\,,\cdot)$ is nondegenerate, as a consequence of a compatibility requirement discussed later, which makes the structure amenable to techniques in functional analysis. The construction of observables and physical states is based on a closed subspace $\hil'$ on which the inner product $(\cdot\,,\cdot)$ is positive semidefinite. Let $A^+$ be the adjoint of an operator $A$ with respect to $(\cdot\,,\cdot)$. The set $\mathscr{A}$ of field operators is required to have the following properties. There is a dense subspace $D$ of $\hil$ such that $D$ is contained by the domain of each element of $\mathscr{A}$ and $\mathscr{A}D\subset D$. The vacuum $\vac$ is in $D$. Furthermore, $\mathscr{A}$ is an algebra, which is closed under involution: $A^+\in\mathscr{A}$ if $A\in\mathscr{A}$.

The time dependent operators corresponding to the classical expressions in Eq.~\eqref{generators} will be denoted by $\tilde{\mathscr{H}}(\lambda)$ and $R(\lambda)$, respectively. It is assumed that they are defined on $D$, which they leave invariant. It may be unnecessarily restrictive from a physical point of view to define gauge invariance of $A\in\mathscr{A}$ by the condition that $[A,\tilde{\mathscr{H}}(\lambda)]=0$ and $[A,R(\lambda)]=0$ hold on $D$. Instead, it is enough if for any $\lambda$,
\eq{\label{GaugeInvOp}\big(\vf,[A,\tilde{\mathscr{H}}(\lambda)]\psi\big)=0,\;\;\;\;\big(\vf,[A,R(\lambda)]\psi\big)=0\;\;\;\;\;\mbox{for all}\;\;\vf,\psi\in D'=\hil'\cap D.}  
Motivated by the fact that gauge symmetries are trivially represented on the physical Hilbert space, we also require that 
\eq{\label{GaugeInvState}\big(\vf,\tilde{\mathscr{H}}(\lambda)\psi\big)=0,\;\;\;\;\big(\vf,R(\lambda)\psi\big)=0\;\;\;\;\;\mbox{for all}\;\;\vf,\psi\in D'.}
With this requirement, condition \eqref{GaugeInvOp} follows from the property 
\eq{\label{GaugeInvBRST}
AD'\subset D',\;\;\;\;A^+D'\subset D'.
}
Gauge invariance in the BRST quantization is defined by condition \eqref{GaugeInvBRST}.
 
Ideally, the enlargement of the physical state space $\hil'$ to $\hil$ is the necessary minimum, that is, $\hil'$ is maximal in the sense that the set $\hil\setminus\hil'$ does not contain vectors orthogonal to $\hil'$ with respect to $(\cdot\,,\cdot)$. Clearly, with a positive definite inner product, this condition would imply that $\hil=\hil'$. Then the Noether charge associated with the global symmetry would generate a trivial transformation on local operators. Since one of the motivations for the BRST quantization is the possibility of charged local operators, we would like to avoid this implication. Therefore $\hil'$ can be maximal only in an indefinite metric. Condition \eqref{GaugeInvState} can be achieved by virtue of \eqref{ConstraintsFixed} if we choose $\hil'=\mathop{\mathrm{ker}}\mathrm{Q}$, where $\mathrm{Q}=\mathrm{Q}^+$. For simplicity, it is assumed that $\mathrm{Q}$ is everywhere defined and therefore bounded since a pseudo-Hermitian operator is closed. For any operator $A$ on $\hil$ whose adjoint with respect to the Hilbert space scalar product is densely defined, $\overline{\mathop{\mathrm{im}}A}=(\mathop{\mathrm{ker}}A^+){}^\perp$, where the superscripts $+$ and $\perp$ indicate the adjoint operator and the orthogonal complement with respect to $(\cdot\,,\cdot)$. The completion of $\mathop{\mathrm{im}}A$ is meant in the topology induced by the Hilbert space norm. Since $\mathrm{Q}=\mathrm{Q}^+$, maximality of $\hil'$ is equivalent to $\mathrm{Q}^2=0$. The inner product $(\cdot\,,\cdot)$ vanishes on $\hil''=\overline{\mathop{\mathrm{im}}\mathrm{Q}}$. If $A$ is gauge invariant, that is, it satisfies condition \eqref{GaugeInvBRST}, then $AD''\subset D''$, where $D''=\hil''\cap D$, and $A$ gives rise to a densely defined operator $A^\mathrm{ph}$ on the physical Hilbert space $\hil^\mathrm{ph}=\overline{\hil'/\hil''}$ defined by the usual completion procedure.

The indefinite metric formulation also makes it possible to define a local vector potential operator. On the physical Hilbert space, this is impossible. If $A$ is a covariant operator that transforms in the $(j,k)$ representation of the Lorentz group and has nonvanishing matrix elements between the vacuum and helicity $h$ states, then $|j-k|=|h|$.\cite{WeinbergMassless} In the free field theory of pure radiation field, the Faraday tensor creates helicity $\pm1$ particles from the vacuum, and in electrodynamics, it is still expected to have nonzero matrix elements between these photon states and the vacuum. Then $A$ also has nonzero matrix elements between photon states and the vacuum, but it is supposed to be a vector field, so $j=k=\frac{1}{2}$, which violates the condition $|j-k|=1$. That is why the vector potential operator in any direct construction on the physical Hilbert space, such as quantum electrodynamics in the Coulomb gauge, transforms under the unitary representation $U(\Lambda)$ of a boost $\Lambda$ as a vector field only up to a term resembling a gauge transformation:\cite{Weinberg1}
\[
U(\Lambda)A_\mu(x)U(\Lambda)^*=\smallsum_\nu\Lambda^\nu{}_\mu A_\nu(\Lambda x)+\partial_\mu\Omega(x,\Lambda).
\]
In the indefinite metric formulation, there is no such kinematical constraint on $A$. Indeed, in the Gupta-Bleuler quantization of pure radiation field, there is no difficulty defining a local covariant vector potential $A$ whose antisymmetric derivative $F=dA$ gives rise to a physical field $F^\mathrm{ph}$ identical with the free Faraday tensor.

\section{GENERALIZED WIGHTMAN AXIOMS}\label{Wightman}
The Lorentz product of two vectors $x$ and $y$ of the Minkowski spacetime $\mink$ will be denoted by $x\cdot y$. The metric signature is $(-,+,+,+)$. The d'Alembertian operator is $\Box=\partial_0^2-\Updelta$. The Fourier transform of a function $f:\mink\to\mathbb{C}$ and the inverse Fourier transform of $g:\minkd\to\mathbb{C}$ are defined as
\[
\hat{f}(p)=\midint_{\!\!\mink\,}\d^4\!x\,e^{\im p\cdot x}f(x),\;\;\;\;\;\;\;\check{g}(x)=\sfr{1}{(2\pi)^4}\midint_{\!\!\minkd\,}\d^4\!p\,e^{-\im p\cdot x}f(p).
\]
Complex conjugation is indicated by a bar, as in $\bar{c}$. If $A$ is an operator on a Hilbert space, its adjoint is $A^*$. The notation $A^+$ is preserved for the pseudo-adjoint, which will be defined shortly. The translation group of $\mink$ is identified with $\mink$ itself. Let $G$ be a group acting on $\trans$. The point that $g\in G$ assigns to $x\in\mink$ will be denoted by $gx$. Suppose that $gx$ is a linear function of $x$. The Lorentz product gives rise to a natural isomorphism between $\trans$ and its dual $\transd$, which allows us to define the action of $G$ on $\transd$ by $(gp)\cdot x\coloneqq p\cdot(g^{-1}x)$. The forward light cone is $V{}^+=\{\,p\in\transd\,|\,p^2<0, p^0>0\,\}$, and $\overline{V}{}^+$ is its closure. (The forward light cone in $\mink$ is defined in the same way and denoted by the same symbol.) Its boundary is $\partial V^+$. The elements of $\transd$ are sometimes referred to as momenta.

Let $L$ be an element of $S\!L(2,\mathbb{C})$, which is the universal covering group of the proper Lorentz group. The action of $L$ on $\trans$ is the Lorentz transformation assigned to $L$ by the covering map. From now on, whenever we say Lorentz or Poincar\'e transformation (or representation), we always mean an element (or a representation) of $S\!L(2,\mathbb{C})$ or $IS\!L(2,\mathbb{C})$, respectively. Consider the maps $s(\lambda)$ that assign $\lambda x$ to $x\in\mink$, where $\lambda$ is a positive real number. They form the group $\mathcal{D}$ of scale transformations or dilations. This group is defined by its action on $\trans$, so $s(\lambda)x=\lambda x$. Note that the corresponding action on $p\in\transd$ is given by $s(\lambda)p=\lambda^{-1}p$. The direct product of $\mathcal{D}$ and $S\!L(2,\mathbb{C})$ will be called the group of dilations and Lorentz transformations. The group of dilations and Poincar\'e transformations is the semidirect product of $IS\!L(2,\mathbb{C})$ and $\mathcal{D}$ with the action of $\mathcal{D}$ on $IS\!L(2,\mathbb{C})$ defined as $s(\lambda)(a,L)\coloneqq(\lambda a,L)$, where $(a,L)$ is a Lorentz transformation $L$ followed by a translation $a\in\mink$. We shall consider representations of (semi)direct products of groups specified by operators $\un(a,g,\dots)$ that represent the transformations obtained by a successive application of the group elements going from the right to the left in the argument of $\un$. When some of these are the identity element $1$ of the corresponding group, they are simply omitted. That is, we write for example $\un(a)$ instead of $\un(a,1)$.         

The definition of fields and a field theory in an indefinite metric is analogous to the Wightman axioms\cite{Streater} formulated for fields on a Hilbert space, but the inner product is allowed to be indefinite, so the state space is a pseudo-Hilbert space.  
\begin{defin}
A \emph{pseudo-Hilbert space} is a complex Hilbert space $\hil$ endowed with a possibly indefinite inner product $(\cdot\,,\cdot)$ which is compatible with the scalar product $\langle\cdot\,,\cdot\rangle$ of $\hil$, that is, for every continuous functional $F:\hil\to\mathbb{C}$ there is a unique $\varphi\in\hil$ such that $F(\psi)=(\varphi,\psi)$ for all $\psi\in\hil$.    
\end{defin}
The choice of the Hilbert space scalar product $\langle\cdot\,,\cdot\rangle$ in $\hil$ is to a certain degree inessential.\cite{Bogolubov} Its purpose is to define a topology, the one induced by the norm $\|\varphi\|=\sqrt{\langle\varphi,\varphi\rangle}$. This topology is uniquely defined by the condition of compatibility of the inner product $(\cdot\,,\cdot)$ with $\langle\cdot\,,\cdot\rangle$. Any concept relying on a topology or norm, such as separability, closure or denseness of sets, continuity or boundedness of operators, is meant with respect to the norm $\|\cdot\|$.

The pseudo-Hermitian conjugate $A^+$ of a linear operator $A$ on $\hil$ with a dense domain $D$ is defined on $\vf\in\hil$ if $D\to\mathbb{C}:\psi\mapsto(\vf,A\psi)$ is continuous, and then $A^+\vf$ is defined by $(A^+\vf,\psi)=(\vf,A\psi)$ for all $\psi\in D$. A linear bijection $\un$ is pseudo-unitary if its domain $D$ and image are dense subspaces of $\hil$ and $(\un\vf,\un\psi)=(\vf,\psi)$ for all $\vf,\psi\in D$. A pseudo-unitary representation of a topological group $G$ is a representation $g\mapsto\un(g)$ by pseudo-unitary operators $\un(g)$, with a common dense domain $D$ such that $\un(g)D=D$ for all $g\in G$. The representation is called continuous if $g\mapsto\langle\vf,\un(g)\psi\rangle$ is continuous for all $\vf,\psi\in D$. 

Let $\hil$ be a pseudo-Hilbert space with inner product $(\cdot\,,\cdot)$. Suppose that $\f:f\mapsto\f(f)$ assigns an operator $\f(f)$ to each test function $f$ in the Schwartz space $\sch(\mink)$ such that the operators $\f(f)$ have a common dense domain $D$, which is called the domain of $\f$. If $\sch(\mink)\to\mathbb{C}:f\mapsto(\vf,\f(f)\psi)$ is continuous for all $\vf,\psi\in D$, then we say that $f$ is an operator-valued distribution on $\mink$. The pseudo-adjoint $\f^+$ of $\f$ is defined by $\f^+(f)\coloneqq\f(\bar{f})^+$. Now we are ready to define the generalized Wightman axioms.
    
\begin{defin}\label{deflcqft}
Let $\hil$ be a separable pseudo-Hilbert space with inner product $(\cdot\,,\cdot)$, $\un$ a continuous pseudo-unitary Poincar\'e representation, and $\Upphi$ a finite collection of operator-valued distributions $\Fd{n}{i}$ on $M$. We say that $\Upphi$ and $\un$ define a \emph{local covariant quantum field theory} if they have the following properties:
\begin{enumerate}[I.]
\item \label{domain}\emph{Domain.} The operators $\Fd{n}{i}(f)$ have a common dense domain $D$ such that $\Fd{n}{i}(f)D\subset D$.  
\item \label{covariance}\emph{Covariance.} For all $(a,L)\in IS\!L(2,\mathbb{C})$, the pseudo-unitary operators $\un(a,L)$ are defined on $D$, $\un(a,L)D=D$, and the equality 
\[
\un(a,L)\,\Fd{n}{i}(f)\un(a,L)^{-1}=\smallsum_j\Mx{S}{n}{ij}(L^{-1})\,\Fd{n}{\!j}\big(\{a,L\}f\big)
\]
holds when both sides act on a vector in $D$. Here $S^{(n)}$ are some Lorentz representations. We introduced the notation $\big(\{a,L\}f\big)(x)\coloneqq f(L^{-1}(x-a))$.
\item \label{spectrum}\emph{Spectral Condition.} For any $\varphi,\psi\in D$, the functions $a\mapsto(\vf,\un(a)\psi)$ are required to be polynomially bounded so that the Fourier transform symbolically written as
\[
\smallint e^{-\im p\cdot a}(\vf,\un(a)\psi)
\]  
exists as a tempered distribution, whose support lies in $\overline{V}{}^+$. 
\item \label{vacuum}\emph{Vacuum.} There is an $\vac\in D$, called the \emph{vacuum}, such that $(\vac,\vac)=1$ and $\un(a,L)\vac=\vac$ for all $(a,L)$. Any other translation invariant vector differs from $\vac$ by a complex phase.  
\item \label{locality}\emph{Locality.} If the supports of the test functions $f$ and $g$ are spacelike separated, that is, if $(x-y)^2>0$ for all $x\in\supp f$ and $y\in\supp g$, then 
\[
[\Fd{n}{i}(f),\Fd{m}{\!j}(g)]_\mp=0
\]
on $D$, where, depending on $n$ and $m$, $[\cdot,\cdot]_\mp$ is either a commutator or anticommutator.
\item \label{conjugation}\emph{Conjugation.} For any $n$ there is an $\tilde{n}$ such that $\Fd{\tilde{n}}{i}=(\Fd{n}{i}){}^+$.
\item \label{cyclicity}\emph{Cyclicity.} The linear space generated by vectors of the form $\Fd{n_1}{\!i_1}\!(f_1)\dots\Fd{n_\ell}{\!i_\ell}\!(f_\ell)\vac$ is dense in $\hil$. 
\end{enumerate}
\end{defin}
The last axiom can be written as $\overline{\mathcal{P}\vac}=\hil$, where $\mathcal{P}$ is the polynomial field algebra. This algebra consists of the sums of constant multiples of the identity operator and operators of the form $\Fd{n_1}{\!i_1}\!(f_1)\dots\Fd{n_\ell}{\!i_\ell}\!(f_\ell)$, where $f_1,\dots,f_\ell\in\sch(\mink)$. Let $O$ be an open subset of $\mink$. The polynomial field algebra $\mathcal{P}(O)$ of $O$ is defined in the same way as $\mathcal{P}$, except that the test functions are all supported in $O$.
  
If $(\cdot\,,\cdot)$ is positive definite, then these axioms reduce to the ordinary Wightman axioms. The index $n$ labels particle species. An operator-valued distribution on $\mink$ will be called a field. When there is a pseudo-unitary representation of some topological group $G$ which satisfies conditions analogous to Axiom \ref{covariance}, we say that a field is covariant under $G$. If $G$ is unspecified, covariance means covariance under Poincar\'e transformations. If $S$ is the representation that appears in the transformation rule \ref{covariance} of a covariant field $\f$, we say that $\f$ transforms in $S$.

Distributions are difficult to analyze directly. The following theorem,\cite{Streater} which is a consequence of the positivity of energy, sometimes makes it possible to study the properties of distributions occurring in field theory by the analysis of holomorphic functions.   

\begin{prop}\label{schwinger} Let $\Upphi$ be a collection of fields on a pseudo-Hilbert space $\hil$ with inner product $(\cdot\,,\cdot)$. Suppose that the fields have a common dense domain $D$ such that $\f(f)D\subset D$, and they are covariant under a pseudo-unitary Poincar\'e representation $\un$ that satisfies the spectral condition introduced in Axiom \ref{spectrum}. The operators $\un(a,L)$ are defined on $D$, and $\un(a,L)D=D$. Let $\vac\in D$ be a Poincar\'e invariant vector. The vacuum expectation value
\[(\vac,\Fd{n_1}{\!i_1}\!(f_1)\dots\Fd{n_\ell}{\!i_\ell}\!(f_\ell)\vac),
\]
which is a separately continuous multilinear functional of $f_1,\dots,f_\ell\in\sch(\mink)$, can be uniquely extended to a tempered distribution $\mathcal{W}_{i_1\dots i_\ell}$ on $\mink^\ell$. There is a $W\in\sch(\mink^{\ell-1})$ such that 
\eq{\label{Weqn}
\mathcal{W}(x_1,\dots,x_\ell)=W(x_1-x_2,\dots,x_{\ell-1}-x_\ell),
}
where we suppressed the field indices $i_1,\dots,i_\ell$ for simplicity. There is a holomorphic function $\mathbf{W}$ on $\big\{\,(z_1,\dots,z_{\ell-1})\in\mathbb{C}^{\ell-1}\,|\,z_m=\xi_m+\im\eta_m,\,\xi_m\in\mink,\,\eta_m\in V^+,\,m=1,\dots,\ell-1\big\}$ such that
\[
W(\xi_1,\dots,\xi_{\ell-1})=\lim_{\eta_1,\dots,\eta_{\ell-1}\to 0} \mathbf{W}(\xi_1+\im\eta_1,\dots,\xi_{\ell-1}+\im\eta_{\ell-1}),
\] 
where the limit is meant in the sense of convergence in $\sch^\prime(\mink^{\ell-1})$. Define the Euclidean correlation functions $\mathbf{S}$ on the collection of real vectors $x_m^{}=(x^0_m,\bs{x}^{\phantom{0}}_m)\in\mathbb{R}^4$ with $x^0_m>0$ by   
\[
\mathbf{S}(x_1,\dots,x_{\ell-1})=\mathbf{W}(z_1,\dots,z_{\ell-1}),\;\;\;\;\;z_m^{}=(\im x_m^0,\bs{x}^{\phantom{0}}_m),\;\;\;m=1,\dots,\ell-1.
\]
If the field $\f^{(n_m)}$ transforms in the $\oplus_{s=1}^{\kappa_m}(\mathrm{j}_s^m,\mathrm{k}_s^m)$ representation of $S\!L(2,\mathbb{C})$, then  
\[
\Mx{Q}{1}{\!i_1j_1}\!(A,B)\dots\Mx{Q}{\ell-1}{\!\!\!\!\!\!\!i_{\ell-1}j_{\ell-1}}\!(A,B)\,\mathbf{S}_{j_1\dots j_{\ell-1}}(x_1,\dots,x_{\ell-1})=\mathbf{S}_{i_1\dots i_{\ell-1}}\big(R(A,B)x_1,\dots,R(A,B)x_{\ell-1}\big),
\]
where $R(A,B)$ is the four dimensional rotation that the covering map $S\!U(2)\times S\!U(2)\to S\!O(4)$ assigns to $(A,B)$, and $Q^{(m)}$ is the $\oplus_{s=1}^{\kappa_m}(\mathrm{j}_s^m,\mathrm{k}_s^m)$ representation of $S\!U(2)\times S\!U(2)$. This equality is valid if the arguments $x_m$ as well as $R(A,B)x_m$ are all in the domain of $\mathbf{S}$. In fact, $\mathbf{S}$ can be extended to a larger domain so that it remains covariant, but we will not need this extension. The function $\mathbf{S}$ uniquely determines the distribution $\mathcal{W}$. 
\end{prop}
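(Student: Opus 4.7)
The plan is to follow the classical derivation of Wightman functions, tube holomorphy, and Euclidean correlators, checking at each step that the positivity of $(\cdot,\cdot)$ plays no role. All analytical work uses the norm topology of $\hil$ and the joint continuity of the pseudo-inner product built into the pseudo-Hilbert structure, so indefiniteness is not a technical obstacle anywhere.

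First I extend the multilinear form $(f_1,\dots,f_\ell)\mapsto(\vac,\Fd{n_1}{i_1}(f_1)\cdots\Fd{n_\ell}{i_\ell}(f_\ell)\vac)$ to a tempered distribution $\mathcal{W}$ on $\mink^\ell$ by the Schwartz nuclear theorem: each $\f^{(n_k)}_{i_k}(f_k)$ leaves $D$ invariant and is continuous in $f_k$, and the pseudo-inner product is jointly continuous on $\hil$, so the functional is separately continuous in each argument, which is all the nuclear theorem requires. Translation invariance of $\vac$ together with the translation part of Axiom \ref{covariance} gives $\mathcal{W}(x_1+a,\dots,x_\ell+a)=\mathcal{W}(x_1,\dots,x_\ell)$; the standard Fourier-support argument (the Fourier transform is concentrated on the hyperplane $p_1+\cdots+p_\ell=0$) then produces the factorization \eqref{Weqn} for a tempered distribution $W$ on $\mink^{\ell-1}$.

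Next I construct $\mathbf{W}$. Using $\un(a)\vac=\vac$ and covariance, the kernel of $\mathcal{W}$ can be rewritten as a matrix element of $\Fd{n_1}{i_1}(0)\,\un(x_1-x_2)\,\Fd{n_2}{i_2}(0)\,\un(x_2-x_3)\cdots\Fd{n_\ell}{i_\ell}(0)$ between $\vac$ and $\vac$. Each insertion $\xi_m\mapsto\un(\xi_m)$ has, by Axiom \ref{spectrum}, matrix elements whose Fourier transforms are tempered distributions supported in $\overline{V}^+$; the resulting Laplace transform in the $\ell-1$ variables $\xi_m$ defines a function holomorphic on the forward tube $\xi_m+\im\eta_m$ with $\eta_m\in V^+$, whose distributional boundary value at $\eta_m\to 0$ reproduces $W$. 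The Euclidean points $z_m=(\im x_m^0,\bs{x}_m)$ with $x_m^0>0$ correspond to $\eta_m=(x_m^0,\bs{0})\in V^+$, so $\mathbf{S}$ is well defined as a restriction of $\mathbf{W}$. The SO(4) covariance of $\mathbf{S}$ is then the Bargmann--Hall--Wightman statement: Lorentz covariance of $\mathbf{W}$ extends, via the polynomial dependence of the representatives $S^{(n)}$ on $L\in S\!L(2,\mathbb{C})$, to complex-Lorentz covariance on the extended tube; specializing the complexification to the two commuting $S\!U(2)$ factors and restricting to the Euclidean section converts Lorentz-boost-covariance into ordinary rotation covariance with $Q^{(m)}$ and $R(A,B)$ as stated. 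Uniqueness of $\mathcal{W}$ given $\mathbf{S}$ follows from unique continuation: $\mathbf{S}$ fixes $\mathbf{W}$ on a totally real submanifold of a connected domain of holomorphy, hence on the whole tube, and the boundary value recovers $W$ and then $\mathcal{W}$ via \eqref{Weqn}.

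The main obstacle is entirely conceptual: one must be sure that positivity of $(\cdot,\cdot)$, which enters the standard Hilbert-space proof through norm estimates and Cauchy--Schwarz bounds, is nowhere essential. In this setting the role of such estimates is played by continuity in the Hilbert norm $\|\cdot\|$ of $\hil$ and by the polynomial boundedness and support properties demanded in Axiom \ref{spectrum}; the pseudo-inner product itself contributes only its bilinearity, invariance under $\un$, and joint continuity, none of which depend on positivity. Once this verification is carried out line by line, the proposition follows from the classical arguments without modification.
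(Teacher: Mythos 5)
Your proposal is correct and takes the same route the paper relies on: the paper states this proposition without its own proof, citing the classical Wightman-function analysis, and your sketch is precisely that argument (nuclear theorem for the extension to a tempered distribution, translation invariance giving the difference-variable factorization, the spectral condition giving holomorphy on the forward tube, Bargmann--Hall--Wightman for the Euclidean rotation covariance, and the identity theorem on the totally real Euclidean section for uniqueness). Your key observation is also the right one --- positivity never enters because the temperedness and forward-cone support of $(\vf,\un(a)\psi)$ are postulated directly in Axiom \ref{spectrum} rather than derived from a spectral theorem for a genuinely unitary translation group.
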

For convenience it is customary to write an equality of distributions as if they were functions. We followed this practice in the above proposition, in which $\mathbf{W}$ and $\mathbf{S}$ are ordinary functions, but $\mathcal{W}$ and $W$ are distributions, so they should be smeared with test functions. The reader can easily supply the required smearing. For example, Eq.~\eqref{Weqn} actually means
\[
\mathcal{W}(f)=W(\hat{f}),\;\;\;\hat{f}(x_1,\dots,x_{\ell-1})=\smallint\d^4y\,f(x_1+\dots+x_{\ell-1}+y,x_2+\dots+x_{\ell-1}+y,\dots,x_{\ell-1}+y,y).
\]  

A scale invariant quantum field theory is defined as in Definition \ref{deflcqft}, but now $\un$ is a pseudo-unitary representation of dilations and Poincar\'e transformations such that
\[
\un(\lambda)\,\Fd{n}{i}(f)\un(\lambda)^{-1}=\lambda^{s_n-4}\Fd{n}{\!i}\,\big(\{\lambda\}f\big)
\]
on $D$, where $\big(\{\lambda\}f\big)(x)\coloneqq f\big(\lambda^{-1}x\big)$ and $\un(\lambda)$ is the operator representing a scale transformation by $\lambda$. The number $s_n$ is called the scaling dimension of $\Upphi^{(n)}$. Or in the unsmeared notation: 
\[
\un(a,L,\lambda)\,\Fd{n}{i}(x)\un(a,L,\lambda)^{-1}=\lambda^{s_n}\Mx{S}{n}{ij}(L^{-1})\,\Fd{n}{\!j}(\lambda\,Lx+a), 
\]
where $(a,L,\lambda)$ is a scale transformation by $\lambda$ followed by a Poincar\'e transformation $(a,L)$. 

Note that if $\Upphi^{(n)}$ transforms in an irreducible representation of the group of dilations and Lorentz transformations, then the transformation rule has the postulated form by Schur's lemma. Matrices that are not diagonal in their Jordan normal form generate indecomposable representations of dilations. There are no fields of such transformation properties in the models we will propose for a potential application of our results. So for our purposes, we can disregard the possibility of such transformation rules. Note that for the invariance of the vacuum state, it is enough if $\vac$ changes by a complex phase. Therefore dilations could be represented nontrivially on $\vac$. We could easily account for this possibility, but it will not be included, since the scale invariant theories we consider are expected to be conformal with a conformally invariant vacuum state, and if the conformal symmetry is already present in the indefinite metric formulation, simplicity of the symmetry group implies that dilations must act on $\vac$ trivially.   

When we analyze the implications of covariance under pseudo-unitary representations, the metric operator will be helpful. Its existence is established by the following result:\cite{Bogolubov}
\begin{prop}
If $\hil$ is a pseudo-Hilbert space, then there is an invertible self-adjoint bounded linear operator $\eta$, called the \emph{metric operator}, such that the inner product is $(\cdot\,,\cdot)=\langle\cdot\,,\eta\cdot\rangle$, where $\langle\cdot\,,\cdot\rangle$ is the Hilbert space scalar product of $\hil$. The inverse of $\eta$ is also bounded.
\end{prop}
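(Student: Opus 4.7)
The plan is to use the Riesz representation theorem on the Hilbert space scalar product to convert the indefinite inner product $(\cdot\,,\cdot)$ into a bounded linear operator, and then to verify self-adjointness and invertibility from the symmetry and non-degeneracy implicit in the compatibility condition. First I would fix $\psi\in\hil$ and consider the antilinear functional $\chi\mapsto(\chi,\psi)$. The compatibility hypothesis, read as asserting that the prescription $\varphi\mapsto(\varphi,\cdot)$ realizes a bijection between $\hil$ and its continuous dual, forces this functional to be continuous on $\hil$. By the Riesz theorem there is then a unique vector $\eta\psi\in\hil$ with $(\chi,\psi)=\langle\chi,\eta\psi\rangle$ for all $\chi$, which defines a map $\eta\colon\hil\to\hil$. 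Linearity of $\eta$ follows immediately from the linearity of $(\chi,\cdot)$ combined with the uniqueness clause of Riesz.

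Next I would establish symmetry by a direct computation using the Hermitian symmetry of both inner products:
\[
\langle\chi,\eta\psi\rangle=(\chi,\psi)=\overline{(\psi,\chi)}=\overline{\langle\psi,\eta\chi\rangle}=\langle\eta\chi,\psi\rangle.
\]
Since $\eta$ is everywhere defined, the Hellinger-Toeplitz theorem upgrades symmetry to self-adjointness and boundedness in one stroke. This handles the analytic properties; what remains is invertibility of $\eta$ and boundedness of the inverse.

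For injectivity, $\eta\psi=0$ gives $(\chi,\psi)=0$ for all $\chi$, hence by Hermitian symmetry $(\psi,\chi)=0$ for all $\chi$, so the functional $(\psi,\cdot)$ is the zero functional; the uniqueness clause of the compatibility condition then yields $\psi=0$. For surjectivity, given any $w\in\hil$ the scalar-product functional $F(\psi)=\langle w,\psi\rangle$ is continuous, so by compatibility there exists $\varphi\in\hil$ with $F(\psi)=(\varphi,\psi)=\langle\varphi,\eta\psi\rangle=\langle\eta\varphi,\psi\rangle$ for all $\psi$, whence $w=\eta\varphi$. Thus $\eta$ is a bounded linear bijection of $\hil$, and the open mapping theorem gives that $\eta^{-1}$ is also bounded.

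The only delicate step is the very first one, namely extracting continuity of $\chi\mapsto(\chi,\psi)$ from the compatibility axiom. Once the inner product is read as providing a Riesz-type representation of every element of $\hil^*$, this continuity is built in (and by Hermitian symmetry continuity in the other argument follows); the rest of the argument is essentially automatic. No appeal to any special structure of $\hil$ beyond separability is needed.
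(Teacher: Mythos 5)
The paper does not actually prove this proposition: it is imported from the reference on indefinite-metric field theory with a bare citation, so there is no in-text argument to compare yours against. Your proof is the standard one and its skeleton is correct: Riesz representation to define $\eta$ pointwise, linearity from the uniqueness clause of Riesz, symmetry from the Hermiticity of both forms, Hellinger--Toeplitz to upgrade an everywhere-defined symmetric operator to a bounded self-adjoint one, injectivity from nondegeneracy (which you correctly extract from the uniqueness clause applied to the zero functional), surjectivity by feeding the Hilbert-space functional $\langle w,\cdot\rangle$ back into the compatibility axiom, and the bounded inverse theorem to finish. Each of these steps is deployed correctly.

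The one step you flag yourself is the only one that needs care, and it is worth being precise about why. As literally written, the compatibility clause of Definition 1 says that every \emph{continuous} functional is of the form $(\varphi,\cdot)$ for a unique $\varphi$; it does not say that $(\varphi,\cdot)$ is continuous for \emph{every} $\varphi$, and I do not see how to derive that from the literal wording alone. The natural attempt --- apply the closed graph theorem to the everywhere-defined linear map sending $w$ to the unique representative of $\langle w,\cdot\rangle$ --- is blocked precisely by the missing separate continuity of the form, so one cannot bootstrap it. Your first step therefore genuinely requires reading the compatibility condition as asserting a bijection between $\hil$ and its continuous dual, i.e.\ as including continuity of $\psi\mapsto(\varphi,\psi)$ for each fixed $\varphi$. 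That is the intended reading (it is how compatibility of a Hilbert majorant is defined in the cited reference, and the paper's later remark that the topology is uniquely determined by compatibility presupposes it), so your proof goes through; but you should state explicitly that you are taking this continuity as part of the hypothesis rather than deriving it, since with the weaker literal reading the Riesz step cannot even begin.
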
  

\section{REPRESENTATIONS OF SPACETIME SYMMETRIES}\label{Representation}

In this section we specify the pseudo-unitary representations of dilations and Poincar\'e representations used in our discussion. If the metric operator $\eta$ commutes with translations $\un(a)$, then $a\mapsto\un(a)$ is a unitary representation because
\[\langle\vf,\eta\psi\rangle=(\vf,\psi)=(\un(a)\vf,\un(a)\psi)=\langle\un(a)\vf,\eta\un(a)\psi\rangle=\langle\un(a)\vf,\un(a)\eta\psi\rangle\] 
for any $\vf\in D$ and $\psi\in D\cap(\eta^{-1}D)$, where $D$ is the common dense domain of $\un(a)$, so $\un(a)$ extends to a unitary operator.

Consider a continuous pseudo-unitary representation of the group of dilations and Poincar\'e transformations on a separable Hilbert space $\hil$ such that the representing operators are continuous, the spectral condition is satisfied, and the metric operator commutes with translations. Such a representation is unitary equivalent to a pseudo-unitary representation $\un$ of the following form. The operators $\un(\lambda,a,L)$ are continuous operators on $\hil$, which is the orthogonal direct sum of some or all of the Hilbert spaces $\hil_0$, $\hil_1$, and $\hil_\mathfrak{c}$. These subspaces are invariant under $\un$. The indices $1$ and $\mathfrak{c}$ indicate the components referred to as the one-particle component and the component of continuous mass spectrum. Translations are represented trivially on $\hil_0$. The other two components $\hil_1$ and $\hil_\mathfrak{c}$ are the spaces of square integrable functions $\varphi_1:\mathbb{R}^3\to H_1$ and $\varphi_\mathfrak{c}:\mathbb{R}^+\times\mathbb{R}^3\to H_\mathfrak{c}$ with norms 
\eq{\label{norms}
\|\varphi_1\|^2=\midint_{\mathbb{R}^3}\sfr{\d^3\bs{p}}{|\bs{p}|}\,\|\varphi_1(p)\|_1^2,\;\;\;\;\;\;\;\|\varphi_\mathfrak{c}\|^2=\midint_{0}^\infty\sfr{\d m}{m}\midint_{\mathbb{R}^3}\sfr{\d^3\bs{p}}{p^0}\,\|\varphi_\mathfrak{c}(p)\|_\mathfrak{c}^2,
}
where $p=(p^0,\bs{p})$ with $p^0=|\bs{p}|$ in the first equality and $p^0=\sqrt{\bs{p}^2+m^2}$ in the second equation. $H_1$ and $H_\mathfrak{c}$ are Hilbert spaces with norms $\|\cdot\|_1$ and $\|\cdot\|_\mathfrak{c}$, respectively.

The inner product on $\hil$ takes the form
\[(\varphi,\psi)=(\varphi_0,\psi_0)_0+\midint_{\mathbb{R}^3}\sfr{\d^3\bs{p}}{p^0}\,\langle\varphi_1(p),\eta_1(p)\psi_1(p)\rangle_1+\midint_{0}^\infty\sfr{\d m}{m}\midint_{\mathbb{R}^3}\sfr{\d^3\bs{p}}{p^0}\,\langle\varphi_\mathfrak{c}(p),\eta_\mathfrak{c}(p)\psi_\mathfrak{c}(p)\rangle_\mathfrak{c},\]
where $\langle\cdot\,,\cdot\rangle_i$ are the Hilbert space scalar products on $H_i$, and $\eta_i(p)$ are metric operators on $H_i$ such that $p\mapsto\eta_i(p)$ are weakly Borel, $i=1,\mathfrak{c}$. See Appendix for the definition of weakly Borel functions. Here $(\cdot\,,\cdot)_0$ is a nondegenerate inner product on $\hil_0$. The vectors $\varphi_i$, $i=0,1,\mathfrak{c}$, denote the components of $\varphi$ in the direct sum $\hil=\hil_0\oplus\hil_1\oplus\hil_\mathfrak{c}$. If the representation defined on $\hil_i$ does not occur in $\un$, then the corresponding term in the direct sum and the inner product is omitted. 

The action of $\un$ on $\hil_1$ and $\hil_\mathfrak{c}$ is given by
\eq{\label{MultRule}
\big(\un(a,g)\varphi_i\big)(p)=\lambda
 e^{\im p\cdot a}Q_i(p,g)\,\varphi_i(g^{-1}p),\;\;\;\;\;i=1,\mathfrak{c},
}
where $Q_i(p,g)$ are continuous operators on $H_i$, and $p\mapsto Q_i(p,g)$ are weakly Borel. Here we used a single letter $g$ for an element $(L,\lambda)$ of the group $G$ of dilations and Lorentz transformations. These notations indicate the generality of the result. Indeed, if $G$ is a locally compact Hausdorff group, $\un$ is a continuous pseudo-unitary representation of a semidirect product $\trans\rtimes G$ such that the operators $\un(a,g)$ are continuous, and the metric operator commutes with translations, then
\eq{\label{uni_general}
\big(\un(a,g)\vf\big)(p)=\sqrt{\sfr{\d\mu(g^{-1}p)}{\d\mu(p)}}\,e^{\im p\cdot a}Q(p,g)\,\vf(g^{-1}p)
}
on a direct integral $\hil=\smallint^\oplus_{\transd}\d\mu(p)H_p$, where $p\mapsto Q(p,g)$ is weakly Borel, and the function under the square root is the Radon-Nikodym derivative of $E\mapsto\mu(g^{-1}E)$ with respect to $\mu$. Direct integrals and related concepts are introduced in the Appendix, where we also sketch the argument that leads to the expression in Eq.~\eqref{uni_general}. The previously given form of representations of dilations and Poincar\'e transformations in terms of operators on the direct sum of at most three Hilbert spaces originates in the fact that the support of $\mu$ admits a partition into at most three sets in the direct integral. The representation property $\un(g)\un(h)=\un(gh)$ implies that for all $g,h\in G$,  
\eq{\label{MultRuleGen}
Q(p,g)\,Q(g^{-1}p,h)=Q(p,gh)}
is satisfied for almost all $p$. It follows from the pseudo-unitarity of $\un(g)$ that for all $g\in G$, 
\eq{\label{isom0}Q(gp,g)^*\eta(gp)Q(gp,g)=\eta(p)}
for almost all $p$.

The zero measure set of momenta $p$ for which Eq.~\eqref{MultRuleGen} fails to hold may vary with $g$ and $h$ so that the measure of their union is not zero. Therefore we cannot conclude automatically that for almost all $p$ the equality holds for all $g$ and $h$. But we have some freedom in the choice of $Q(p,g)$. For instance, changing it on a set of $\mu$-measure zero at a fixed $g$ does not affect the operator $\un(g)$. Even if this freedom is enough to achieve that Eq.~\eqref{MultRuleGen} holds for all $p$ and $g$, it is not obvious that we can simultaneously guarantee the joint measurability of the matrix elements of $Q(p,g)$ as the function of $p$ and $g$.

A similar problem arises when we try to establish an upper bound on $\|Q(p,g)\|_p$, where $\|\cdot\|_p$ is the operator norm induced by the norm in $H_p$. The norm in $\hil$ will be denoted by $\|\cdot\|$. Weak continuity of $\un$ implies that the matrix element $g\mapsto\langle\vf,U(g)\psi\rangle$ is a bounded function on any compact subset of $G$ for all $\vf,\psi\in\hil$. By the uniform boundedness principle, so is $g\mapsto\|\un(g)\|$. Since $\|\un(g)\|$ is the $\mu$-essential supremum of $\|Q(p,g)\|_p$, it is always possible to choose $Q(p,g)$ without changing $\un(g)$ so that $\|Q(p,g)\|_p\leq\|\un(g)\|$ for all $p$ and $g$. But it is not obvious that this choice does not interfere with the properties discussed in the former paragraph. We shall not investigate under what assumptions it is possible to choose the operators $Q(p,g)$ so that they have all the desirable measure theoretic properties. Instead, we assume that it is achievable for the representations encountered in field theories. The requirements are summarized in the following definition:
\begin{defin}\label{DefNonsing} Let $\trans\rtimes G$ be the semidirect product of the translation group $\trans$ and some other topological group $G$. If $\un$ is a weakly continuous representation of $\trans\rtimes G$ such that the operators $\un(a,g)$ are bounded and given by Eq.~\eqref{uni_general} on $\smallint^\oplus_{\transd}\d\mu(p)H_p$, then for all $g,h\in G$ equation \eqref{MultRuleGen} holds for almost all $p$, and $\mu\mathrm{-ess}\sup_{p\in\transd}\|Q(p,g)\|_p\leq\|\un(g)\|$. We say that $\un$ is \emph{nonsingular} if $Q$ can be chosen without affecting $\un$ so that Eq.~\eqref{MultRuleGen} and $Q(p,1)=\mathbbm{1}$ are satisfied everywhere, the map $(p,g)\mapsto Q(p,g)$ is weakly Borel, and $\|Q(p,g)\|_p\leq\|\un(g)\|$ for all $p$ and $g$. 
\end{defin}    
A measurable unitary representation of a locally compact Hausdorff group is continuous. See Appendix IV in [\citeonline{Wightman}] for a proof, which can be used to show that $h\mapsto Q(p,h)$ is a continuous representation of the stabilizer $G_p$ of $p$ if the operators $Q(p,h)$ satisfy the conditions in the above definition.      

\section{SOME GENERAL CONSEQUENCES OF COVARIANCE}
\label{General}
First we derive a simple expression for the states that the field operators create from the vacuum. The derivation uses only the representation of translations, which is unitary by the assumption that the metric operator commutes with it, so the indefinite inner product does not play any role.   
\begin{theor}\label{matrix_integral}
Let $\f$ be a field on $\smallint^\oplus_{\transd}\d\mu(p)H_p$, where $\mu$ is a Borel measure and $H_p$ are separable Hilbert spaces with scalar product $\langle\,,\rangle_p$. Suppose that $\f$ is covariant under the representation $(\un(a)\vf)(p)=e^{\im a\cdot p}\vf(p)$ of the translation group $\trans$. Let $\vac$ be in the common domain of the field operators $\f(f)$ and $\un(a)\vac=\vac$ for all $a\in\trans$. Then there are vectors $\Gamma(p)\in H_p$ such that $p\mapsto\Gamma(p)$ is weakly Borel and for any $f\in\sch(\mink)$,
\[
\big(\f(f)\vac\big)(p)=\hat{f}(p)\,\Gamma(p).
\]  
\end{theor}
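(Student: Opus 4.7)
The plan is to show that $\Psi_f(p) := \big(\f(f)\vac\big)(p)$, viewed as the weakly Borel section representing $\f(f)\vac$ in the direct integral, factors as $\hat f(p)$ times an $f$-independent weakly Borel section $\Gamma$. The route has three steps: extract a functional equation from translation covariance, promote it to a cross-relation between $\Psi_f$ and $\Psi_g$ via convolution, and then divide out a nowhere vanishing Fourier transform.

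First, I would combine vacuum invariance $\un(a)\vac=\vac$ with translation covariance $\un(a)\f(f)\un(a)^{-1}=\f(\{a\}f)$ on $D$ to obtain $\un(a)\Psi_f=\Psi_{\{a\}f}$ in $\hil$. Since $(\un(a)\vf)(p)=e^{\im a\cdot p}\vf(p)$ in the spectral representation, this is equivalent to
\[
\Psi_{\{a\}f}(p)=e^{\im a\cdot p}\,\Psi_f(p)\qquad\text{for $\mu$-a.e.\ $p$,}
\]
for every $a\in\mink$ and $f\in\sch(\mink)$.

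Next, writing the convolution as $f*g=\midint\d y\,f(y)\,\{y\}g$, I would use the continuity of the operator-valued distribution $h\mapsto\f(h)\vac$ together with the strong continuity of the unitary translations to interchange the field with the integral:
\[
\Psi_{f*g}=\midint\d y\,f(y)\,\un(y)\Psi_g.
\]
Evaluating at $p$ inside the direct integral (Fubini) and using the previous relation gives $\Psi_{f*g}(p)=\hat f(p)\,\Psi_g(p)$ for $\mu$-a.e.\ $p$, and symmetry in $(f,g)$ yields the cross-relation
\[
\hat g(p)\,\Psi_f(p)=\hat f(p)\,\Psi_g(p)\qquad\text{for $\mu$-a.e.\ $p$.}
\]

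Finally, fix once and for all a $g_0\in\sch(\mink)$ with $\hat g_0$ nowhere vanishing (a Gaussian, for instance) and define
\[
\Gamma(p):=\Psi_{g_0}(p)/\hat g_0(p),
\]
which is weakly Borel as the quotient of a weakly Borel $H_p$-valued section by a continuous scalar with no zeros. The cross-relation then reads $\Psi_f(p)=\hat f(p)\,\Gamma(p)$ for $\mu$-a.e.\ $p$, which is precisely the claimed identity in the direct integral.

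The main technical obstacle is justifying the interchange $\f(f*g)\vac=\midint\d y\,f(y)\,\un(y)\f(g)\vac$. I would handle it by pairing with an arbitrary $\vf\in\hil$: the scalar function $y\mapsto\langle\vf,\un(y)\f(g)\vac\rangle$ is bounded and continuous (by strong continuity of the unitary $\un$), hence integrable against $f\in\sch(\mink)$, and the identity $\langle\vf,\f(f*g)\vac\rangle=\midint\d y\,f(y)\,\langle\vf,\un(y)\f(g)\vac\rangle$ follows from the continuity of the scalar distribution $h\mapsto\langle\vf,\f(h)\vac\rangle$ applied to the decomposition $f*g=\midint\d y\,f(y)\,\{y\}g$ in $\sch(\mink)$. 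Since $\vf$ is arbitrary, this identifies the weakly defined integral with $\Psi_{f*g}$ as an element of $\hil$, and Fubini then justifies the fibrewise evaluation in the direct integral.
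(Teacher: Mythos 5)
Your proof is correct, and it takes a genuinely different and considerably more elementary route than the paper's. The paper proceeds through the two-point function: the Schwartz kernel and Bochner--Schwartz theorems give $\|\f(f)\vac\|^2$ as the Fourier transform of a tempered measure, the Riesz--Markov theorem represents $f\mapsto\langle\vf,\f(\check f)\vac\rangle$ by complex measures on the balls $B_k$, absolute continuity of these measures with respect to $\mu$ is extracted by comparing the spectral calculus of $\un(a)$ with the covariance of $\f$, and $\Gamma$ is assembled from the resulting Radon--Nikodym densities by a patching argument over $k$ and over a basis of $H$. Your convolution identity $\f(f*g)\vac=\hat f\cdot\f(g)\vac$, where $\hat f\cdot$ denotes multiplication by $\hat f(p)$ in the direct integral, short-circuits all of this: commutativity of convolution yields $\hat g\,\Psi_f=\hat f\,\Psi_g$ almost everywhere, and division by the nowhere-vanishing Fourier transform $\hat g_0$ of a Gaussian produces an $f$-independent weakly Borel $\Gamma$ with the absolute continuity with respect to $\mu$ built in. The one point to tighten is the continuity of $h\mapsto\langle\vf,\f(h)\vac\rangle$ for ``arbitrary $\vf\in\hil$'': the axioms only give continuity of $h\mapsto(\psi,\f(h)\vac)=\langle\eta\psi,\f(h)\vac\rangle$ for $\psi\in D$, hence continuity of $h\mapsto\langle\vf,\f(h)\vac\rangle$ only for $\vf$ in the dense subspace $\eta D$. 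That is still sufficient: the bound $\big|\int\d y\,f(y)\,\langle\vf,\un(y)\f(g)\vac\rangle\big|\leq\|f\|_{L^1}\,\|\f(g)\vac\|\,\|\vf\|$ shows that the weak integral defines a vector of $\hil$, agreement with $\f(f*g)\vac$ on the dense set $\eta D$ identifies the two vectors, and the fibrewise evaluation by Fubini then goes through exactly as you describe. (The paper instead obtains the estimate $|\langle\vf,\f(f)\vac\rangle|\leq\|\vf\|\,W(\bar f*f_-)^{1/2}$ for every $\vf\in\hil$ from Cauchy--Schwarz, which is what its longer route buys.)
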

\begin{proof}
Let $\sigma\subset\transd$ be any component of the partition of $\transd$ in terms of which the direct integral is defined. It is enough to prove that $\big(\mathcal{P}\f(f)\vac\big)(p)=\hat{f}(p)\,\Gamma(p)$ with some $\Gamma$, where $\mathcal{P}$ is the orthogonal projection to the subspace of vectors $\vf\in\smallint^\oplus_{\transd}\d\mu(p)H_p$ for which $\vf(p)=0$ if $p\notin\sigma$. This subspace is $L^2(\transd\!,H,\tilde{\mu})$ with scalar product $\langle\vf,\psi\rangle=\smallint_{\transd}\d\tilde{\mu}(p)\,\langle\vf(p),\psi(p)\rangle_H$, where $H$ is a separable Hilbert space with scalar product $\langle\cdot\,,\cdot\rangle_H$, and $\tilde{\mu}(E)\coloneqq\mu(E\cap\sigma)$. 

The map $(f,g)\mapsto\langle\f(\bar{f})\vac,\f(g)\vac\rangle$ is separately continuous on $\sch(\mink)\times\sch(\mink)$, so by the Schwartz kernel theorem, it uniquely extends to a continuous linear map $\mathcal{W}$ on $\sch(\mink\times\mink)$. Translational covariance of $\f$ and the invariance of $\vac$ imply that $\mathcal{W}(x,y)=\mathcal{W}(x-a,y-a)$. Therefore there is a distribution $W\in\sch(\mink)$ such that $\mathcal{W}(x,y)=W(x-y)$. Since $0\leq\|\f(f)\vac\|^2=W(\bar{f}*f_-)$, where $f_-(x)=f(-x)$, the distribution $W$ is positive definite. By the Bochner-Schwartz theorem, $W$ is the Fourier transform of a tempered Borel measure $\nu$ on $\minkd$, so $W(f)=\smallint_{\minkd}\d\nu(p)\,\hat{f}(p)$ and there is a number $r\geq0$ such that $\smallint_{\minkd}\d\nu(p)\,(1+|p|^2)^{-r}<\infty$.  

Using the Cauchy-Schwarz inequality, we get
\eq{\label{CS}|\langle\vf,\f(f)\vac\rangle|^2\leq\|\vf\|^2\,\|\f(f)\vac\|^2=\|\vf\|^2\,\smallint_{\minkd}\d\nu(p)\,|\hat{f}(-p)|^2}
for any $\vf\in L^2(\transd\!,H,\tilde{\mu})$ and $f\in\sch(\mink)$. Since $\nu$ is tempered, $\nu(B_k)<\infty$, where $B_k\subset\transd$ is the closed ball of radius $k\in\mathbb{N}$ centered at the origin. Define the following linear maps on the space of smooth functions supported within $B_k$:
\[
\psi_{\vf,k}:f\mapsto\langle\vf,\f(\check{f})\vac\rangle,
\]
where $\check{f}$ is the inverse Fourier transform of the extension of $f$ by zero to $\minkd$. Let $C(B_k)$ be the space of continuous functions on $B_k$ with the topology of uniform convergence. It follows from inequality \eqref{CS} that $\psi_{\vf,k}$ uniquely extends to a continuous map $\tilde{\psi}_{\vf,k}$ on $C(B_k)$. The Riesz-Markov representation theorem implies that there is a complex regular Borel measure $\rho_{\vf,k}$ on $B_k$ of finite total variation such that 
\[
\tilde{\psi}_{\vf,k}(f)=\smallint_{B_k}\d\rho_{\vf,k}(p)\,f(p).
\]

Let $\charf_E$ be the characteristic function of the Borel set $E\subset B_k$ and $\check{\charf}_E$ its inverse Fourier transform. Define $\d\alpha_{\vf,\psi}(p)\coloneqq\langle\vf(p),\psi(p)\rangle_p\d\mu(p)$. For any $\vf,\psi\in L^2(\transd\!,H,\tilde{\mu})$,
\[
\langle\vf,U(a)\psi\rangle=\smallint_{\transd}\d\alpha_{\vf,\psi}(p)\,e^{\im a\cdot p},\;\;\;\;\mbox{where}\;\;\alpha_{\vf,\psi}(F)=\langle\varphi,P(F)\psi\rangle.
\]
where $F\subset\transd$ is a Borel set and $P(F)$ is the corresponding spectral projection of the translation generator. We have 
\eq{\label{TransState}\begin{aligned}
\smallint_{\trans}\d a\,\check{\charf}_E(a)\langle\vf,\un(a)\f(f)\vac\rangle&=\smallint_{\trans}\d a\,\check{\charf}_E(a)\smallint_{\transd}\d\alpha_{\vf,\f(f)\vac}(p)\,e^{\im a\cdot p}=\smallint_{\transd}\d\alpha_{\vf,\f(f)\vac}(p)\smallint_{\trans}\d a\,\check{\charf}_E(a)e^{\im a\cdot p}\\&=\smallint_E\d\alpha_{\vf,\f(f)\vac}(p)=\langle P(E)\vf,\f(f)\vac\rangle,
\end{aligned}}
where Fubini's theorem allowed us to exchange the order of the integrations. The conditions for Fubini's theorem are satisfied since the Cauchy-Schwarz inequality implies that the total variation of $\alpha$ is a finite measure. On the other hand, using the covariance of $\f$ and the invariance of $\vac$, we can also write
\eq{\label{TransField}\begin{aligned}
\smallint_{\trans}\d a\,\check{\charf}_E(a)\,\langle\vf,\un(a)\f(f)\vac\rangle&=\smallint_{\trans}\d a\,\check{\charf}_E(a)\,\langle\vf,\f(\{a\}f)\vac\rangle=\smallint_{\trans}\d a\,\check{\charf}_E(a)\smallint_{B_k}\d\rho_{\vf,k}(p)\,e^{\im a\cdot p}\hat{f}(p)\\&=\smallint_{B_k}\d\rho_{\vf,k}(p)\,\hat{f}(p)\smallint_{\trans}\d a\,\check{\charf}_E(a)\,e^{\im a\cdot p}=\smallint_{E}\d\rho_{\vf,k}(p)\,\hat{f}(p),
\end{aligned}}
where Fubini's theorem was used to get the first expression in the second line. 

Let $(v_n)_{n\in\mathbb{N}}$ be an orthonormal topological basis in $H$ and $\xi\in L^2(\transd\!,\mathbb{C},\tilde{\mu})$ such that it vanishes almost everywhere on a Borel set $E\subset B_k$. Then $P(E)\,\xi v_n=0$ for any basis element $v_n$. For an arbitrary Borel set $F\subset B_k$, there is a sequence $f_m\in\sch(\mink)$ such that the supports of $\hat{f}_m$ is contained by $B_k$ and $\hat{f}_m$ converges to $\charf_E$ in $L^1(B_k,\mathbb{C},\rho_{\xi v_n,k})$. Taking the $m\to\infty$ limit of equations \eqref{TransState} and \eqref{TransField} with the substitution $\vf=\xi v_n$ and $f=f_m$, we get $\rho_{\xi v_n,k}(E\cap F)=0$. We conclude that $|\xi|\tilde{\mu}(E)=0$ implies that $|\rho_{\xi v_n,k}|(E)=0$, where $|\rho_{\xi v_n,k}|$ is the total variation of $\rho_{\xi v_n,k}$. The Radon-Nikodym theorem for complex measures tells us that there is a $|\xi|\tilde{\mu}$-integrable complex valued function $g_{\xi v_n,k}$ such that $\rho_{\xi v_n,k}=|\xi|\,g_{\xi v_n,k}\,\tilde{\mu}$, or $\rho_{\xi v_n,k}=\bar{\xi}\,G_{\xi,n,k}\,\tilde{\mu}$, where $G_{\xi,n,k}(p)=g_{\xi v_n,k}(p)\,\arg\xi(p)$ if $\xi(p)\neq 0$ and $G_{\xi,n,k}(p)=0$ otherwise.    

To show that $G_{\xi,n,k}$ can be changed on a set of $\tilde{\mu}$-measure zero if necessary so that it becomes independent of $\xi$, we choose an $f\in\sch(\mink)$ such that $\hat{f}|_{B_k}=1$. Then 
\[\xi\mapsto\langle\tilde{\xi}v_n,\f(f)\vac\rangle=\smallint_{B_k}\d\tilde{\mu}\,\bar{\xi}\,G_{\tilde{\xi},n,l}\]
is a continuous sesquilinear map on $L^2(B_k,\mathbb{C},\tilde{\mu})$, where $l>k$ and $\tilde{\xi}$ is the extension of $\xi$ to $\transd$ by zero. It follows from the Riesz representation theorem that there is a vector $\Gamma_{f,n,k}$ in $L^2(B_k,\mathbb{C},\tilde{\mu})$ such that
\[\langle\tilde{\xi}v_n,\f(f)\vac\rangle=\smallint_{B_k}\d\tilde{\mu}\,\bar{\xi}\,\Gamma_{f,n,k}\]
for all $\xi\in L^2(B_k,\mathbb{C},\tilde{\mu})$, so $G_{\tilde{\xi},n,l}=\Gamma_{f,n,k}$ almost everywhere on $B_k$. Thus $\Gamma_{f,n,k}$ can be chosen so that it is independent of $f$. Let $\Gamma_{n,k}$ be the collection of such choices. Note that $\Gamma_{n,l}|_{B_k}=\Gamma_{n,k}$ almost everywhere for any $l\geq k$.
For any $\xi\in L^2(\transd\!,\mathbb{C},\tilde{\mu})$, we have 
\[\langle\xi v_n,\f(f)\vac\rangle=\lim_{k\to\infty}\smallint_{B_k}\d\tilde{\mu}\,\hat{f}\,\bar{\xi}\,G_{\tilde{\xi}_k,n,k+1}=\lim_{k\to\infty}\smallint_{B_k}\d\tilde{\mu}\,\hat{f}\,\bar{\xi}\,\Gamma_n,\]
where $\xi_k=\xi|_{B_k}$, so $\tilde{\xi}_k=\xi\,\charf_{B_k}$, and $\Gamma_n\in L^2_\mathrm{loc}(\transd\!,\mathbb{C},\tilde{\mu})$ is defined by the relation $\Gamma_n|_{B_k}=\Gamma_{n,k}$.

From the above equation and
\[
\langle\xi v_n,\f(f)\vac\rangle=\smallint_{\transd}\d\tilde{\mu}(p)\,\bar{\xi}(p)\,\langle v_n,(\f(f)\vac)(p)\rangle_H,
\]
we get $\langle v_n,(\f(f)\vac)(p)\rangle_H=\hat{f}(p)\Gamma_n(p)$. Since $\big(P\f(f)\vac\big)(p)=\sum_n\langle v_n,(\f(f)\vac)(p)\rangle_H\,v_n$, the series of $\Gamma_n(p)\,v_n$ is strongly convergent in $H$ for almost all $p$. Define $\Gamma(p)$ as the limit of this series if it exists, that is, $\Gamma(p)=\sum_n\Gamma_n(p)\,v_n$, and for the sake of definiteness, let $\Gamma(p)=0$ at the elements $p$ of a zero $\tilde{\mu}$-measure subset of $\transd$ for which the series is divergent. More precisely, $\Gamma$ is the equivalence class of this function. 
\end{proof}
The purpose of the next lemma is to isolate the more technical parts of the argument that establishes our main result so that we can keep the proofs in the later sections of the paper as elementary as possible. Recall that we have not solved certain measure theoretic difficulties that arose in the study of pseudo-unitary representations. We just assume that such pathologies, if they are possible, are absent in physically relevant representations. Definition \ref{DefNonsing} encapsulates what we consider acceptable. In the discussion of covariant fields, we encounter complications of similar nature. Since we hardly have any insight into the structure of interacting fields, in this case it would be inappropriate to postulate that fields exhibiting such irregularities are not interesting from a physical point of view. That is why we decided to pay full attention to the measure theoretic details in this part of the analysis. Since it does not take much more elaboration to get the result for groups other than dilations and Poincar\'e transformations, the statement is formulated and proved for a class of groups which is more general than what we need. 

The lemma asserts that a certain relation holds for almost all elements of an orbit of a group $G$ in $\transd$ and for almost all elements of some subset of $G$. To make sense of these statements, we need to specify the measures with respect to which they are meant. The measure $\mu$ on $\transd$ will be quasi-invariant, so the measure on the orbits is simply $\mu$. As for $G$, it is natural to consider a Haar measure, the existence of which is guaranteed by the topological conditions on $G$. Then there exists a Haar measure on the stabilizer subgroups $G_p$ of the elements $p$ of $\transd$ as well, which naturally gives rise to a measure on the right cosets $gG_p$. Indeed, in terms of a left invariant Haar measure $\nu$ on $G_p$, the measure $\nu_p(E)\coloneqq\nu(g^{-1}E)$ is well-defined for any Borel set $E\subset gG_p$ since $\nu(g^{-1}E)$ is independent of the choice of $g$ by the invariance of $\nu$.              
\begin{lem}\label{DQ}
Let $\f$ be a field on $\smallint^\oplus_{\transd}\d\mu(p)H_p$, where $\mu$ is a Borel measure and $H_p$ are separable Hilbert spaces with scalar product $\langle\cdot\,,\cdot\rangle_p$. Suppose that $\f$ is covariant under a nonsingular pseudo-unitary representation $\un$ of the semidirect product of the translation group $\trans$ and $G$, where $G$ is a locally compact second-countable group. The representation of $G$ that transforms the field components will be denoted by $S$. Let $\un$ be put into the form given by Eq.~\eqref{uni_general} with the operators $Q(p,g)$ satisfying Eq.~\eqref{MultRuleGen} and $Q(p,1)=\mathbbm{1}$ everywhere. Say that there is a translation invariant vector $\vac$ in the common domain of the field operators $\f(f)$. Let $\Gamma_i(p)$ be the vectors in terms of which $\big(\f_i(f)\vac)(p)=\hat{f}(p)\,\Gamma_i(p)$ by Theorem \ref{matrix_integral}. Then for all $f\in\sch(\mink)$ and almost all $p$, it is true for almost all $k$ which are in the same orbit as $p$ that the equality
\eq{\label{u_transform_lemma}
\sqrt{\sfr{\d\mu(gp)}{\d\mu(p)}}\,(\widehat{\{g\}f})(k)\,\Gamma_i(k)=\hat{f}(p)\smallsum_jS_{ij}(g)\,Q(k,g)\,\Gamma_j(p)
}
holds for almost all $g$ in the right coset $g'G_p$ of the little group $G_p$ of $p$, where $g'p=k$. Furthermore, the metric operators $\eta(p)$ can be chosen so that for all $g$ and $p$,
\eq{\label{Qisometric}Q(gp,g)^*\eta(gp)\,Q(gp,g)=\eta(p).}
 \end{lem}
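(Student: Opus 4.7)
The proof has two essentially independent parts: the derivation of the pointwise identity (\ref{u_transform_lemma}) from Theorem~\ref{matrix_integral} and the field covariance axiom, and the construction of a representative of $\eta$ that makes (\ref{Qisometric}) hold everywhere. First, I apply the covariance rule to $\f_i(f)\vac$: since $\vac$ is translation invariant, and $\un(g)\vac$ differs from $\vac$ at worst by a phase, this produces an identity of the form $\un(g)\f_i(f)\vac=\smallsum_j S_{ij}(g^{-1})\f_j(\{g\}f)\vac$. Using Theorem~\ref{matrix_integral} to rewrite both sides as $\hat f(\cdot)\,\Gamma_j(\cdot)$ and evaluating the left-hand side with the direct-integral formula (\ref{uni_general}) yields an equation between two elements of $\smallint^\oplus_{\transd}\d\mu(p)H_p$. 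A change of dummy variable, essentially the substitution $g\to g^{-1}$ that reconciles the field-axiom convention $S(g^{-1})$ with the lemma's convention $S(g)$, brings it into the form (\ref{u_transform_lemma}). By definition of the direct integral, this identity reduces to a pointwise equation holding for $\mu$-a.e.~$p$ at each fixed~$g$.

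The second ingredient is a Fubini-plus-disintegration argument that rearranges the quantifiers. Joint Borel measurability of $(p,g)\mapsto Q(p,g)$ postulated in Definition~\ref{DefNonsing}, together with the weakly Borel nature of $\Gamma_j$ inherited from the direct-integral structure, makes the subset of $G\times\transd$ on which (\ref{u_transform_lemma}) fails a Borel set whose $p$-sections are $\mu$-null for every $g$. Fubini applied with left Haar measure on $G$ therefore transfers the nullity to the $g$-sections for $\mu$-a.e.~$p$. For such a~$p$, I disintegrate the Haar measure on $G$ along the orbit map $\pi_p\colon g\mapsto gp$: because $G$ is locally compact second-countable and $G_p$ is closed, $\pi_p$ factors through $G/G_p$, the push-forward of Haar is equivalent to the restriction of $\mu$ to the orbit $Gp$, and the fibrewise measures are precisely the coset measures $\nu_p$ introduced immediately before the lemma. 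A second application of Fubini on $Gp\times G_p$ then converts ``Haar-a.e.~$g\in G$'' into ``$\mu$-a.e.~$k\in Gp$, $\nu_p$-a.e.~$g\in g'G_p$ with $g'p=k$'', which is the quantifier structure announced in the statement.

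For (\ref{Qisometric}), pseudo-unitarity of $\un(g)$ combined with (\ref{uni_general}) immediately yields (\ref{isom0}), i.e.\ the $\mu$-a.e.\ form of (\ref{Qisometric}) at each fixed~$g$. The same Fubini argument gives, for $\mu$-a.e.~$p$, that the identity holds for Haar-a.e.~$g\in G$; specialising to $h$ in the stabiliser and using that $h\mapsto Q(p,h)$ is a continuous representation of $G_p$ (the measurability-implies-continuity result noted after Definition~\ref{DefNonsing}), I obtain the little-group invariance $Q(p,h)^*\eta(p)Q(p,h)=\eta(p)$ for every $h\in G_p$. I then select a Borel cross-section of the orbit space, a basepoint $p_\star$ in each orbit, and a Borel assignment $p\mapsto g_p$ with $g_pp_\star=p$, and redefine
\[\eta(p)\;\coloneqq\;\bigl(Q(p,g_p)^{-1}\bigr)^*\eta(p_\star)\,Q(p,g_p)^{-1}.\]
The cocycle identity (\ref{MultRuleGen}), valid everywhere by nonsingularity, together with the little-group invariance of $\eta(p_\star)$, then yields (\ref{Qisometric}) for every $g$ and every $p$, and the redefinition changes $\eta$ only on a $\mu$-null set so that the inner product on $\hil$ is unaffected. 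The main obstacle throughout is not algebraic but measure-theoretic: the joint Borel measurability of the failure sets, the orbit-disintegration of Haar measure matching the coset measures $\nu_p$, the existence of a Borel cross-section for the orbit space, and the continuity-extension to the full stabiliser. All of these are made available by the nonsingularity hypothesis of Definition~\ref{DefNonsing} together with the standard Mackey-theoretic structure of locally compact second-countable groups.
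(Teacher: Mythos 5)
Your proof is correct and follows essentially the same route as the paper's: the covariance/pseudo-unitarity identity applied to $\f(f)\vac$ yields the a.e.-in-$p$ statement for each fixed $g$, a Fubini argument over $G\times\transd$ reverses the quantifiers, the fibration of the relevant measure on $G$ over the orbit produces the stated ``for a.e.\ $k$, for a.e.\ $g\in g'G_p$'' structure, and $\eta$ is transported from a basepoint along each orbit using the everywhere-valid cocycle identity together with the little-group invariance obtained by continuity of $h\mapsto Q(p,h)$ on $G_p$. The only differences are technical bookkeeping: where you invoke the standard disintegration of Haar measure over $G/G_p$ (and the equivalence of its push-forward with $\mu$ on the orbit), the paper instead constructs the fibred measure $\rho(E)=\smallint_{\transd}\d\hat{\mu}(p)\smallint_{G}\d\hat{\nu}(g)\,\Uptheta_E(V_{p\leftarrow k}g)$ by hand and verifies its measurability directly, and it records explicitly --- as your Fubini step also silently requires --- that the Radon--Nikodym derivative $\d\mu(gp)/\d\mu(p)$ can be chosen jointly Borel in $(p,g)$, a fact it takes from Mackey.
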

\begin{proof}
The identity $\langle\vf,\eta\f(f)\vac\rangle=\langle\un(g)\vf,\eta\f(\{g\}f)\vac\rangle$ implies that for any $\vf\in\smallint^\oplus_{\transd}\d\mu(p)H_p$, $f\in\sch(\mink)$, and $g\in G$,
\[\begin{aligned}\midint_{\!\!\transd\;}\d\mu(p)\,\hat{f}(p)\,&\langle\vf(p),\eta(p)\Gamma_i(p)\rangle_p\\&=\midint_{\!\!\transd\;}\d\mu(p)\sqrt{\sfr{\d\mu(g^{-1}p)}{\d\mu(p)}}\,(\widehat{\{g\}f})(p)\smallsum_jS_{ij}(g^{-1})\,\langle Q(p,g)\vf(g^{-1}p),\eta(p)\Gamma_j(p)\rangle_p.\end{aligned}\]
Changing the integration variable from $p$ to $gp$, we conclude that for all $g\in G$, 
\[
\sqrt{\sfr{\d\mu(gp)}{\d\mu(p)}}\,(\widehat{\{g\}f})(gp)\smallsum_jS_{ij}(g^{-1})\,Q(gp,g)^*\,\eta(gp)\Gamma_j(gp)-\hat{f}(p)\,\eta(p)\Gamma_i(p)
\]
is zero for almost all $p$, or by using Eq.~\eqref{Qisometric}, which holds for almost all $p$,
\eq{\label{def_d}d_i(p,g)\coloneqq\sqrt{\sfr{\d\mu(gp)}{\d\mu(p)}}\,(\widehat{\{g\}f})(gp)\,\Gamma_i(gp)-\hat{f}(p)\smallsum_jS_{ij}(g)\,Q(k,g)\,\Gamma_j(p)
}
is zero for almost all $p$.

Now we have to resolve a measure theoretic technicality. In order to proceed, it would be helpful to be able to assert that for almost all $p$, the equality $d_i(p,g)=0$ holds for all $g$. Such a reversal of the universal quantifiers would be possible if failure of this equation was not allowed, not even on a zero measure set of $p$. But in this case it is not automatic, since the set of momenta for which the equation fails to hold may vary with $g$ in such a way that the union of these sets over the uncountably many group elements may not be of zero measure.

Let $\rho$ be a $\sigma$-finite Borel measure on $G$. Take an orbit $\sigma$ of $G$ in $\transd$. Define $d_n$ by
\eq{\label{def_dn}
d_n:(p,g)\mapsto\langle v_n,d(p,g)\rangle_p}
if $p\in\sigma$. The vectors $v_n$ form an orthonormal (topological) basis in $H_p$, which is the same Hilbert space for all $p\in\sigma$. Let $d_n(p,g)\coloneqq 0$ if $p\notin\sigma$. The function $d(p,g)$ in Eq.~\eqref{def_dn} is defined by Eq.~\eqref{def_d}, with the field index $i$ suppressed. Suppose that $d_n$ are Borel functions on $G\times\transd$, which we will show shortly. We already know that
\eq{\label{int_orig}
\smallint_G\d\rho(g)\,\smallint_{\transd}\d\mu(p)\,|d_n(p,g)|=0
}
since the integral with respect to $p$ vanishes. By Fubini's theorem, we also have
\eq{\label{int_rev}
\smallint_{\transd}\d\mu(p)\,\smallint_G\d\rho(g)\,|d_n(p,g)|=0,
}
which is possible only if for almost all $p$, the equality $d_n(p,g)=0$ holds for almost all $g$. The measure of countably many zero measure sets is zero, so we can replace $d_n(p,g)$ by the vector $d(p,g)$ itself in this statement. This is what we want to achieve. 

First we define a measure on $G$. Let $H$ be a closed subgroup of $G$. According to Lemma 1.1 in [\citeonline{Mackey3}], there is a Borel set $B$ in $G$ such that $B$ intersects each right $H$ coset in exactly one point. By a theorem due to Lusin and Suslin, if $X$ and $Y$ are Polish spaces and $f:X\to Y$ is continuous, then $f(E)$ is Borel for any Borel subset $E$ of $X$ such that the restriction of $f$ to $E$ is injective. Since $\sigma$ is homeomorphic to the space of right cosets, it follows from this theorem that if $H$ is the little group of a fixed element $k\in\transd$, then the map $\sigma\to G$, $p\mapsto V_{p\leftarrow k}$ is Borel measurable, where $V_{p\leftarrow k}\in B$ such that $p=V_{p\leftarrow k}k$. There is one and only one such element for each $p\in\sigma$. The subgroup $H$ inherits the topological properties of $G$ sufficient for the existence of a Haar measure $\nu$. Let $\hat{\nu}$ be its extension to $G$, that is, $\hat{\nu}(E)\coloneqq\nu(E\cap H)$ for any Borel set $E\subset G$. We will also use the measure $\hat{\mu}(F)\coloneqq\mu(F\cap\sigma)$ on the Borel sets $F$ of $\transd$. Let $E$ be a Borel set of $G$ and define
\[
\rho(E)=\midint_{\!\!\transd}\d\hat{\mu}(p)\midint_{G}\d\hat{\nu}(g)\,\Uptheta_E(V_{p\leftarrow k}g),
\]
where $\Uptheta_E$ is the characteristic function of $E$. The $\hat{\nu}$-integral is well-defined, but we should also demonstrate that the result is a measurable function of $p$. It is enough to show that for each $n$ the map $(p,g)\mapsto\gamma_n(p,g)=\Uptheta_n(V_{p\leftarrow k}g)$ is $\hat{\mu}\times\hat{\nu}$-measurable, where $\Uptheta_n$ is a sequence of continuous functions that converges to $\Uptheta_E$ in $L^1(G,\hat{\nu})$ as $n$ goes to infinity. There is always such a sequence since $G$ is metrizable and $\hat{\nu}$ is a Borel measure. We adapt a known argument which proves that a real valued function on $\mathbb{R}^2$ is Borel if it is measurable in its first variable and continuous in the second one. There is a metric that induces the topology of $G$. Since $G$ is separable, for any $\ell>0$ it can be covered by countably many balls of radius $1/\ell$. Let us label these balls by natural numbers. Define $\gamma_n^\ell(p,h)\coloneqq\gamma_n(p,h_c)$, where $h_c$ is the center of the ball that contains $h$ and has the smallest label. The pointwise limit of $\gamma_n^\ell$ is $\gamma_n$ as $\ell\to\infty$. Since each $\gamma_n^\ell$ is $\hat{\mu}\times\hat{\nu}$-measurable, so is $\gamma_n$. By Beppo-Levi's theorem, $\rho$ is $\sigma$-additive. Therefore $\rho$ is a Borel measure on $G$. We also want $\rho$ to be $\sigma$-finite. We do not have to check this property since $\hat{\mu}$ and $\hat{\nu}$ are $\sigma$-finite measures, so they can be replaced by equivalent finite measures, in which case $\rho$ is even finite.

The Borel measurability of $d_n$ defined by Eq.~\eqref{def_dn} can be demonstrated as follows. Since $(p,g)\mapsto Q(p,g)$ is assumed to be weakly Borel and $\Gamma_i$ is also weakly Borel, we only have to check the measurability of the Radon-Nikodym derivative as the function of $(p,g)$. Of course it is measurable as the function of $p$ for each $g$, but we need joint measurability in order to apply Fubini's theorem, which was essential in the derivation of Eq.~\eqref{int_rev}. There is a quasi-invariant measure on any orbit such that its Radon-Nikodym derivative has this property.\cite{Mackey3} Since all quasi-invariant measures on an orbit are equivalent to each other, this is true for all of them because any of them is the product of any other and a measurable function. (The Radon-Nikodym derivative can be freely chosen on any zero measure subset of $\transd$ for all $g\in G$, so its joint measurability means that it becomes a Borel function on $\transd\times G$ by appropriate choices.)

Since $\un$ is pseudo-unitary with respect to $\eta$, for all $g\in G$ equation \eqref{Qisometric} is satisfied for almost all $p$. We would like to conclude that for almost all $p$, this equality holds for a large enough subset of $G$. We follow the same procedure as in the first part of the proof. This time we put the Borel function $d_n(p,g)=\langle v_n,[Q(gp,g)^*\eta(gp)Q(gp,g)-\eta(p)]v_n\rangle_H$, where $v_n$ are the elements of an orthonormal (topological) basis in $H$. We know that \eqref{int_orig} holds, from which \eqref{int_rev} follows by Fubini's theorem, so for almost all $p$ equation \eqref{Qisometric} holds for almost all $g$. This implies that for almost all $p$, it is true that for almost all $k\in\transd$, equation \eqref{Qisometric} holds for almost all $h$ in the stabilizer $G_p$ if $g$ is replaced by $gh$, where $g$ is a fixed element such that $gp=k$ and Eq.~\eqref{Qisometric} holds. (The measure on $G_p$ is a left Haar measure.) Using the multiplication rule $Q(ghp,gh)=Q(gp,g)\,Q(p,h)$, we get for almost all $p$ that $Q(p,h)^*\eta(p)\,Q(p,h)=\eta(p)$ for almost all $h\in G_p$, and so for all $h\in G_p$ since $h\mapsto Q(p,h)$ is weakly continuous on $G_p$ and a set whose complement is of zero Haar measure is dense. Equation \eqref{Qisometric} can be used to define all $\eta(k)$ in terms of $\eta(p)$. The relation $Q(p,h)^*\eta(p)\,Q(p,h)=\eta(p)$ guarantees that $\eta(k)$ is well-defined because it does not depend on the choice of $g$ that brings $p$ into $k$.
\end{proof}

\section{FIELDS SATISFYING THE FREE FIELD EQUATION}\label{Reeh-schlieder}
Let $\f$ be field in a covariant quantum field theory whose commutator is a $c$-number:
\[
[\f(x),\f(y)]=\im\Updelta(x-y)\mathbbm{1}.
\]
The field $\f^+$ is defined by the relation $\f^+(f)\coloneqq\f(\bar{f})^+$, where the superscript $+$ on the right hand side indicates the adjoint (involution) with respect to the possibly indefinite inner product $(\cdot\,,\cdot)$. Assume that translations commute with the metric operator. The Green's functions of $\f$, which are the vacuum expectation values of its polynomials, can be expressed in terms of $\Updelta$ and a constant $c$ given by $(\vac,\f(f)\vac)=c\smallint\!f$, where $\vac$ is the vacuum. To see this, take a monotonically increasing sequence of smooth functions $\theta_\ell:\minkd\to\mathbb{R}$ which converge pointwise to $\theta$, where $\theta(p)=0$ if $p^0\leq 0$ and $\theta(p)=1$ otherwise. Define $\f_\ell(f)=\check{\f}(\hat{f}\theta_\ell)$. (The Fourier transform of a field is defined through its matrix elements, which are distributions.) Theorem \ref{matrix_integral} implies that $\f_\ell(f)\vac\to\f(f)\vac-c\smallint\!f\,\vac$ strongly as $\ell\to\infty$. Note that $\f_{\ell-}(f)\coloneqq\f^+{}_\ell(\bar{f})^+=\check{\f}(\hat{f}\theta_{\ell-})$, where $\theta_{\ell-}(p)=\theta_\ell(-p)$, so the commutators of $\f_{\ell-}$ with $\f$ are determined by $\Updelta$. Furthermore, $\f_{\ell-}(f)\vac=0$ by the spectrum condition. The desired expression is obtained by iterating the following manipulations:
\[\begin{aligned}
(\vac,\f(f_1)\f(f_2)\dots\f(f_n)\vac)&=(\f^+(\bar{f}_1)\vac,\f(f_2)\dots\f(f_n)\vac)\\&=\lim_{\ell\to\infty}(\f^+{}_\ell(\bar{f}_1)\vac,\f(f_2)\dots\f(f_n)\vac)+c\smallint\!f\,(\vac,\f(f_2)\dots\f(f_n)\vac)\\&=\lim_{\ell\to\infty}(\vac,\f_{\ell-}(f_1)\f(f_2)\dots\f(f_n)\vac)+c\smallint\!f\,(\vac,\f(f_2)\dots\f(f_n)\vac)\\&=\lim_{\ell\to\infty}\big\{\,(\vac,[\f_{\ell-}(f_1),\f(f_2)]\dots\f(f_n)\vac)+\ldots\\&\phantom{=\lim_{\ell\to\infty}\big\{\,}+(\vac,\f(f_2)\dots[\f_{\ell-}(f_1),\f(f_n)]\vac)\,\big\}\\&\hspace{14.63em}+c\smallint\!f\,(\vac,\f(f_2)\dots\f(f_n)\vac).
\end{aligned}\]
Wightman's reconstruction theorem\cite{Streater} tells us that a covariant field theory in a positive definite inner product is determined up to an isomorphism by the Green's functions of the fields. In particular, if all the commutators are $c$-numbers for a subset of fields closed under the involution, then these fields and the representation of the spacetime symmetry are determined up to an isomorphism on the subspace generated by vectors that the polynomials of the fields create from the vacuum. There is a reconstruction theorem applicable to the case of an indefinite metric.\cite{Yngvason} Therefore it is useful to know under what conditions the commutators of fields are $c$-numbers.

For the derivation of such conditions, the following theorem will be helpful. It is a consequence of the positivity of the energy and the covariance of the fields, which imply that the Green's functions are boundary values of holomorphic functions.\cite{Streater} This remains true in an indefinite metric.\cite{Kugo}
\begin{theor*}{\rm(Reeh-Schlieder)} In a covariant quantum field theory in a possibly indefinite inner product, we have 
\[
\overline{\mathcal{P}(O)\vac}=\hil
\]
for any nonempty open subset $O$ of spacetime, where $\vac$ is the vacuum and $\hil$ is the Hilbert space on which the field operators are defined. 
\end{theor*}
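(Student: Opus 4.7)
The plan is to adapt Wightman's analytic-continuation argument to the present pseudo-Hilbert setting. By the cyclicity axiom (Axiom \ref{cyclicity}), $\overline{\mathcal{P}\vac}=\hil$, so it suffices to prove that any $\vf\in\hil$ which is Hilbert-orthogonal to $\mathcal{P}(O)\vac$ is in fact orthogonal to all of $\mathcal{P}\vac$. Because the metric operator $\eta$ is bounded with bounded inverse, the identity $\langle\alpha,\beta\rangle=(\eta^{-1}\alpha,\beta)$ lets me replace $\vf$ by $\psi\coloneqq\eta^{-1}\vf$ and work with the pseudo-inner product throughout: the hypothesis becomes $(\psi,\mathcal{P}(O)\vac)=0$ and the desired conclusion is $(\psi,\mathcal{P}\vac)=0$.

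Fix field operators $A_\ell=\Fd{n_\ell}{i_\ell}(f_\ell)$, $\ell=1,\dots,k$, with $\supp f_\ell$ contained in a small open set $\mathcal{O}_\ell\subset O$, and write $A_\ell(b)\coloneqq\un(b)A_\ell\un(b)^{-1}$ for the translate by $b\in\mink$. Consider the scalar distribution
\[
W(a_0,a_1,\dots,a_{k-1})\coloneqq\big(\psi,\un(a_0)A_1\un(a_1)A_2\un(a_2)\cdots\un(a_{k-1})A_k\vac\big).
\]
Using $\un(a)A_\ell=A_\ell(a)\un(a)$ and $\un(a)\vac=\vac$, one pushes the $\un$-factors to the right to obtain
\[
W(a_0,\dots,a_{k-1})=\big(\psi,A_1(b_0)A_2(b_1)\cdots A_k(b_{k-1})\vac\big),\qquad b_j\coloneqq a_0+a_1+\cdots+a_j.
\]
For $(a_0,\dots,a_{k-1})$ in a small real neighborhood of the origin, every shifted support $\mathcal{O}_\ell+b_{\ell-1}$ still lies inside $O$, so the vector on the right belongs to $\mathcal{P}(O)\vac$ and the orthogonality hypothesis forces $W$ to vanish on this real open set.

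On the other hand, the spectral axiom (Axiom \ref{spectrum}) applied to each of the $k$ factors $\un(a_j)$ separately---extended from $D$ to arbitrary $\psi\in\hil$ by the density of $D$ and the continuity of $(\cdot,\cdot)$ in its first argument---implies that the Fourier transform of $W$ is supported in $(\overline{V}{}^+)^k$. By the Fourier--Laplace theorem $W$ is then the distributional boundary value of a function holomorphic in a suitable tube over a convex open cone dual to $(\overline{V}{}^+)^k$, and a standard edge-of-the-wedge argument forces $W\equiv 0$. Hence $(\psi,A_1(b_0)\cdots A_k(b_{k-1})\vac)=0$ for all $b_j\in\mink$. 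Varying the $f_\ell$ over $\sch(\mathcal{O}_\ell)$ for arbitrary small neighborhoods, varying the $b_j$ over $\mink$, and ranging over all $k$ and all index choices $(n_\ell,i_\ell)$, a partition-of-unity argument shows that the resulting vectors span a dense subspace of $\mathcal{P}\vac$. Therefore $(\psi,\mathcal{P}\vac)=0$, equivalently $\langle\vf,\mathcal{P}\vac\rangle=0$, and cyclicity gives $\vf=0$.

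The step I expect to be the main obstacle is the analytic-continuation / edge-of-the-wedge argument. It is classical in the positive-definite setting, but in the pseudo-Hilbert framework one additionally needs to verify that the spectral support condition carries over from matrix elements with $\psi\in D$ to general $\psi\in\hil$. This extension follows from the density of $D$, the continuity of the pseudo-inner product in its first argument, and the fact that the tempered distributions with Fourier support in a fixed closed cone form a weak-$*$ closed subspace of $\sch'(\mink^k)$.
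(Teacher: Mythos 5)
First, a point of comparison: the paper does not actually prove this theorem --- it is quoted with references to Streater--Wightman and Kugo--Ojima --- so your argument has to stand on its own. The strategy you use (reduce the claim to ``$(\psi,\mathcal{P}(O)\vac)=0$ implies $(\psi,\mathcal{P}\vac)=0$'' with $\psi=\eta^{-1}\vf$, commute the translations through to write $W$ as a matrix element of translated field operators, note that $W$ vanishes on a real neighborhood of the origin, and kill it globally by tube analyticity) is the standard Reeh--Schlieder argument, and the reduction via boundedness of $\eta^{\pm1}$, the change of variables $b_j=a_0+\cdots+a_j$, and the vanishing of $W$ near the origin are all correct.

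The gap sits exactly at the step you flagged, and your proposed repair does not close it. Axiom \ref{spectrum} controls $a\mapsto(\vf,\un(a)\chi)$ only for $\vf,\chi\in D$, whereas your $\psi=\eta^{-1}\vf$ is an arbitrary vector of $\hil$. Approximating $\psi$ by $\psi_n\in D$ gives pointwise convergence $W_{\psi_n}(a)\to W_\psi(a)$, but pointwise convergence together with weak-$*$ closedness of the cone-supported distributions is not enough: to conclude that $W_\psi$ is tempered and that $W_{\psi_n}\to W_\psi$ in $\sch'$ you need the bound $|W_{\psi_n}(a)|\leq\|\psi_n\|\,\|\eta\|\,\|\un(a_0)A_1\cdots A_k\vac\|$ with the last factor polynomially bounded in $(a_0,\dots,a_{k-1})$, and no axiom supplies that. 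The operators $\un(a)$ are only pseudo-unitary, so their Hilbert-space norms may grow, and the polynomial bound in Axiom \ref{spectrum} is asserted only for matrix elements between vectors of $D$, with a degree that may depend on the vectors. The clean way out is the hypothesis the paper imposes everywhere else but omits from the statement of this theorem: if the metric operator commutes with translations, then $a\mapsto\un(a)$ extends to a unitary representation with spectral measure $E$ supported in $\overline{V}{}^+$, and $(\psi,\un(a)\chi)=\langle\eta\psi,\un(a)\chi\rangle=\smallint e^{\im p\cdot a}\,\d\langle\eta\psi,E(p)\chi\rangle$ has the required Fourier support and a uniform bound for \emph{all} $\psi\in\hil$. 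With that in hand the rest of your argument goes through (note that the final partition-of-unity/density step has the same $D$-versus-$\hil$ issue, since continuity of $f\mapsto(\psi,\f(f_1)\cdots\f(f_k)\vac)$ in the test functions is likewise only postulated for $\psi\in D$, and is likewise restored once $\un(a)$ is unitary). Either add this hypothesis explicitly or prove the missing polynomial bound; as written, the extension from $D$ to $\hil$ is asserted rather than established.
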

Sometimes one can prove that a local field satisfies a differential equation when it acts on the vacuum. One of the most important consequences of the Reeh-Schlieder theorem is that in this case the field satisfies the differential equation as an operator equation. More generally, we have the following corollary of the Reeh-Schlieder theorem, whose proof is available in many sources,\cite{Streater,Kugo} but since it is very short, it will also be included. 
\begin{prop}\label{reehcor} The causal complement $O'$ of any open subset $O$ of spacetime is defined as the interior of the set of points which are spacelike separated from every element of $O$. In any local covariant quantum field theory with a possibly indefinite inner product, if $O'$ is not empty and $A\in\mathcal{P}(O)$, then $A\vac=0$ implies $A=0$. In particular, if $O$ is bounded, then the only element of $\mathcal{P}(O)$ that annihilates the vacuum is the zero operator.   
\end{prop}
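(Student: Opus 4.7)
The plan is to invoke the Reeh-Schlieder theorem on the open set $O'$ and use locality to move $A$ past operators localized there. First I would take an arbitrary $B\in\mathcal{P}(O')$ and apply Axiom \ref{locality} factor by factor to the generators of $A$ and $B$: they pairwise commute or anticommute, so $[A,B]_\mp=0$ on $D$, whence $AB\vac=\pm BA\vac=0$ because $A\vac=0$. Thus $A$ annihilates the entire linear subspace $\mathcal{P}(O')\vac$, which by Reeh-Schlieder is dense in $\hil$ as soon as $O'$ is a nonempty open subset of spacetime.

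Next I would transfer this vanishing statement to $A^+$. The pseudo-adjoint $A^+$ is defined on all of $D$ because $\mathcal{P}$ is closed under the involution $X\mapsto X^+$ by Axiom \ref{conjugation} and preserves $D$. For any $\varphi\in D$ and any $B\in\mathcal{P}(O')$ one has $(A^+\varphi,B\vac)=(\varphi,AB\vac)=0$. Writing the inner product in the form $(\cdot\,,\cdot)=\langle\cdot\,,\eta\cdot\rangle$ with $\eta$ bounded makes it continuous in its second argument, so density of $\mathcal{P}(O')\vac$ in $\hil$ upgrades this to $(A^+\varphi,\psi)=0$ for every $\psi\in\hil$. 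Nondegeneracy of $(\cdot\,,\cdot)$ then forces $A^+\varphi=0$ for every $\varphi\in D$.

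To finish, for any $\psi,\varphi\in D$ the defining relation of the pseudo-adjoint gives $(\varphi,A\psi)=(A^+\varphi,\psi)=0$. Continuity of $(\cdot\,,\cdot)$ in its first argument combined with density of $D$ extends this equality to all $\varphi\in\hil$, and a second appeal to nondegeneracy yields $A\psi=0$ for every $\psi\in D$, i.e.\ $A=0$. The bounded case is automatic: any sufficiently distant spacelike translate of a bounded $O$ is contained in $O'$, so $O'$ is nonempty and the general statement applies. The whole argument is short once Reeh-Schlieder is available; the only point that deserves a little care is the twofold use of continuity and nondegeneracy of the indefinite inner product to convert vanishing on a dense subspace into vanishing on all of $\hil$, so I do not expect a substantive obstacle anywhere else.
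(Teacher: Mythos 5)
Your proposal is correct and follows essentially the same route as the paper: use locality to commute $A$ past elements of $\mathcal{P}(O')$, invoke the Reeh--Schlieder theorem to get density of $\mathcal{P}(O')\vac$, conclude $A^+\vf=0$ on $D$ via continuity and nondegeneracy of the inner product, and then pass back to $A=0$ by the adjoint relation and density of $D$. Your explicit handling of the $\pm$ sign in the (anti)commutator and of the bounded-$O$ case matches the paper's intent, which compresses these points into a single remark.
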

\begin{proof} Take a vector $\vf$ in the common domain $D$ of the elements of $\mathcal{P}$, and let $P'$ be any element of $\mathcal{P}(O')$. Then for any $A$ in $\mathcal{P}(O)$ satisfying $A\vac=0$, we have 
\[
   (P'\vac,A^+\vf) = (AP'\vac,\vf) = (P'A\vac,\vf) = 0,
\]
where the commutativity of the elements $\mathcal{P}(O)$ and $\mathcal{P}(O')$ was used. By the Reeh-Schlieder theorem, $\mathcal{P}(O')\vac$ generates the entire Hilbert space $\hil$, so $(P'\vac,A^+\vf)=0$ for all $P'$ in $\mathcal{P}(O')$ implies $A^+\vf=0$. Then $(A\psi,\vf)=(\psi,A^+\vf)=0$ for any $\psi$ and $\vf$ in $D$. Since $D$ is dense in $\hil$, $A\psi=0$.
\end{proof}

Later we will encounter fields in a local covariant quantum field theory that satisfy the wave equation on the vacuum: $\f(\Box f)\vac=0$. Then the proposition implies that $\Box\f=0$. It is sometimes not appreciated in the literature that this does not automatically mean that the field is free. By definition, a free field is also required to have the same commutator as a Fock free field. In particular, the commutator is a $c$-number. There are theorems that establish this property for nonzero mass,\cite{Jost,SchroerDiplom} in which case the wave equation is replaced by the Klein-Gordon equation, or for zero mass in a positive definite inner product.\cite{Pohlmeyer} Here we quote a more general result, which applies to massless fields in an indefinite metric:
\begin{theor*}{\rm(Greenberg-Robinson)} Let $\f$ be a field of a local covariant quantum field theory in a possibly indefinite metric. Assume that translations commute with the metric operator. If the support of $\check{\f}$ excludes a neighborhood of a spacelike point in momentum space, then the commutator of $\f$ is a $c$-number.
\end{theor*}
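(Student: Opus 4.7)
The plan is to reduce the theorem, via Proposition~\ref{reehcor}, to the statement that every truncated Wightman function of $\f$ of order at least three vanishes, and then to establish this vanishing using the support assumption on $\check\f$ combined with the spectrum condition and Lorentz covariance.

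\emph{Reduction.} Fix $f,g \in \sch(\mink)$ with supports contained in some bounded open set $O \subset \mink$ whose causal complement is nonempty, set $c(f,g) := (\vac,[\f(f),\f(g)]\vac)$, and put $C(f,g) := [\f(f),\f(g)] - c(f,g)\mathbbm{1}$. Since $C(f,g) \in \mathcal{P}(O)$, Proposition~\ref{reehcor} reduces the goal to showing $C(f,g)\vac = 0$. Testing this against an arbitrary vector $P\vac$ with $P \in \mathcal{P}$ and expanding both sides in terms of the truncated Wightman functions $W^T_m$, the identity reduces to the vanishing of all $W^T_m$ with $m \geq 3$, i.e.\ to the assertion that $\f$ is a generalized free field, whose commutator is automatically a c-number determined by the two-point function.

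\emph{Momentum-space supports.} Because the metric operator commutes with translations, these are represented by genuine unitary operators (as in the opening of Section~\ref{Representation}), and Axiom~\ref{spectrum} places the spectral projections of the translation generator in $\overline V^+$. Hence the Fourier transform of each $W^T_m$ in its difference variables has support contained in $(\overline V^+)^{m-1}$. By Lorentz covariance (Axiom~\ref{covariance}), the assumption that $\check\f$ vanishes in a neighborhood of the spacelike point $p_0$ propagates to vanishing in an open Lorentz-invariant neighborhood of the entire spacelike hyperboloid through $p_0$. Since $W^T_m$ arises by smearing $\check\f$ in each argument, its support must avoid this hyperboloid in each individual momentum variable as well.

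\emph{Jost--Lehmann--Dyson step.} The remaining task is to show that these two support conditions force $W^T_m = 0$ for $m \geq 3$. I would invoke the Jost--Lehmann--Dyson representation for the retarded commutator appearing in the truncated function, which expresses its Fourier transform as a superposition over a family of spacelike orbits of auxiliary momenta with a weight determined by the commutator matrix element. The hole in the support of $\check\f$ forces this weight to vanish on an open set of auxiliary momenta; combined with the forward-cone constraint from the spectrum condition this disconnects the admissible region, so that analyticity forces the representation to reduce to at most a two-point contribution and all truncated functions of order $\geq 3$ to vanish identically. The main obstacle is that the classical Jost--Lehmann--Dyson argument uses positivity of the inner product, which may fail here. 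This can be circumvented because the metric operator $\eta$ is bounded with bounded inverse, so momentum-space supports computed with $(\cdot\,,\cdot)$ coincide with those computed with the positive-definite auxiliary scalar product $\langle\cdot\,,\cdot\rangle$; one runs the Jost--Lehmann--Dyson analysis on matrix elements with respect to $\langle\cdot\,,\cdot\rangle$ and then transfers the vanishing conclusion back to the indefinite inner product using the boundedness of $\eta$ and $\eta^{-1}$.
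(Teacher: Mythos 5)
First, a point of comparison: the paper does not prove this theorem. It is quoted as a known result, with the proof deferred to Dell'Antonio, Robinson, and Greenberg, together with the observation of Nakanishi and Ojima that the argument survives in an indefinite metric once translations commute with the metric operator. So your proposal has to stand on its own as a reconstruction of the cited proof. Its first two steps are essentially sound. Proposition~\ref{reehcor}, cyclicity, and nondegeneracy of $(\cdot\,,\cdot)$ do reduce the claim to $C(f,g)\vac=0$, hence to the vanishing of truncated vacuum expectation values containing at least two $\f$'s --- though note these must be allowed to contain the \emph{other} fields of the theory as well, since $\mathcal{P}\vac$ is generated by all of them, not only by polynomials in $\f$. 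The momentum-space gap does extend by covariance to a Lorentz-invariant neighborhood of the orbit of the spacelike point, and the spectrum condition is usable precisely because $\eta$ commutes with $\un(a)$, so that translations are genuinely unitary and admit a joint spectral measure. Your closing remark about positivity is also essentially the content of the Nakanishi--Ojima observation: the Jost--Lehmann--Dyson theorem is a statement about causal distributions and uses no positivity, and the boundedness of $\eta$ and $\eta^{-1}$ does make $\supp\check{\f}$ the same whether matrix elements are taken with $(\cdot\,,\cdot)$ or $\langle\cdot\,,\cdot\rangle$.

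The genuine gap is the third step, which is where all the analytic work of the theorem lives. ``The hole in the support disconnects the admissible region, so that analyticity forces the representation to reduce to at most a two-point contribution'' is not an argument: disconnecting the support of the JLD weight does not make a causal distribution degenerate, and no analyticity mechanism is actually exhibited. The cited proof is a two-stage argument, and both stages are missing. First (Dell'Antonio, Robinson) one applies the Jost--Lehmann--Dyson representation to commutator matrix elements between states of compact energy--momentum support to show that a gap of $\supp\check{\f}$ at one spacelike point propagates to the vanishing of $\check{\f}$ on \emph{all} spacelike momenta, i.e.\ $\supp\check{\f}\subset\overline{V}{}^+\cup\,(-\overline{V}{}^+)$. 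Second (Greenberg) one shows that a field with this support property has vanishing truncated $n$-point functions for $n\geq3$; this does \emph{not} follow from support bookkeeping alone --- for $n=3$ the constraints $p_3\in\overline{V}{}^+$, $p_2+p_3\in\overline{V}{}^+$, $p_i\in\overline{V}{}^+\cup(-\overline{V}{}^+)$ leave a nonempty admissible region --- but requires the analyticity of the Wightman functions in the forward tubes together with the comparison, via locality, of the permuted boundary values. Until these two steps are carried out, your proposal is an outline of the correct strategy rather than a proof.
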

The proof is in [\citeonline{Antonio,Robinson,GreenbergSuppP}]. The applicability of the argument to the indefinite case was noted in [\citeonline{Nakanishi}].

\section{SCALE INVARIANCE AND FIELD EQUATION}
\label{Final}
Using the analytic properties of the two-point function, first we give a very short proof for an already known theorem:\cite{WeinbergFree} 
\begin{theor}\label{weinberg_theorem} Let $\f$ be a field in a covariant quantum field theory which is also scale invariant. The inner product may be indefinite. Assume that $\f$ transforms in the Lorentz representation $(j,0)$ or $(0,j)$ and its scaling dimension is $j+1$. Let $\vac$ be the vacuum. Then the two-point function $\langle\vac,\f^+(x)\f(y)\vac\rangle$ is unique up to a multiplicative constant and it satisfies the wave equation. 
\end{theor}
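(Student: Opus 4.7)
The plan is to analytically continue the two-point function to Euclidean space, combine $SO(4)$-covariance in a specific representation with scale homogeneity to pin the Euclidean correlator down up to a constant, and then observe that the resulting closed form is harmonic; holomorphicity lifts Euclidean harmonicity back to the Minkowski wave equation.

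By Proposition~\ref{schwinger}, $\mathcal{W}(x,y) = W(x-y)$ and $W$ is the distributional boundary value of a function $\mathbf{W}$ holomorphic on the forward tube, whose Euclidean restriction $\mathbf{S}$ is $SU(2)\times SU(2)$-covariant. Pseudo-unitarity of $\un(L)$ together with the definition $\f^+(f) = \f(\bar f)^+$ gives a short transformation-law computation showing that $\f^+$ lives in the complex-conjugate representation $(0,j)$, so $\mathbf{S}$ carries indices in $(0,j)\otimes(j,0) = (j,j)$, the irreducible $SU(2)\times SU(2)$-representation realized on symmetric traceless rank-$2j$ tensors on $\mathbb{R}^4$ (of dimension $(2j+1)^2$). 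Translation invariance makes $\mathbf{S}$ a function of one vector $x$, and applying the scale transformation of dimension $j+1$ at both fields together with $\un(\lambda)\vac = \vac$ yields $\mathbf{S}(\lambda x) = \lambda^{-2(j+1)}\mathbf{S}(x)$.

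I next classify such $\mathbf{S}$. Evaluated at $x_0 = (r,\bs{0})$ with $r > 0$, the stabilizer $SO(3)$ acts on the fiber $V_{(j,j)}$, and the branching $(j,j) \downarrow SO(3) = \bigoplus_{k=0}^{2j} D^k$ has a one-dimensional scalar summand. So $\mathbf{S}(x_0)$ is confined to a single line, $SO(4)$-covariance propagates it to the orbit $S^3_r$, and scale homogeneity extends it to all of $\mathbb{R}^4\setminus\{0\}$, proving uniqueness up to an overall constant. A concrete representative of this covariance class is
\[
\mathbf{S}^{i_1\cdots i_{2j}}(x) = c\,\partial^{i_1}\cdots\partial^{i_{2j}}\frac{1}{|x|^2},
\]
which is manifestly symmetric, traceless in any pair of indices (because contracting any two derivatives produces $\Updelta(|x|^{-2}) = 0$ on $\mathbb{R}^4\setminus\{0\}$), and homogeneous of the required degree $-2(j+1)$.

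Since partial derivatives commute with $\Updelta$ and $|x|^{-2}$ is harmonic off the origin, $\Updelta \mathbf{S} = 0$ on $\mathbb{R}^4\setminus\{0\}$. Wick rotation converts $\Box = \partial_0^2 - \Updelta$ into $-\Updelta$ on $\mathbb{R}^4$, so holomorphicity of $\mathbf{W}$ gives $\Box\mathbf{W} = 0$ on the whole forward tube, and hence $\Box W = 0$ distributionally on the boundary. The one nontrivial loose end is excluding an origin-supported remainder $\propto \partial^{2j}\delta(x)$, which a priori carries the correct covariance and scaling class: this is ruled out because Axiom~\ref{spectrum} forces the Fourier transform of $W$ to be supported in $\overline{V}{}^+$, and scale homogeneity combined with $(j,j)$-covariance then confines the support to the forward light cone $p^2 = 0$, so $p^2\widehat{W} = 0$ as tempered distributions. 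The heart of the argument is thus the one-dimensional fixed-point count together with scale homogeneity; the technical hurdle is the distributional bookkeeping that upgrades pointwise Euclidean harmonicity to the Minkowski wave equation without a contact term.
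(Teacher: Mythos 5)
Your proof is correct and follows essentially the same route as the paper's: Euclidean continuation via Proposition \ref{schwinger}, uniqueness of $\mathbf{S}$ from transitivity of dilations and $\mathrm{Spin}(4)$ together with the single trivial summand in the branching of $(j,j)$ to the stabilizer of a point, the explicit harmonic representative $\partial_{\mu_1}\cdots\partial_{\mu_{2j}}(1/z^2)$ (your observation that it is automatically traceless is right, and is why the paper's added trace terms all vanish), and passage to the Minkowski boundary value. Your closing ``loose end'' paragraph is superfluous and its key claim is wrong as stated --- a homogeneous Lorentz-covariant distribution supported in $\overline{V}{}^+$ need not be supported on $p^2=0$ (e.g.\ $\theta(p^0)\,\theta(-p^2)\,(-p^2)^{j-1}$) --- but no contact term can arise in the first place: since $\Box\mathbf{W}=0$ holds identically on the forward tube where $\mathbf{W}$ is smooth and polynomially bounded, one has $W(\Box f)=\lim_{\eta\to0}\smallint\d x\,(\Box\mathbf{W})(x+\im\eta)f(x)=0$ for every test function, which is exactly how the paper concludes.
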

\begin{proof}
The two-point function $\langle\vac,\f^+(x)\f(y)\vac\rangle$ transforms in the $(j,j)$ representation of the Lorentz group. By Proposition \ref{schwinger}, the corresponding Euclidean correlation function $\mathbf{S}$ transforms in the $(j,j)$ representation of $\mathrm{Spin}(4)=S\!U(2)\times S\!U(2)$, and it has scaling dimension $2j+2$. The direct product of dilations and $\mathrm{Spin}(4)$ acts transitively on the domain of $\mathbf{S}$, so $\mathbf{S}$ is determined by its value $\mathbf{S}(x)$ at one point. The covariance of $\mathbf{S}$ implies that $\mathbf{S}(x)$ is invariant under the stabilizer $\mathrm{Spin}_x$ of $x$, which is isomorphic to $S\!U(2)$. The restriction of a $(j,k)$ representation of $\mathrm{Spin}(4)$ to it is the sum of the representations from spin $|j-k|$ to $j+k$. Since the trivial representation occurs only once in the restriction of $(j,j)$ to $\mathrm{Spin}_x$, there is only one $\mathbf{S}$ up to a constant invertible matrix acting on the components of $\mathbf{S}(x)$. 

Define the $n$-index tensor field $\mathbf{W}(z)$ on $\big\{\,z\in\mathbb{C}\,|\,z=x+\im\eta,\,x\in\mink,\,\eta\in V^+\big\}$ as follows. Take $\partial_{\mu_1}\dots\partial_{\mu_n}1/z^2$, where $n=2j$ and $z^2=z\cdot z$ is the Lorentz square of $z$. The differentiations are meant with respect to $x$. Adding terms with appropriate coefficients in which one or more pairs of the indices are contracted with the Minkowski metric tensor $g$, such as $g_{\mu_1\mu_2}\partial_\nu\partial^\nu\partial_{\mu_2}\dots\partial_{\mu_n}1/z^2$, one obtains a completely symmetric tensor field $\mathbf{W}$ which is traceless in any pair of the indices. Since $\mathbf{W}$ is analytic and $\mathbf{S}(x)=\mathbf{W}(\im x^0,\bs{x})$ has the Euclidean and scale transformation properties specified in the previous paragraph, the components of $W(x-y)=\lim_{\eta\to 0}\mathbf{W}(x-y+\im\eta)$ are mapped into those of the two-point function $\langle\vac,\f^+{}_{\!\!\!i\,\,}(x)\f_{\!j}(y)\vac\rangle$ by a constant matrix $K_{ij,\mu_1\dots\mu_n}$ at all $x$ and $y$. The limit that defines $W$ is meant in the sense of convergence in $\sch^\prime(\mink)$. Integrating by parts in $\smallint_\mink\d x\,\mathbf{W}(x+\im\eta)\Box f(x)$, where $f\in\sch(\mink)$ is an arbitrary test function, we get $\smallint_\mink\d x\,\mathbf{W}(x+\im\eta)\Box f(x)=0$ because $\Box 1/z^2=0$ on the domain of $\mathbf{W}$ and $\mathbf{W}$ is the linear combination of derivatives of $1/z^2$ with constant coefficients, so $\Box \mathbf{W}=0$. By taking the limit $\eta\to0$, we obtain $W(\Box f)=0$, that is, $\Box W=0$.      
\end{proof}
In the second part of the proof, we calculated the two-point function in order to demonstrate that it satisfies the wave equation. If the inner product is positive definite, this step could be replaced by a reference to Weinberg's construction\cite{WeinbergMassless} of free fields, which shows that there are covariant fields transforming in any $(j_1,j_2)$ Lorentz representation, which are also covariant under dilations with scaling dimension $j_1+j_2+1$ and satisfy the wave equation. By the positive definiteness of the metric and Proposition \ref{reehcor}, we get $\Box\f=0$ if $\f$ is local.

If the metric is indefinite, then the field does not necessarily satisfy the wave equation even if the two-point function does. In this case, we analyze directly the properties of the fields that have nonvanishing matrix elements between the vacuum and one-particle states.  
\begin{theor}\label{main_theorem}
Let $\f$ be a field in a covariant quantum field theory which is also scale invariant, with a nonsingular representation of dilations and Poincar\'e transformations. Assume that the possibly indefinite metric operator $\eta$ commutes with translations. The orthogonal projections to the one-particle subspace and the subspace of continuous mass spectrum are denoted by $\mathcal{P}_1$ and $\mathcal{P}_\mathfrak{c}$, respectively. Let $\f$ transform in the $(j_1,j_2)$ Lorentz representation, and let $s$ be its scaling dimension. Then 
\begin{enumerate}[i.]
\item if $\;\;\langle\f(f)\vac,\eta\,\mathcal{P}_1\f(f)\vac\rangle\neq0\;\;$  for some $f\in\sch(\mink)$, then $s\leq j_1+j_2+1$,
\item if $\;\;\mathcal{P}_\mathfrak{c}\f(f)\vac\neq0\;\;$ for some $f\in\sch(\mink)$, then $s>j_1+j_2+1$.
\end{enumerate}
In other words, if $\f$ interpolates between the vacuum and (i) physical one-particle states (one-particle states of ``nonzero norm''), then $s\leq j_1+j_2+1$, (ii) states of continuous mass spectrum, then $s>j_1+j_2+1$.
\end{theor}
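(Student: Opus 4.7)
The plan is to use Theorem~\ref{matrix_integral} and Lemma~\ref{DQ} (with $G$ the group of dilations and Lorentz transformations) to analyze $\Gamma$ separately on $\hil_1$ and $\hil_\mathfrak{c}$, and then to confront the way dilations and Lorentz transformations act on momenta in the respective orbits. Theorem~\ref{matrix_integral} gives $(\f_i(f)\vac)(p)=\hat f(p)\Gamma_i(p)$. Since the measures on $\hil_1$ and $\hil_\mathfrak{c}$ both satisfy $d\mu(\lambda p)/d\mu(p)=\lambda^2$, the scale covariance yields
\[
\Gamma_i(\lambda p)=\lambda^{s-1}\,Q(p,s(\lambda))^{-1}\Gamma_i(p),
\]
while Lorentz covariance gives $\Gamma_j(Lp)=\sum_i D^{(j_1,j_2)}_{ji}(L)\,Q(Lp,L)\,\Gamma_i(p)$. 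Combining with the pseudo-unitarity identity \eqref{Qisometric}, the matrix $\rho_{ij}(p):=\langle\Gamma_i(p),\eta(p)\Gamma_j(p)\rangle$ obeys $\rho(\lambda p)=\lambda^{2(s-1)}\rho(p)$ under dilations and $\rho(Lp)=\bigl(\overline{D(L)}\otimes D(L)\bigr)\rho(p)$ under Lorentz.

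For part (i), the hypothesis forces $\rho\neq 0$ on a set of positive $d\mu_1$-measure, so I fix a null $p_0$ with $\rho(p_0)\neq 0$. The key observation is that on the light cone the pure boost $B_\lambda\in S\!L(2,\mathbb{C})$ of rapidity $\ln\lambda$ along the spatial direction of $p_0$ satisfies $B_\lambda p_0=\lambda p_0$, so it reaches the same momentum as the dilation $s(\lambda^{-1})$. Comparing the two transformation laws at $p_0$ yields
\[
\bigl(\overline{D(B_\lambda)}\otimes D(B_\lambda)\bigr)\,\rho(p_0)=\lambda^{2(s-1)}\rho(p_0)\qquad\text{for all }\lambda>0.
\]
The eigenvalues of $D^{(j_1,j_2)}(B_\lambda)$ are $\lambda^m$ with $|m|\leq j_1+j_2$, so those of $\overline{D(B_\lambda)}\otimes D(B_\lambda)$ are $\lambda^a$ with $|a|\leq 2(j_1+j_2)$; the existence of a nonzero eigenvector $\rho(p_0)$ with eigenvalue $\lambda^{2(s-1)}$ immediately forces $s\leq j_1+j_2+1$.

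For part (ii), the dilation--Lorentz group acts transitively on the open forward cone with $SO(3)$-stabilizer at $p_0=(1,\mathbf{0})$, so $\Gamma(p_0)\in V\otimes H_\mathfrak{c}$ is $SO(3)$-invariant (with $V$ carrying the $(j_1,j_2)$ representation), and $\Gamma$ at a general $p=(p^0,\mathbf{p})$ is obtained by composing the dilation from mass $1$ to mass $m$ with a Lorentz boost $L_{\mathbf{p}/m}$ of rapidity $\sim\log(|\mathbf{p}|/m)$. Writing out the covariance relations, I show that
\[
\|\Gamma((m,\mathbf{p}))\|_\mathfrak{c}^2\;\geq\;c\,m^{2(s-1-j_1-j_2)}\,|\mathbf{p}|^{2(j_1+j_2)}
\]
for some $c>0$ whenever $\Gamma(p_0)\neq 0$: the point is that any nonzero $SO(3)$-singlet in $V\otimes H_\mathfrak{c}$ has nonzero projection onto the boost highest-weight vector $|j_1,-j_2\rangle\in V$, because each $SO(3)$-irreducible component $V_k\subset V$ that appears in the singlet decomposition (with $|j_1-j_2|\leq k\leq j_1+j_2$) carries $|j_1,-j_2\rangle$ with a nonvanishing Clebsch--Gordan coefficient. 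Finiteness of $\|\mathcal{P}_\mathfrak{c}\f(f)\vac\|^2=\int d\mu_\mathfrak{c}|\hat f|^2\|\Gamma\|_\mathfrak{c}^2$ for every Schwartz $f$ whose Fourier transform does not vanish at $m=0$ then requires the $m$-integrand $m^{2s-3-2(j_1+j_2)}$ to be integrable at $0$, giving $s>j_1+j_2+1$.

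The hardest part will be making the lower bound in (ii) fully rigorous. In the indefinite-metric setting the dilation operators $Q(p,s(\lambda))^{-1}$ need not be Hilbert-unitary on $H_\mathfrak{c}$, so one must show, using Definition~\ref{DefNonsing} together with the pseudo-unitarity \eqref{Qisometric}, that they cannot accidentally cancel the polynomial growth produced by $D(L_{\mathbf{p}/m})$. Handling the tensor structure of the $SO(3)$-singlet and the interaction between the $Q$-factors coming from the dilation and the Lorentz-boost parts of the transformation is the principal technical hurdle.
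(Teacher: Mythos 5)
Your part (i) is essentially the paper's own argument: you identify the stabilizer element of a null momentum as a boost composed with a dilation, use the pseudo-unitarity relation \eqref{Qisometric} to cancel the $Q$-factors inside the $\eta$-Gram matrix $\rho_{ij}(p)=\langle\Gamma_i(p),\eta(p)\Gamma_j(p)\rangle$, and bound the eigenvalue $\lambda^{2s-2}$ by the maximal boost exponent $2(j_1+j_2)$ of $\bar D\otimes D$. That part is correct and matches the paper.

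Part (ii) contains a genuine gap, which you yourself flag as the ``principal technical hurdle'' but do not close, and it is not a technicality that can be deferred: your finiteness argument is run on the Hilbert norm $\|\Gamma(p)\|_\mathfrak{c}^2$, and a lower bound of the form $\|\Gamma((m,\bs{p}))\|_\mathfrak{c}^2\geq c\,m^{2(s-1-j_1-j_2)}|\bs{p}|^{2(j_1+j_2)}$ cannot be extracted from the covariance relation, because $\Gamma(p)=(\text{powers})\cdot D(\mathfrak{r}_p\mathfrak{b}_p)\,Q(p,V_{p\leftarrow k})\,\Gamma(k)$ and the operators $Q(p,V_{p\leftarrow k})$ are only pseudo-unitary. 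Definition~\ref{DefNonsing} gives $\|Q(p,g)\|_p\leq\|\un(g)\|$, but as $m\to 0$ the group element $V_{p\leftarrow k}$ runs off to infinity in $G$ and nothing bounds $\|\un(V_{p\leftarrow k})^{-1}\|$ polynomially, so $Q(p,V_{p\leftarrow k})$ may shrink $\Gamma(k)$ fast enough to cancel the growth produced by $D(\mathfrak{b}_p)$. The paper avoids this entirely by never using the Hilbert norm: it computes the \emph{indefinite} pairing $\langle\f(f)\vac,\eta\,\mathcal{P}_\pm\f(f)\vac\rangle$, in which the $Q$-factors drop out \emph{exactly} by \eqref{Qisometric}, and it secures a sign (so that nonvanishing plus positivity of each spin-$r$ contribution forces divergence) by first decomposing $\hil_\mathfrak{c}=\hil_+\oplus\hil_-$ into $\un$-invariant, mutually orthogonal subspaces on which $\eta$ is strictly definite. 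That decomposition is where the compactness of the stabilizer $S\!U(2)$ of a timelike momentum enters (via the theorem of Wawrzy\'nczyk on pseudo-unitary representations of compact groups), and it has no analogue in your write-up. Separately, your assertion that every nonzero $S\!O(3)$-singlet has nonvanishing projection onto the boost highest-weight vector is exactly the content of the paper's explicit spinor-index computation of the component $T^0\propto[(p^0+|\bs{p}|)/m]^{2(j_1+j_2)}$; it is true, but in your proposal it is asserted rather than proved, and it only becomes usable once the $Q$-cancellation above is in place.
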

\begin{proof}
Let $G$ be the group of dilations and Lorentz transformations. If $\un$ is a continuous pseudo-unitary representation of $G$ by continuous operators, then a unitary operator brings it into the form given by Eq.~\eqref{MultRule}. We assume that it has already been achieved by a suitable choice of the operators $Q(p,g)$ that they satisfy the regularity conditions by which nonsingular representations are defined. In particular, for all $g$ and $h$, equation \eqref{MultRuleGen} holds for all $p$ and not only for almost all $p$, which is what follows automatically from the representation property $\un(g)\un(h)=\un(gh)$. The decomposition of $\un$ is given in Section \ref{Representation}. The one-particle orbit is $\sigma_1=\partial V^+\setminus\{\,0\,\}$, and the orbit of continuous mass spectrum is $\sigma_\mathfrak{c}=V^+$. By Axiom \ref{vacuum}, the subspace furnishing a trivial representation of translations is spanned by the vacuum $\vac$. Theorem \ref{matrix_integral} guarantees that $\f(f)\vac$ can be written in the form $\big(\f_i(f)\vac\big)(p)=\hat{f}(p)\Gamma_i(p)$.

Let $p$ and $k$ be two momenta, both in either $\sigma_1$ or $\sigma_\mathfrak{c}$, such that Eq.~\eqref{u_transform_lemma} in Lemma \ref{DQ} is satisfied for almost all $h\in G_p$ if $g$ is replaced by $gh$, where $g$ is any group element for which $gp=k$. Let $h$ be a Lorentz transformation $\mathfrak{h}$ followed by a rescaling by $\lambda_h\in\mathbb{R}^+$. Then 
\[\sfr{\d\mu(hp)}{\d\mu(p)}=\lambda_h^{-2},\;\;\;\;\;\;\;(\widehat{\{h\}f})(hp)=\lambda_h^4\hat{f}(p),\]
where $\mu$ is either $\mu_1$ or $\mu_\mathfrak{c}$, which are the measures on $\sigma_1$ and $\sigma_\mathfrak{c}$, respectively, with respect to which we integrate in Eq.~\eqref{norms}. The representation $S$ in which the field transforms is expressed in terms of the Lorentz representation $D$ and the scaling dimension $s$ by $S(h)=\lambda_h^{4-s}D(\mathfrak{h})$. Then for any two $h_1,h_2\in G_p$ for which Eq.~\eqref{u_transform_lemma} is satisfied with $g$ replaced by $gh_1$ or $gh_2$, we have
\[
\lambda_{h_1}^{1-s}\smallsum_jD_{ij}(\mathfrak{g}\mathfrak{h}_1)\,Q(k,gh_1)\,\Gamma_j(p)=\lambda_{h_2}^{1-s}\smallsum_jD_{ij}(\mathfrak{g}\mathfrak{h}_2)\,Q(k,gh_2)\,\Gamma_j(p),
\]
where $\mathfrak{g}$ is the Lorentz transformation in $g$. By writing $h=h_2^{-1}h_1^{}$, using the equality $\lambda_h=\lambda_{h_1}/\lambda_{h_2}$ and the multiplication rules for $D$ and $Q$ as well as the validity of $Q(p,1)=\mathbbm{1}$ for all $p$, we conclude that for almost all $p$ in $\sigma_1$ and $\sigma_\mathfrak{c}$,  
\eq{\label{u_transform_final}
\lambda_h^{1-s}\smallsum_jD_{ij}(\mathfrak{h})\,Q(p,h)\,\Gamma_j(p)=\Gamma_i(p)
}
for almost all $h\in G_p$ with respect to a Haar measure, and therefore for all $h\in G_p$ because $h\mapsto Q(p,h)$ is a weakly continuous function on $G_p$ and the set whose complement is of zero Haar measure is dense. 

(i) By assumption,
\[\langle\f_i(f)\vac,\eta\,\mathcal{P}_1\f_j(f)\vac\rangle=\smallint_{\sigma_1}\d\mu_1(p)\,|\hat{f}(p)|^2\,\langle\Gamma_i(p),\eta(p)\Gamma_j(p)\rangle_1\] 
is not zero for some $f$, where $\langle\cdot\,,\cdot\rangle_1$ is the scalar product of the Hilbert space in which a state takes its value on $\sigma_1$. Therefore the matrix $\langle\Gamma_i(p),\eta(p)\Gamma_j(p)\rangle_1$ cannot be zero for almost all $p$. So there are momenta $p$ for which this matrix is not zero and, simultaneously, Eq.~\eqref{u_transform_final} holds for all $h$ in $G_p$, so
\[
\lambda_h^{2-2s}\smallsum_{lk}\bar{D}_{il}(\mathfrak{h})\,D_{jk}(\mathfrak{h})\,\langle \Gamma_l(p),\eta(p)\Gamma_k(p)\rangle_1=\langle\Gamma_i(p),\eta(p)\Gamma_j(p)\rangle_1.
\]
Let $\mathfrak{b}(\hat{\bs{n}},\theta)$ be a boost of rapidity $\theta$ in the direction given by the spatial unit vector $\hat{\bs{n}}$. If $s(\lambda)$ is the scale transformation $x\mapsto\lambda x$ of vectors $x\in\trans$, then $s(\lambda)\,\mathfrak{b}(\hat{\bs{p}},\ln\lambda)\in G_p$ for any $p\in\sigma_1$, where $\hat{\bs{p}}=\bs{p}/|\bs{p}|$ and $p=(|\bs{p}|,\bs{p})$. The matrix $\langle\Gamma_i(p),\eta(p)\Gamma_j(p)\rangle_1$ is an eigenvector of the linear operator $\bar{D}\big(\mathfrak{b}(\hat{\bs{p}},\ln\lambda)\big)\otimes D\big(\mathfrak{b}(\hat{\bs{p}},\ln\lambda)\big)$ with eigenvalue $\lambda^{2s-2}$. Note that the $\big(\frac{1}{2},0\big)$ representation of $\mathfrak{b}(\hat{\bs{n}},\ln\lambda)$ can be brought into the matrix
\[
\left(\begin{matrix}\lambda^{\frac{1}{2}}&0\\0&\lambda^{-\frac{1}{2}}\end{matrix}\right)
\] 
by a similarity transformation. The $(j_1,0)$ representation is the $2j_1$-fold symmetric product of $\big(\frac{1}{2},0\big)$, the complex conjugate of $(j_2,0)$ is $(0,j_2)$, and $(j_1,j_2)=(j_1,0)\otimes(0,j_2)$, so we conclude that the largest negative exponent with which $\lambda$ occurs in the matrix of $\bar{D}\big(\mathfrak{b}(\hat{\bs{p}},\ln\lambda)\big)\otimes D\big(\mathfrak{b}(\hat{\bs{p}},\ln\lambda)\big)$ is $-2(j_1+j_2)$, so $s\leq j_1+j_2+1$.         

(ii) We shall assume that \eqref{Qisometric} is satisfied for all $g$ and $p$. Lemma \ref{DQ} guarantees that this can always be achieved by a suitable choice of $\eta(p)$. It follows from Lemma \ref{DQ} that for almost all $k\in\sigma_\mathfrak{c}$, it is true for almost all $p\in\sigma_\mathfrak{c}$ that there is a rotation $\mathfrak{q}_p\in G_k$ such that 
\eq{\label{kappa_expression}
\Gamma_i(p)=\left(\sfr{\,m{}_{\phantom{0}}\!}{\,m_0\!}\right)^{s-1}\smallsum_jD_{ij}\big(\mathfrak{r}_p\mathfrak{b}_p\mathfrak{q}_p\big)\,Q\big(p,V_{p\leftarrow k}\mathfrak{q}_p\big)\,\Gamma_j(k),
}
where $m_0=\sqrt{-k\cdot k}$, $m=\sqrt{-p\cdot p}$. Choose a basis in $\transd$ in which $k=(m_0,\bs{0})$. The rotations $\mathfrak{r}_p\in G_k$ are some fixed elements such that $\mathfrak{r}_p(0,\hat{\bs{n}})=(0,\hat{\bs{p}})$ if $\bs{p}\neq 0$ and $\mathfrak{r}_p=1$ otherwise. Here $\hat{\bs{n}}$ is some spatial unit vector. In the previously introduced notations $\mathfrak{b}$ and $s$ for boosts and scale transformations, respectively,  
\[
\mathfrak{b}_p\coloneqq\mathfrak{b}\left(\hat{\bs{n}},\tanh^{-1}\!\sfr{\,|\bs{p}|{}^{\phantom{0}}\!}{\,p^0\!}\right),\;\;\;V_{p\leftarrow k}\coloneqq s\left(\sfr{\,m_0\!}{\,m{}_{\phantom{0}}\!}\right)\mathfrak{r}_p\,\mathfrak{b}_p.
\]
As indicated by the notation, $p=V_{p\leftarrow k}k$. The appearance of $\mathfrak{q}_p$ in \eqref{kappa_expression} is due to the measure theoretic technicality that arises in the proof of Lemma \ref{DQ}. Note that $\lambda_h=1$ for any $h\in G_k$. From now on, let $k$ be a momentum for which, in addition to relation \eqref{kappa_expression}, equation \eqref{u_transform_final} is also satisfied for all $h\in G_k$, with $p$ replaced by $k$. Almost every $k$ has this property, so such a choice is possible. Then $\mathfrak{q}_p$ can be eliminated by writing $D\big(\mathfrak{r}_p\mathfrak{b}_p\mathfrak{q}_p\big)=D\big(\mathfrak{r}_p\mathfrak{b}_p\big)\,D\big(\mathfrak{q}_p\big)$ and $Q\big(p,V_{p\leftarrow k}\mathfrak{q}_p\big)=Q\big(p,V_{p\leftarrow k}\big)\,Q\big(k,\mathfrak{q}_p\big)$, and using $\eqref{u_transform_final}$. 

A significant difference between the one-particle case and states of continuous mass spectrum is that the stabilizer $G_k$ of an element in $\sigma_\mathfrak{c}$ is $S\!U(2)$, which is compact. So the representation $q$ of $G_k$ defined by $q(h)\coloneqq Q(k,h)$ is a representation of a compact group, which is pseudo-unitary with respect to the inner product given by the metric operator $\eta(k)$. This implies that the orthogonal $\eta(k)$-invariant subspaces $H_{k\pm}$ on which $\eta(k)$ is strictly positive and negative definite are invariant under $q$. The proof is given in [\citeonline{Waw}], and it is based on the finiteness of the Haar measure $\nu$ of a compact group, which allows us to change the scalar product on $H_\mathfrak{c}$ to an equivalent one with respect to which $q$ is unitary: $\{u,v\}=\smallint_{G_k}\d\nu(h)\,\langle q(h)u,q(h)v\rangle_\mathfrak{c}$. Here $\langle\cdot\,,\cdot\rangle_\mathfrak{c}$ is the scalar product in the Hilbert space $H_\mathfrak{c}$ in which a state takes its value on $\sigma_\mathfrak{c}$. 

The subspace $\hil_\mathfrak{c}$ of continuous mass spectrum is the direct sum of two orthogonal $\un$-invariant subspaces on which $\eta$ is strictly positive or negative definite. To show this, consider the closed subspaces $\hil_\pm\subset\hil_\mathfrak{c}$ generated by vectors of the form
\[\vf_\pm(p)=Q(p,V_{p\leftarrow k})u_\pm(p),\]
where $u_\pm$ are elements of $L^2(\sigma_\mathfrak{c},H_{k\pm},\mu_\mathfrak{c})$ which vanish outside a compact set contained by $V^+$. Since $p\mapsto\mathfrak{b}_p$ is continuous, any compact set in $V^+$ is mapped into a subset of a compact set of $G$ by $V_{p\leftarrow k}$. So the norm of $Q(p,V_{p\leftarrow k})$ is a bounded function of $p$ on the support of $u_\pm$, which guarantees that above formula defines vectors in $\hil_\mathfrak{c}$. (It is assumed that the rotations $\mathfrak{r}_p$ are chosen so that $\varphi_\pm$ have the required measurability properties.) The subspaces $\hil_\pm$ are invariant under $\un$ because for any $\vf_\pm$ of the above form, we have 
\[
\big(\un(g)\varphi_\pm\big)(p)=Q(p,g)\,Q(g^{-1}p,V_{g^{-1}p\leftarrow k})\,u_\pm(g^{-1}p)=Q(p,V_{p\leftarrow k})\,Q(k,V_{p\leftarrow k}{}^{\hspace{-1em}-1}\;gV_{g^{-1}p\leftarrow k})\,u_\pm(g^{-1}p),
\]
and $Q(k,V_{p\leftarrow k}{}^{\hspace{-1em}-1}\;gV_{g^{-1}p\leftarrow k})$ is an operator representing an element of $G_k$, which maps $H_\pm$ onto itself. Let us check that they are also orthogonal. For an $h\in G_k$,
\[
Q(p,V_{p\leftarrow k})\,Q(k,h)\,u_\pm(p)=Q(p,V_{p\leftarrow k}hV_{p\leftarrow k}{}^{\hspace{-1em}-1}\;)\,Q(p,V_{p\leftarrow k})\,u_\pm(p).
\]     
Since $V_{p\leftarrow k}G_kV_{p\leftarrow k}{}^{\hspace{-1em}-1}\;=G_p$, the spaces $H_{p\pm}$ spanned by vectors $\vf_\pm(p)$ are invariant under the representation $Q(p,\cdot)$ of $G_p$ because $H_{k\pm}$ are invariant under $q$.  This representation is pseudo-unitary with respect to the inner product specified by $\eta(p)$, and by Eq.~\eqref{Qisometric} the metric operator $\eta(p)$ is strictly positive and negative definite on $H_{p+}$ and $H_{p-}$, respectively. The previously quoted theorem tells us that $H_{p\pm}$ are orthogonal. We conclude that $\hil_\mathfrak{c}$ is the direct sum of the orthogonal subspaces $\hil_+$ and $\hil_-$.     

By the hypothesis of the theorem, the orthogonal projection of $\f(f)\vac$ to one of the subspaces $\hil_\pm$ must be nonzero for some $f$. Say that it is $\hil_+$. The other case is completely analogous. The reason why it is important that $\hil_+$ and its orthogonal complement are both invariant under $\un$ is that the orthogonal projection $\mathcal{P}_+$ to $\hil_+$ commutes with $\un$, so $\mathcal{P}_+\f_i(f)\vac$ can be written as
\[
(\mathcal{P}_+\f_i(f)\vac)(p)=\hat{f}(p)\left(\sfr{\,m{}_{\phantom{0}}\!}{\,m_0\!}\right)^{s-1}\smallsum_jD_{ij}\big(\mathfrak{r}_p\mathfrak{b}_p\big)\,Q(p,V_{p\leftarrow k})\,P_+\Gamma_j(k),
\]
where $P_+$ is the orthogonal projection to $H_+$ in $H_\mathfrak{c}$. The definiteness of $\eta$ on $\hil_+$ implies that the scalar product
\[\begin{aligned}
\langle\f_i(f)\vac,\eta\,&\mathcal{P}_+\f_j(f)\vac\rangle\\&=\midint_{\sigma_\mathfrak{c}}\d\mu_\mathfrak{c}(p)\,\left(\sfr{\,m{}_{\phantom{0}}\!}{\,m_0\!}\right)^{2s-2}\,|\hat{f}(p)|^2\smallsum_{nl}\bar{D}_{in}\big(\mathfrak{r}_p\mathfrak{b}_p\big)\,D_{jl}\big(\mathfrak{r}_p\mathfrak{b}_p\big)\,\langle\Gamma_n(k),\eta(k)P_+\Gamma_l(k)\rangle_\mathfrak{c}
\end{aligned}\]
is not zero for some $f$. This integral can be nonzero for some $f\in\sch(\mink)$ only if the matrix $\langle\Gamma_n(k),\eta(k)P_+\Gamma_j(k)\rangle_\mathfrak{c}$ is not zero. But of course the integral should also be finite for any $f\in\sch(\mink)$. We can redefine the field by $\f_i(f)\to K_{ij}\f_j(f)$, where $K$ is a nondegenerate matrix $K$. Since $\mathfrak{r}_p$ are all in the stabilizer of a timelike vector $k$, which is isomorphic to $S\!U(2)$, by an appropriate $K$ the matrices $D\big(\mathfrak{r}_p\big)$ can always be brought into matrices which are unitary with respect to the standard scalar product on $\mathbb{C}^J$, where $J$ is the number of field components. Since $\eta(k)$ is strictly definite on $H_\pm$, the restriction of the $q(h)$ to $H_\pm$ gives a continuous representation of $S\!U(2)$, which is essentially unitary and consequently decomposes into the direct sum of irreducible representations, whose representation spaces are $\eta(k)$-orthogonal to each other. The Fourier transformation is a linear isomorphism from $\sch(\mink)$ onto $\sch(\minkd)$, so we obtain the condition that for any $\hat{f}\in\sch(\minkd)$,
\eq{\label{finiteness}
\midint_{0}^\infty\sfr{\d m}{m}\midint_{\mathbb{R}^3}\sfr{\d^3\bs{p}}{p^0}\,m^{2s-2}\,|\hat{f}(p)|^2\smallsum_{ikl}\bar{D}_{ik}\big(\mathfrak{b}_p\big)\,D_{il}\big(\mathfrak{b}_p\big)\,T_{kl}<\infty,\;\;\;\;\;\;\;T_{ij}=\smallsum_\alpha\overline{C}_{i\alpha}C_{j\alpha}, 
}
where $C_{i\alpha}$ are the components of the projection of $\Gamma_i(k)$ to a subspace $H_r$ of some spin-$r$ representation in a basis in which $\eta(k)|_{H_r}$ is the identity matrix. Here we used that $\langle\Gamma_i(k),\eta(k)P_+\Gamma_j(k)\rangle_\mathfrak{c}$ is the sum of terms similar to $T_{ij}$, all of which give positive contribution to the sum in the integrand of Eq.~\eqref{finiteness}, so they are separately finite and one of them is not zero. Equation \eqref{u_transform_final}, with the replacement of $p$ by $k$, tells us that $C_{i\alpha}$ are proportional to the Clebsch-Gordan coefficients of the trivial representation in the product of a spin-$r$ representation of $S\!U(2)$ and the representation furnished by the field components. Since each spin occurs once and only once between $|j_1-j_2|$ and $j_1+j_2$ in the restriction of $D$ to $S\!U(2)$, we have $|j_1-j_2|\leq r\leq j_1+j_2$, and the matrix $T_{ij}$ is determined up to a factor if the bases in $\mathbb{C}^J$ and in the subspace $H_r$ are fixed.

In order not to clutter the formulas with too many indices, we shall replace $(j_1,j_2)$ by $(k,l)$ in the following calculations. This will not cause any confusion since $k$ and $l$ no longer occur as summation indices. The vectors $v$ in $\mathbb{C}^J$ are written as tensors $v_{a^1\dots a^{2k},\ddot{b}^1\dots\ddot{b}^{2l}}$, where the indices $a^1,\dots,b^{2l}$ are spinor indices that can take two values, $0$ and $1$. The tensor $v$ is completely symmetric in both the $a$ and the $b$ indices. The dot on the $b$ indices is just a mnemonic for the $\big(0,\frac{1}{2}\big)$ transformation property of the tensor labeled by them, as opposed to the undotted indices, which refer to the components of $\big(\frac{1}{2},0\big)$ spinors. A similar labeling, a symmetric collection of indices $c^1,\dots,c^{2r}$, is introduced for vectors in the target space of states. The intertwiner $C_{i\alpha}$ is a tensor $C_{a^1\dots a^{2k},\ddot{b}^1\dots\ddot{b}^{2l},c^1\dots c^{2r}}$, completely symmetric in the $a$, $b$, and $c$ indices. (The $S\!U(2)$ representations are self-conjugate, so there is no need for a distinction between dotted and undotted indices. For the sake of definiteness, we chose undotted $c$ indices.) 

In the following formulas, $\delta_{a^1\ddot{b}^1}$ are the entries of the $2\times 2$ identity matrix, and $\ve$ is the antisymmetric $2\times 2$ matrix with $\ve_{01}=1$. $E$ is the same matrix as $\ve$. The reason why two different symbols are used for them will be clear shortly. Let the $\big(\frac{1}{2},0\big)$ Lorentz representation be given by matrices $S(L)$. The claim is that $C_{a^1\dots c^{2r}}$ is proportional to the following expression:
\eq{\label{Tproposed}
   \delta_{a^1\ddot{b}^1}\dots\delta_{a^{k+l-r}\ddot{b}^{k+l-r}}
   E_{a^{k+l-r+1}c^1}\dots E_{a^{2k}c^{k-l+r}} 
   \delta_{c^{k-l+r+1}\ddot{b}^{k+l-r+1}}\dots\delta_{c^{2r}\ddot{b}^{2l}},
}
where complete symmetrization is meant within the $a$, $b$, and $c$ indices (but there is no symmetrization involving say both $a$ and $b$ indices). To check the claim, first note that the invariance condition $\sum_{c\ddot{d}}S_{ac}(L)\bar{S}_{\ddot{b}\ddot{d}}(L)\delta_{c\ddot{d}} = \delta_{a\ddot{b}}$ means that $S(L)$ is unitary, which is true when $L$ is in the little group of $k$ by our choice of the basis in $\mathbb{C}^J$. The matrix $\ve$ is an intertwiner between the $S\!U(2)$ representation got by the restriction of $S$ and its complex conjugate. Suppose that the bases were chosen so that the elements $A$ of the little group of $k$ are represented by $S(A)=\exp[\im(\beta^1\sigma_1+\beta^2\sigma_2+\beta^3\sigma_3)]$, where $\beta^i$ are real numbers, and $\sigma_i$ are $\frac{1}{2}$ times the Pauli matrices. The complex conjugate of the matrix of $S(A)$ is the matrix of $\bar{S}(A)=\exp[\im(-\beta^1\sigma_1+\beta^2\sigma_2-\beta^3\sigma_3)]$. Since $\bar{S}(A)= \sigma_2 S(A) \sigma_2^{-1} = \ve S(A) \ve^{-1}$, as can be seen from the commutation relations of the Pauli matrices, $\sum_{cd}S_{ac}(A) S_{bd}(A) E_{cd} = E_{ab}$ for any $A$ in the little group, where $E_{ab}\coloneqq\sum_{\ddot{c}}\ve_{b\ddot{c}}\delta_{a\ddot{c}}$. (In fact this equation holds for any Lorentz transformation.) Now we can explain the somewhat pedantic distinction between $E$ and $\ve$. The former is an invariant vector in the representation space of $\big(\frac{1}{2},0\big)\otimes\big(\frac{1}{2},0\big)$, the latter is a map from the representation space of $\big(0,\frac{1}{2}\big)$ onto that of $\big(\frac{1}{2},0\big)$. Our proposal for $C_{i\alpha}$ is an $S\!U(2)$ invariant element of the representation space of the restriction of $(k,l)$ to $S\!U(2)$ times the spin $r$ representation, so the actual $C_{i\alpha}$ in the given bases is a constant multiple of this expression. The proposed formula in Eq.~\eqref{Tproposed} makes sense only if $0 \leq k+l-r \leq 2k$, and $k+l-r \leq 2l$, which are equivalent to the previously specified range $|l-k| \leq r \leq l+k$, from which the spin $r$ has to be selected. (If $k+l=r$, then the first delta factors are absent; if $l-k=r$, then there are no $E$ matrices; if $k-l=r$, then the last delta factors are missing.)

To get $T_{ij}$, we have to complex conjugate the expression for $C_{i\alpha}$ and contract the $c$-indices with those in the original formula. Under complex conjugation, undotted indices pick up a dot, while dotted indices lose it. In our basis it was more convenient to replace $i$ with a collection of $a$ and $b$ indices. Similarly, the index $j$ will be replaced by a collection of primed $a$ and $b$ indices. Using that $E$ is real, antisymmetric, and its square is proportional to the identity matrix, we find that the matrix of $T$ is proportional to
\[
   \delta_{a^1\ddot{b}^1}\dots\delta_{a^{k+l-r}\ddot{b}^{k+l-r}}
   \delta_{\ddot{a}'{}^1b'{}^1}\dots\delta_{\ddot{a}'{}^{k+l-r}b'{}^{k+l-r}}
   \delta_{a^{k+l-r+1}\ddot{a}'^{k+l-r+1}}\dots\delta_{a^{2k}\ddot{a}'{}^{2k}}
   \delta_{b'{}^{k+l-r+1}\ddot{b}{}^{k+l-r+1}}\dots\delta_{b'{}^{2l}\ddot{b}{}^{2l}},
\]
where complete symmetrization is meant in the $a$, $b$, $a'$, and $b'$ indices.   

Now we can calculate the effect of the boost $\mathfrak{b}_p$ on $T$. We have $S\big(\mathfrak{b}(\hat{\bs{n}},\beta)\big)=\exp[\im\beta\bs{n}\bs{\cdot}(-\im\bs{\sigma})]=\mathbbm{1}\cosh\frac{\beta}{2}+\hat{\bs{n}}\bs{\cdot}\bs{\sigma}\sinh\frac{\beta}{2}$, where $\bs{\cdot}$ denotes a sum over spatial indices. So  $S_{ac}(\mathfrak{b}_p)\,\bar{S}_{\ddot{b}\ddot{d}}(\mathfrak{b}_p)\,\delta_{c\ddot{d}}$ is the following matrix:
\[
\left(\begin{matrix}e^\beta&0\\0&e^{-\beta}\end{matrix}\right)=
\left(\begin{smallmatrix}\sfr{p^0+|\bs{p}|}{m}&0\\0&\sfr{m}{p^0+|\bs{p}|}\end{smallmatrix}\right).
\]
So $\mathfrak{b}_p$ transforms $T$ into a tensor whose component $T^0$ labeled by $a^1=\ldots=\dot{a}'{}^1=\ldots=\dot{b}^1=\ldots=b'{}^1=\ldots=0$ is a nonzero constant times $[(p^0+|\bs{p}|)/m]^{2(k+l)}$, regardless of the selected spin $r$. There are other components containing powers of $(p^0+|\bs{p}|)/m$ with lower exponents. 
 
We restore $j_1$ and $j_2$ in place of $k$ and $l$, and finish the discussion of the conditions for the finiteness of the expression in Eq.~\eqref{finiteness}. From the previous calculation we get 
\[\midint_{0}^\infty\sfr{\d m}{m}\midint_{\mathbb{R}^3}\sfr{\d^3\bs{p}}{p^0}(m^2)^{s-1-j_1-j_2}\big(p^0+|\bs{p}|\big)^{2(j_1+j_2)}\,|\hat{f}(p)|^2\, T^0 +\ldots<\infty,\]
where the ellipses indicates terms in which $\frac{1}{m}$ occurs with lower exponents than in the term containing $T^0$. The condition $s>j_1+j_2+1$ is necessary and sufficient for the existence of the integral for any test function $g$.
\end{proof}
Unlike in a positive definite inner product, in an indefinite metric the Lorentz transformation property does not fix the scaling dimension if the field creates physical one-particle states from the vacuum. Indeed, the gradient of a massless free scalar field has scaling dimension $s=2$, whereas the vector potential in the indefinite metric (Gupta-Bleuler) quantization\cite{Weinberg2} of electromagnetic radiation has $s=1$. The antighost field of this model has scaling dimension $s=2$, even though it is a scalar field and creates purely one-particle states from the vacuum. This shows that it is essential for the bounds established by the above theorem that the one-particle component of the states that the field creates from the vacuum is not of ``zero norm.''
\begin{cor*}
Let $\f$ be a field in a local covariant quantum field theory which is also scale invariant, with a nonsingular representation of dilations and Poincar\'e transformations. Assume that the possibly indefinite metric operator commutes with translations. 
\begin{itemize}
\item If $\f$ interpolates between the vacuum and physical one-particle states (one-particle states of ``nonzero norm''), then $\Box\f=0$. 
\item If $\f$ transforms in the $(j_1,j_2)$ Lorentz representation and its scaling dimension is smaller than or equal to $j_1+j_2+1$, then it does not interpolate between states of continuous mass spectrum and the vacuum, and $\Box\f=0$.   
\end{itemize}
\end{cor*}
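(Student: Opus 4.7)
The plan is to combine Theorem \ref{main_theorem} with Theorem \ref{matrix_integral} and Proposition \ref{reehcor}. Under either hypothesis of the Corollary, the first of these will force $\f(f)\vac$ to be supported on the forward light cone $\partial V^+$; on that cone $p^2=0$, so the wave operator annihilates $\f(f)\vac$; and Proposition \ref{reehcor} then promotes this vacuum annihilation into the operator identity $\Box\f=0$.

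First I would reduce both bullets to the single assertion that $\mathcal{P}_\mathfrak{c}\f(f)\vac=0$ for every $f\in\sch(\mink)$. For the second bullet this is the contrapositive of Theorem \ref{main_theorem}(ii) applied to a field of scaling dimension $s\leq j_1+j_2+1$, and it is precisely the non-interpolation claim stated in that bullet. For the first bullet the hypothesis is the premise of Theorem \ref{main_theorem}(i) (applied to each irreducible $(j_1,j_2)$ block of the Lorentz representation if it is not already irreducible), which yields $s\leq j_1+j_2+1$, so the contrapositive of (ii) once more kills the continuous-mass component.

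Next, Theorem \ref{matrix_integral} provides a weakly Borel $\Gamma$ with $(\f(f)\vac)(p)=\hat{f}(p)\Gamma(p)$ on the direct-integral space of Section \ref{Representation}; after the previous step, $\Gamma$ is supported in $\{0\}\cup\sigma_1=\partial V^+$, the $\hil_0$ piece being concentrated at $p=0$ (which by Axiom \ref{vacuum} is spanned by $\vac$) and $\sigma_1$ being the massless shell. Since $\widehat{\Box f}(p)=-p^2\hat{f}(p)$ vanishes identically on $\partial V^+$, Theorem \ref{matrix_integral} applied to the test function $\Box f$ yields $\f(\Box f)\vac=0$ for every $f$.

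Finally I would invoke Proposition \ref{reehcor}. For any bounded open $O$ with nonempty causal complement and any $f\in\sch(\mink)$ supported in $O$, locality (Axiom \ref{locality}) places $\f(\Box f)$ in $\mathcal{P}(O)$, and the proposition then gives $\f(\Box f)=0$ on $D$; by continuity of $\f$ as an operator-valued distribution this extends to arbitrary $f\in\sch(\mink)$, so $\Box\f=0$. The main place to be careful is the passage from the "for some $f$" in Theorem \ref{main_theorem} to the "for every $f$" needed here; the contrapositive handles this cleanly because the scaling dimension $s$ is a property of the field rather than of any particular smearing.
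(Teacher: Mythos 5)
Your proposal is correct and follows essentially the same route as the paper's own (much terser) proof: use Theorem \ref{main_theorem} to rule out a continuous-mass component of $\f(f)\vac$ (via part (i) to bound $s$ in the first bullet, then the contrapositive of part (ii) in both), read off $\f(\Box f)\vac=0$ from the representation $(\f(f)\vac)(p)=\hat f(p)\Gamma(p)$ of Theorem \ref{matrix_integral} with $\Gamma$ supported on $\partial V^+$, and conclude $\Box\f=0$ by Proposition \ref{reehcor}. Your added care about reducible Lorentz representations and about passing from ``for some $f$'' to ``for all $f$'' is sound and merely makes explicit what the paper leaves implicit.
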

\begin{proof} By the theorem, $\f$ does not interpolate between states of continuous mass spectrum and the vacuum. Using the expression given by Theorem \ref{matrix_integral} for $\f(f)\vac$, we can see that $\f(\Box f)\vac=0$, where $\vac$ is the vacuum. Apply Proposition \ref{reehcor}. 
\end{proof}
\section{DISCUSSION}
\label{Discussion}
``What is the simplest [interacting] quantum field theory?'' The answer proposed in [\citeonline{ArkaniHamed}] is $\mathcal{N}=4$ supersymmetric Yang-Mills theory, at least among nongravitational theories. In this section we argue that the theory may also exhibit unexpected simplicity as a local quantum field theory. We shall focus on scale invariant nonconfining Yang-Mills theories with particles charged under the global transformations associated with the gauge symmetry. $\mathcal{N}=4$ SYM theory belongs to the class of such models. Furthermore, we assume that the metric operator commutes with translations.

\subsection*{{\it One-particle states}} 

The scattering theories\cite{Lehmann,Haag,Ruelle,BuchholzFerm,BuchholzBos} developed for field theories in a positive definite inner product are based on the existence of local operators $A$ which interpolate between some one-particle state $\vf$ and the vacuum: $\langle\vf,A\vac\rangle\neq 0$. Such operators will be called interpolating operators. We also say that $\vf$ is connected to the vacuum by $A$. For the one-particle states of the models we are interested in, there are no interpolating local operators on the physical Hilbert space, so the scattering theories based on such operators are inapplicable to the theories we focus on. One of the main motivations for the BRST quantization is to remove the obstruction to local charged fields. In this formalism, there can be local interpolating operators for physical particles (one-particle states of ``nonzero norm''), which may form the basis of a scattering theory. However, scattering amplitudes are infrared divergent in nonconfining gauge theories, so these models do not have an $S$-matrix, and even the existence of one-particle states becomes questionable. For example, on the physical Hilbert space of quantum electrodynamics, there are actually no one-particle states corresponding to an electron or a positron.\cite{Schroer,Frohlich,BuchholzQED} Infrared divergences turn them into infraparticles.\cite{Schroer} Infraparticles are states of continuous mass spectrum, and a definite mass can be assigned to the electron only by the properties of the asymptotic electromagnetic field.\cite{Frohlich,BuchholzQED} 

Photons are neutral, so one can construct their asymptotic fields by Buchholz's methods\cite{BuchholzFerm,BuchholzBos} in the vacuum sector. One may consider appropriate representations of the algebra of local observables to incorporate charges. The asymptotic electromagnetic field can be defined by Haag-Ruelle type limits of some local operators in these representations as well.\cite{BuchholzQED} The momentum distribution of the incoming and outgoing electrons and positrons is fully encoded in the asymptotic electromagnetic field.\cite{Frohlich} In nonconfining Yang-Mills theories, this is not an option, since the basic fields are charged, and there is no field that could play the role of the electromagnetic field in probing the properties of charged particles. Because of this difficulty, it seems to be a reasonable assumption that one-particle states in a nonconfining Yang-Mills theory are singled out by the Poincar\'e representation, as in the vacuum sector of an ordinary quantum field theory in a positive definite metric.     

\subsection*{{\it Scaling dimensions and field equations}}

So let us assume that the basic physical fields in a nonconfining Yang-Mills theory interpolate between physical one-particle states and the vacuum. Then Theorem \ref{main_theorem} puts an upper bound on their scaling dimensions. More specifically, for scalar, spinor, and vector fields, which transform in this order in the $(0,0)$, $\big(\frac{1}{2},0\big)$ or $\big(0,\frac{1}{2}\big)$, and $\big(\frac{1}{2},\frac{1}{2}\big)$ Lorentz representation, the scaling dimension $s$ satisfies $s\leq1$, $s\leq\frac{3}{2}$, and $s\leq2$. For an antisymmetric tensor field, which transforms in $(1,0)$ or $(0,1)$, we obtain $s\leq2$. Canonical dimensions, which are determined by the scale invariance of the classical action, satisfy these inequalities.

The importance of the scaling dimensions obtained in the previous paragraph lies in the Corollary in Section \ref{Final}, which says that such fields create purely one-particle states from the vacuum, and satisfy the wave equation. In particular, the basic physical (not ghost) fields of $\mathcal{N}=4$ SYM theory have this property, and therefore their commutators are $c$-numbers by the Greenberg-Robinson theorem (see Section \ref{Reeh-schlieder}). This is quite unexpected, and it is tempting to take it as an indication that $\mathcal{N}=4$ SYM theory cannot be quantized by the BRST method, at least not under the assumptions we made. However, this possibility cannot be ruled out by the simple tests that we can perform now without delving into the details of specific constructions. 

The assumption on the interpolating property of fields might look strong, and its justification was speculative at best. Actually, the properties of BPS operators of $\mathcal{N}=4$ SYM theory add credence to this hypothesis. Canonical scaling dimensions of fields occurring in the Lagrangian have already been defined. We say that a composite field, which is defined through the product of these fields, has canonical scaling dimension if its scaling dimension is the sum of the canonical scaling dimensions of the fields in the product. The scaling dimensions of the BPS operators are known to be canonical. This implies that the basic fields also have canonical dimensions, barring the implausible scenario in which their dimensions are not canonical, but the scaling dimension of the operator product that defines the composite BPS operators receives an anomalous contribution that turns them into the canonical value.              

All the observable local operators of $\mathcal{N}=4$ SYM theory are composite, so if their anomalous dimensions vanish, they create purely states of continuous mass spectrum from the vacuum if the basic fields have canonical scaling dimensions. To see this, consider observables constructed out of the field strength, the fermion and scalar fields, and the covariant derivatives of these fields. (We assume that the covariant derivative of a dilation covariant field is also covariant under dilations, with the same scaling dimension as that of the ordinary derivative.) Take the example of an observable that arises from the product of two irreducible components of such fields, which transform in $(j_1,k_1)$ and $(j_2,k_2)$. The scaling dimensions of the two components are at least $j_1+k_1+1$ and $j_2+k_2+1$. Under the assumption of vanishing anomalous dimensions, the scaling dimension of their product is not smaller than $j_1+j_2+k_1+k_2+2$. On the other hand, all these observables transform in $(j,k)$, where $j\leq j_1+j_2$ and $k\leq k_1+k_2$, or in the direct sum of such representations. Therefore, if they create states with physical one-particle components, Theorem \ref{main_theorem} puts the upper bound $j_1+j_2+k_1+k_2+1$ on their scaling dimension, which is a contradiction. For theories in which all the one-particle states are charged, we have reached the same conclusion from the neutrality of states in the vacuum sector. The Corollary does not say anything about the equation satisfied by the observable fields.

\section*{{\it Scattering processes}}
In the scattering theory of massive particles in ordinary field theories,\cite{Haag,Ruelle} some almost local operators $A$ are taken so that they create purely one-particle states from the vacuum, and the following formula assigns time dependent operators $A^t$ to them:  
\eq{\label{AtDef}
A^t\coloneqq\midint_{\!\!x^0=t\,}\d^3x\,A(x)\overleftrightarrow{\partial}_{\!\!\!0\,}D_m(x),}
where $A(x)=\un(x)A\un(x)^*$, and $D_m$ is the Pauli-Jordan commutation function, which is defined as the solution to the Klein-Gordon equation of mass $m$ that satisfies the conditions $D_m(0,\bs{x})=0$ and $\partial_0D_m(0,\bs{x})=\delta^{(3)}(\bs{x})$. (See [\citeonline{Bogolubov}] for a textbook discussion.) In the theory of massless particles,\cite{BuchholzFerm,BuchholzBos} the operators $A$ are chosen to be local, so they do not create purely one-particle states from the vacuum $\vac$. They only interpolate between one-particle states and $\vac$. The construction of $A^t$ is more complicated than Eq.~\eqref{AtDef}, but the details do not concern us here. 

Recall that the Hilbert space $\hil$ admits an orthogonal decomposition into $\mathbb{C}\vac$, a one-particle subspace, and a subspace of continuous mass spectrum. Let $\mathcal{P}_1$ be the orthogonal projection to the one-particle subspace. Since the BRST charge $\mathrm{Q}$ is translation invariant, $\mathcal{P}_1\hil'\subset\hil'$, where $\hil'=\mathop{\mathrm{ker}}\mathrm{Q}$ is the physical subspace. Let $\vf_1,\ldots,\vf_n$ be the collection of one-particle states in $\mathcal{P}_1\hil'$. Choose a set of operators $A_1,\ldots,A_n$ such that $\vf_1=\mathcal{P}_1A_1\vac+\chi_1,\dots,\vf_n=\mathcal{P}_1A_n\vac+\chi_n$, where $\chi_1,\dots,\chi_n\in\hil''=\overline{\mathop{\mathrm{im}}\mathrm{Q}}$. The assumption is that the scattering theory of massless particles developed for ordinary field theories carries over to models in an indefinite metric. In more detail, this means first of all that the limits
\eq{\label{PsiLimits}
\Uppsi_{\mathrm{in}}(A_1,\ldots,A_n)\coloneqq\lim_{t\to-\infty}A^t{}_{\!\!\!1}\ldots A^t{}_{\!\!n\,}\vac,\;\;\;\;\;\;\;\;\Uppsi_{\mathrm{out}}(A_1,\ldots,A_n)\coloneqq\lim_{t\to\infty}A^t{}_{\!\!\!1}\ldots A^t{}_{\!\!n\,}\vac}
exist for any choice of $A_1,\dots,A_n$ from an appropriate class of operators, and they are in $\hil'$. We will not specify in what sense the limits should exist. A minimum requirement is that the limit of $(\vf,A^t{}_{\!\!\!1}\ldots A^t{}_{\!\!n\,}\vac)$ exists for any $\vf\in\hil'$. These scalar products determine $\Uppsi_{\mathrm{in/out}}(A_1,\ldots,A_n)$ only up to a vector in $\hil''$, which is enough for the purpose of describing scattering processes of physical particles. The states $\Uppsi_{\mathrm{in/out}}(A_1,\ldots,A_n)$ may depend on the choice of  $A_1,\dots,A_n$. The $n$-particle asymptotic states are linear combinations of $\Uppsi_{\mathrm{in/out}}(A_1,\ldots,A_n)$ and the states $\Uppsi_{\mathrm{in/out}}$ assigned to the same set of operators, with one or more pairs of them omitted. With fixed $\vf_1,\ldots,\vf_n$, these states should differ at most by vectors in $\hil''$ for different choices of $A_1,\dots,A_n$. 

In an ordinary field theory, the operators $A$ can be some Wightman fields smeared with compactly supported functions. In the BRST formalism, the basic fields may not be eligible because their local operators are not gauge invariant in the sense of Eq.~\eqref{GaugeInvBRST}, so they may not have the properties discussed in the previous paragraph. The reconstruction theorem\cite{Yngvason} applicable to an indefinite metric and the results listed in Section \ref{Reeh-schlieder} show that a scale invariant hermitian scalar field $\f$ that satisfies the wave equation is isomorphic to a constant multiple of the Fock free field on the subspace of states that the polynomials of $\f$ create from the vacuum. In an ordinary field theory, this would imply that particles connected to the vacuum by $\f$ do not scatter on each other. This does not follow for the physical particles connected to the vacuum by a basic scalar field $\f$ in the BRST quantization of $\mathcal{N}=4$ SYM theory even though we concluded that $\f$ satisfies the wave equation. The reason is that it may not be possible to substitute $\f(f)$ for $A$ in the previous paragraph since $\f(f)$ do not satisfy Eq.~\eqref{GaugeInvBRST}. As discussed in the Introduction, an operator $A$ that creates a charged element of $\hil'$ from the vacuum is not local relative to $\bs{E}^a$, so $A$ and $A^t$ may have localization properties insufficient for the limits in Eq.~\eqref{PsiLimits} to exist. This may be the origin of infrared divergences of the scattering amplitudes.  

\subsection*{\it{Axioms and assumptions}}

Since the BRST formalism is meant to be only a mathematical structure preliminary to the construction of physical observables, some of its axioms may be weakened as long as they still provide a framework in which a local quantum field theory can ultimately be developed on the physical Hilbert space. For instance, how important is it that the fields are distributions, in particular, that they can be evaluated on any test functions? The assumption that the fields can be smeared with any test function is essential for the analytic properties of their vacuum expectation values, without which most of the foundational theorems in field theory could not be proved. So if we relaxed it, the appeal of the BRST method would be lost. However, such a weakening of the axioms would not make the formulation inadequate, provided that the gauge invariant local fields can still be constructed even if the basic fields are defined only for a restricted class of test functions.

Consider the product $\f(f)\Uppsi(g)$ of two operators of the fields $\f$ and $\Uppsi$. Formally, this is written as the unsmeared $\f(x)\Uppsi(y)$. If this is a series of operators of increasing mass dimension, in which there is a term $C(x,y)O(y)$ such that the dimension of $O$ is the sum of the dimensions of $\f$ and $\Uppsi$, then $O(x)$ may be identified with a constant multiple of the renormalized product $\big[\f(x)\Uppsi(x)\big]_\mathrm{renorm}$, and $\big[\f(x)\Uppsi(x)\big]_\mathrm{renorm}$ is said to have vanishing anomalous dimension. More precisely, this term will appear as a constant times $\big[\smallint\d xf(x)\big] O(g)$ in the expansion of $\f(f)\Uppsi(g)$ in the limit in which the support of $f$ is shrinking to the origin. To see this, note that $C(x,y)$ behaves as $\mathcal{O}(1)$ in the limit $x\to y$ for dimensional reasons. (In order to avoid light cone singularities, the test functions $f$ and $g$ have to be chosen with some care, for example with spacelike separated supports.) The term $\big[\smallint\d xf(x)\big] O(g)$ in the expansion $\f(f)\Uppsi(g)$ can be isolated if $\smallint\d xf(x)$ is not zero, which means that the Fourier transform of $f$ does not vanish at the origin. Recall that we argued that BPS operators must be composite operators of vanishing anomalous dimension, so their constituent fields need to be smearable with such test functions. The proof of Theorem \ref{main_theorem} relies on the condition that this is possible. So we cannot get around the conclusions of this theorem by weakening the distribution property of the fields, because the basic fields could no longer serve the purpose of constructing all the physical observables.

How restrictive is the assumption that the metric operator commutes with translations? One of the simplest examples for pseudo-unitary Poincar\'e representations that do not satisfy this condition is the following. Let $\eta_0$ be a metric on $\mathbb{C}^6$ of signature $(-,-,+,+,+,+)$, and $u\in\mathbb{R}^6$ such that $u\neq 0$ and $\langle u,\eta_0u\rangle=0$, where $\langle\cdot\,,\cdot\rangle$ is the standard scalar product in $\mathbb{C}^6$. The pseudo-orthogonal group that preserves $\eta_0$ is $S\!O(2,4)$, and the stabilizer subgroup of $u$ is the Poincar\'e group. This way we obtain a six dimensional representation $S$ of the Poincar\'e group, and $\big(\un(a,L)\vf\big)(p)=e^{\im a\cdot p}\,S(a,L)\,\vf(L^{-1}p)$ is a pseudo-unitary representation on $L^2(V_m,\mathbb{C}^6,\mu_1)$ equipped with metric $(\eta\vf)(p)=\eta_0\vf(p)$, where $V_m$ is the forward mass shell of mass $m$. Here $\mu_1$ is the usual Lorentz invariant measure on $V_m$. The translation generators of any finite dimensional representation of the Poincar\'e group are nilpotent,\cite{Jacobson,Mack} so the matrices $S(a)$ are polynomials of $a$. Therefore $\un$ satisfies the spectrum condition (Axiom \ref{spectrum}), so it is a physically acceptable representation. However, there is no scalar product such that the corresponding metric operator commutes with translations.

For the sake of argument, let us assume that for a covariant field $\f$, the form of $\f(f)\vac$ is analogous to the expression in Theorem \ref{matrix_integral}:  
\[
\big(\mathcal{P}_1\f(f)\vac\big)(p)=\smallint_M\d x\,\big[e^{\im p\cdot x}f(x)S(x)\big]\,\Gamma(p),
\]
where $\mathcal{P}_1$ is the orthogonal projection to the one-particle subspace on which the Poincar\'e representation is the one specified in the previous paragraph. If $A^t$ is defined by Eq.~\eqref{AtDef}, then $\lim_{t\to\pm\infty}\langle\vf,A^t\vac\rangle$ exist for all $\vf\in\mathcal{P}_1\hil$ only if $S$ acts trivially on $\Gamma$. Otherwise, the asymptotic states cannot be defined by the usual procedure. There have been speculations\cite{Strocchi} that confinement is due to the impossibility of assigning asymptotic states to quarks in the indefinite metric formulation. This example illustrates how this can be the result of a representation in which translations do not commute with the metric. However, it has yet to be decided whether the explanation for confinement indeed lies in this phenomenon. So even in the case of confinement, the physical relevance of such representations is unclear. In nonconfining models, there seems to be no reason why they should be considered.

\subsection*{\it{Outlook}}

We have not demonstrated that if the unobservable basic fields in the BRST quantization of a scale invariant Yang-Mills theory satisfy the wave equation, then either some particles do not collide with each other, or the observable fields obey field equations that would for example result in a factorization of their correlation functions, which is characteristic of a free field theory. More analysis needs to be done in order to decide whether such unobservable fields can constitute the basis of an interacting quantum field theory.

\section*{ACKNOWLEDGMENTS}
I am grateful to Zolt\'an Zimbor\'as for useful comments on the manuscript. I would also like to thank Emil J. Martinec for discussions. This work was supported in part by DOE grant DE-FG02-90ER-40560.

\section*{APPENDIX}
Let $X$ be a topological space with a nonnegative Borel measure $\mu$ on it. For each point $x\in X$, let there be given a Hilbert space $K_x$. Assume that the support of $\mu$ admits a partition into measurable sets $(X_n)_{n=\omega,1,2,\dots}$ such that $K_x=H_\omega=\ell^2$ if $x\in X_\omega$ and $K_x=H_n=\mathbb{C}^n$ if $x\in X_n$, where $n\in\mathbb{N}^+$. Let $\langle\cdot\,,\cdot\rangle_n$ and $\|\cdot\|_n$ be the scalar product and the norm, respectively, in $H_n$. The direct integral of $K_x$ with respect to $\mu$, which is denoted by $\smallint^\oplus_{\,X}\d\mu(x)K_x$, is the equivalence classes of all sequences $\varphi=(\varphi_n)_{n\in\mathbb{N}}$ of functions $\varphi_n:X_n\to H_n$ for which $x\mapsto\langle v_n,\varphi_n(x)\rangle_n$ are measurable on $X_n$ for any sequence of vectors $(v_n)_{n\in\mathbb{N}}$, where $v_n\in H_n$, and $\sum_n\smallint_{X_n}\d\mu(x)\,\|\varphi_n(x)\|_n^2<\infty$. The sequences $\varphi$ and $\psi$ are equivalent if for each $n$ the equality $\varphi_n=\psi_n$ holds almost everywhere. The linear combination $\lambda\varphi+\kappa\psi$ of two sequences $\varphi$ and $\psi$ is $(\lambda\varphi_n+\kappa\psi_n)_{n\in\mathbb{N}}$  with the pointwise linear combination for each member. The scalar product in $\smallint^\oplus_{\,X}\d\mu(x)K_x$ is given by $\langle\varphi,\psi\rangle\coloneqq\sum_n\smallint_{X_n}\d\mu(x)\,\langle\varphi_n(x),\psi_n(x)\rangle_n$, where, with an abuse of notation, $\varphi$ and $\psi$ denotes two equivalence classes on the left hand side, whereas $\varphi_n$ and $\psi_n$ are the members of some sequences in these classes. If $H$ be a separable Hilbert space, then the space of $H$-valued square integrable functions on $X$ is $L^2(X,H,\mu)\coloneqq\smallint^\oplus_{\,X}\d\mu(x)K_x$, where $K_x=H$ for all $x$.   

If $\vf\in\smallint^\oplus_{\,X}\d\mu(x)\mathcal{H}_x$, we use the notation $\vf(x)$ for the vector $\vf_n(x)$, where $\vf_n$ is a fixed sequence from the equivalence class $\psi$ and $x\in X_n$. Any relation involving $\vf(x)$ will be independent of the choice of the representative of the equivalence class, so it is not indicated by the notation what choice has been made.

Let $B$ be a family of operator-valued functions $B_n$. Each member is defined on the components $X_n$ of a partition of $X$ and takes its value in the set $\mathcal{B}(H_n)$ of bounded operators on $H_n$. Such a family of maps can be combined into a single function $B:X\to\cup_n\mathcal{B}(H_n)$ by $B|_{X_n}\coloneqq B_n$. We say that $B$ is weakly Borel if $x\mapsto\langle u,B_n(x)v\rangle_n$ is Borel measurable for all $n$ and $u,v\in H_n$. Given a family $u$ of functions $u_n$ with each member being a map $X_n\to H_n$, $x\mapsto u(x)$ is called weakly Borel if $x\mapsto\langle v,u(x)\rangle_n$ is Borel for all $n$ and $v\in H_n$.

The proof of the following theorem is outlined in [\citeonline{Wightman}].
\begin{prop}\label{UnitaryTranslation} Every continuous unitary representation of the translation group $\trans$ on a separable Hilbert space is unitary equivalent to a representation of the following form:
\[\big(\un(a)\varphi\big)(p)=e^{\mathrm{i}p\cdot a}\varphi(p),\]
where $\varphi\in\smallint^\oplus_{\transd}\d\mu(p)H_p$ with some nonnegative Borel measure $\mu$. A bounded operator $B$ which commutes with all $\un(a)$ is decomposable, that is, can be written as 
\[(B\varphi)(p)=B(p)\varphi(p),\]
where $B(p)$ are bounded operators and $p\mapsto B(p)$ is weakly Borel.
\end{prop}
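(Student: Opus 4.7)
The plan is to reduce the statement to the spectral theorem for a commuting family of self-adjoint operators together with the Hahn-Hellinger multiplicity theory, which together provide the required direct integral model, and then to identify the commutant with the decomposable operators.

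First, by Stone's theorem applied to each one-parameter subgroup $a\mapsto\un(a^\mu e_\mu)$ and the fact that the generators commute on a common invariant domain (since the $\un(a)$ commute among themselves), there exist four commuting self-adjoint operators $P^0,\dots,P^3$ such that $\un(a)=\exp(\im\, a\cdot P)$. Joint spectral theory for commuting self-adjoint operators provides a unique projection-valued measure $E$ on the Borel $\sigma$-algebra of $\minkd$ such that $\un(a)=\smallint_{\minkd}e^{\im a\cdot p}\,\d E(p)$. This is the essential analytic input and is standard; no indefinite-metric issues arise because here $\hil$ is an honest separable Hilbert space.

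Second, I would invoke the Hahn-Hellinger (spectral multiplicity) theorem for the abelian von Neumann algebra $\mathcal{M}$ generated by $E$ on the separable Hilbert space $\hil$. This theorem yields a finite nonnegative Borel measure $\mu$ on $\minkd$ (equivalent to the scalar spectral measure of $E$) together with a measurable partition $\{X_n\}_{n=\omega,1,2,\dots}$ of $\supp\mu$ by multiplicity, and a unitary equivalence under which $\hil$ is identified with the direct integral $\smallint^\oplus_{\minkd}\d\mu(p)\,H_p$ with $H_p=H_n$ whenever $p\in X_n$, and such that each spectral projection $E(F)$ becomes multiplication by the characteristic function $\charf_F$. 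Integrating then gives exactly $\big(\un(a)\vf\big)(p)=e^{\im a\cdot p}\vf(p)$, which is the first assertion. (The measure $\mu$ can be replaced by any equivalent $\sigma$-finite Borel measure without changing the representation up to unitary equivalence, and the spectral condition, if invoked elsewhere in the paper, restricts $\supp\mu$ to $\overline V^+$.)

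For the decomposability of the commutant, let $B$ be a bounded operator with $B\un(a)=\un(a)B$ for all $a\in\trans$. By the functional calculus, $B$ commutes with every spectral projection $E(F)$, hence with the entire abelian von Neumann algebra $\mathcal{M}$ of diagonal operators. The classical theorem on direct integrals (von Neumann's reduction theorem) states precisely that the commutant of the diagonal algebra in $\smallint^\oplus_{\minkd}\d\mu(p)\,H_p$ consists exactly of the decomposable operators, so there is a family of bounded operators $B(p)\in\mathcal{B}(H_p)$ with $\mu\text{-}\mathrm{ess}\sup\|B(p)\|<\infty$ and $(B\vf)(p)=B(p)\vf(p)$; the weak Borel measurability of $p\mapsto B(p)$ is built into the definition of a decomposable operator.

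I expect the main technical obstacle to be the bookkeeping around Hahn-Hellinger and the measurability of the fiberwise action: one must verify that the multiplicity decomposition is compatible with the partition $\{X_n\}$ used in the paper's definition of direct integrals in the Appendix, and that the representative $B(p)$ of a decomposable operator can be chosen so that $p\mapsto\langle u,B(p)v\rangle$ is Borel for all $u,v$ in the appropriate fibers. Both points are handled in standard references (e.g.\ Dixmier), and the reference the paper already cites (Wightman's appendix) outlines exactly this reduction, so the proof can simply be an invocation of these results after setting up the spectral decomposition of $a\mapsto\un(a)$.
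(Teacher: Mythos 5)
Your proposal is correct and follows essentially the same route as the paper, which offers no proof of its own but defers to the outline in Wightman's appendix --- namely the SNAG/Stone spectral decomposition of the translation group, the Hahn--Hellinger multiplicity theory giving the direct integral over $\transd$, and von Neumann's identification of the commutant of the diagonal algebra with the decomposable operators. No gaps; the one delicate point you flag (compatibility of the multiplicity partition with the paper's Appendix conventions and the Borel choice of $p\mapsto B(p)$) is indeed handled by the standard references you name.
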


To see how $\un$ can be put into the form described in Section \ref{Representation}, first we conclude by Proposition \ref{UnitaryTranslation} that $\un$ is unitary equivalent to a representation in which translations are represented as in Eq.~\eqref{uni_general} on $\smallint^\oplus_{\transd}\d\mu(p)H_p$. Furthermore, the metric operator is decomposable. We proceed to see what restrictions the other symmetries impose on $\mu$ and $H_p$. For this purpose, we can repeat Mackey's argument for unitary representations with very little modifications. The analysis is applicable to any separable locally compact Hausdorff group $G$. One simplification specific to the group of dilations and Lorentz transformations is that $\transd$ decomposes into a finite number of orbits, but this property can be replaced by the more general requirement that the semidirect product is regular (see Section 14 in [\citeonline{Mackey3}]). We will sketch the proof. For the details omitted here, the reader is referred to [\citeonline{Mackey1,Mackey2}], more specifically to the proof of Theorem 2 in [\citeonline{Mackey2}]. Some parts of Mackey's argument are reproduced in Section 3 and Appendix IV of [\citeonline{Wightman}], where the general theory of induced representations is applied to the three dimensional Euclidean group.

Let $E\subset\transd$ be a Borel set, and $P(E)$ is the projection defined on $\smallint^\oplus_{\transd}\d\mu(p)H_p$ as the multiplication by the characteristic function of $E$. Since $\un(g)\un(a)\un(g)^{-1}=\un(ga)$, the map $P:E\mapsto P(E)$ is a system of imprimitivity for the representation $g\mapsto U(g)$ in the sense of Section 2 of [\citeonline{Mackey2}], so it obeys the transformation rule $\un(g)P(E)\un(g)^{-1}=P(gE)$. This implies that $\mu$ is quasi-invariant, that is, if $\mu(E)=0$ for a Borel set of $\transd$, then $\mu(gE)=0$ for all $g\in G$. So the Radon-Nikodym derivative in Eq.~\eqref{uni_general} exists. As for dilations and Poincar\'e transformations, there are three orbits in $\transd$ which consists of vectors of nonnegative time component: $\sigma_0=\{\,0\,\}$, $\sigma_1=\partial V^+\setminus\{\,0\,\}$, and $\sigma_\mathfrak{c}=V^+$. The projections $P$ can be written as the sum $P(E)=P_0(E\cap\sigma_0)+P_1(E\cap\sigma_1)+P_\mathfrak{c}(E\cap\sigma_\mathfrak{c})$, where $P_i$ are systems of imprimitivity on the Borel sets of $\sigma_i$, $i=0,1,\mathfrak{c}$. The restriction of $\mu$ to $\sigma_i$ gives a quasi-invariant measure $\mu_i$ on $\sigma_i$. The integration measures in Eq.~\eqref{norms} are such measures, where we chose a specific parametrization of $\sigma_i$. Let $U_i$ be the restriction of $U$ to the orthogonal subspaces $\hil_i\coloneqq P(\sigma_i)\hil$. We proceed to describe $U_i$ in more detail. For notational simplicity, the label $i$ will be dropped in the rest of the argument. So $\sigma$ can be any of the orbits $\sigma_i$, and the same convention applies to $P$, $\hil$, $U$, and $\mu$, as well as any other object associated with them.            

The action of $G$ on the orbit $\sigma$ is ergodic [\citeonline{Mackey2}], so the set of all $P(E)$
is a uniformly $n$-dimensional Abelian ring of projections [\citeonline{Mackey1}], where $n$ can also be countably infinite. So $H_p$ are the same Hilbert space $H$ for all $p$ in $\sigma$. The scalar product on $H$ will be denoted by $\langle\,,\rangle$. Choose a quasi-invariant measure $\mu$ on $\sigma$, and define the operator $W$ on $\hil=\smallint^\oplus_\sigma\d\mu(p)H_p=L^2(\sigma,H,\mu)$ by
\[(W(g)\varphi)(p)=\sqrt{\sfr{\d\mu(g^{-1}p)}{\d\mu(p)}}\,\varphi(g^{-1}p).\] 
All nontrivial quasi-invariant Borel measures on a homogeneous space of a locally compact group are equivalent each to other. The specific choice of $\mu$ is inessential because if $\mu$ and $\nu$ are two equivalent measures, then the map
\[L^2(\sigma,H,\mu)\to L^2(\sigma,H,\nu),\;\;\;\;\varphi\mapsto\sqrt{\sfr{\d\mu}{\d\nu}}\varphi\] 
is a unitary equivalence that preserves the form $e^{\im a\cdot p}\varphi(p)$ of the representation of translations. There is a measure on $\sigma_\mathfrak{c}$ which is even invariant under dilations and Lorentz transformations, but we specified a quasi-invariant one. This is only for convenience so that the action of $\un(\lambda)$ in Eq.~\eqref{MultRule} results in the same factor of $\lambda$ on both components $\varphi_1$ and $\varphi_\mathfrak{c}$.

Since the continuous operator $\mathcal{Q}(g)=\un(g)W(g)^{-1}$ commutes with $U(a)$, Proposition \ref{UnitaryTranslation} implies that for any $\varphi\in\hil$ 
\[(\mathcal{Q}(g)\varphi)(p)=Q(p,g)\varphi(p),\]
where $Q(p,g)$ are continuous operators on $H$ such that $p\mapsto Q(p,g)$ is weakly Borel. The multiplication rule $\un(g)\un(h)=\un(gh)$ implies that for all $g$, the $\mu$-measure of the set of points $p$ for which $\langle u,[Q(p,g)Q(g^{-1}p,h)-Q(p,gh)]v\rangle_H$ does not vanish is zero for any $u,v\in H$. The sets of momenta $p$ at which this equality fails to hold may vary with $u$ and $v$. If $A$ is a continuous operator on $H$, then $A=0$ if $\langle u_n,Au_m\rangle_H=0$ for all $n$ and $m$, where $u_n$ are the elements of a complete orthogonal system, which is countable by the separability of $H$. Since the measure of countably many zero measure sets is of zero measure, the operator equation \eqref{MultRuleGen} is satisfied for almost all $p$. Equation \eqref{isom0} follows from the pseudo-unitarity of $\un$ by a similar argument.

\providecommand{\href}[2]{#2}\begingroup\raggedright\endgroup
\end{document}